\documentclass[dvipsnames,11pt]{article}
\usepackage{amsmath,amssymb,amsthm,color,bm,mathrsfs,extarrows,tikz,graphicx,mathtools,enumitem,fancybox,makecell,txfonts}
\usepackage[square,sort,comma,numbers]{natbib}
\usepackage{subcaption}
\usepackage[ruled]{algorithm2e}
\usepackage[margin=1.in]{geometry}
\usepackage{xcolor,bbm}
\usepackage[pagebackref]{hyperref}
\usepackage{qcircuit-latest}

\allowdisplaybreaks
\usetikzlibrary{decorations.pathreplacing,decorations.markings,decorations.pathmorphing,decorations.shapes,arrows.meta,positioning,patterns}

\usepackage[capitalise,nameinlink]{cleveref}
\Crefname{lemma}{Lemma}{Lemmas}
\Crefname{fact}{Fact}{Facts}
\Crefname{theorem}{Theorem}{Theorems}
\Crefname{corollary}{Corollary}{Corollaries}
\Crefname{claim}{Claim}{Claims}
\Crefname{example}{Example}{Examples}
\Crefname{problem}{Problem}{Problems}
\Crefname{definition}{Definition}{Definitions}
\Crefname{notation}{Notation}{Notations}
\Crefname{assumption}{Assumption}{Assumptions}
\Crefname{subsection}{Subsection}{Subsections}
\Crefname{section}{Section}{Sections}
\Crefformat{equation}{(#2#1#3)}

\newtheorem{theorem}{Theorem}[section]
\newtheorem*{theorem*}{Theorem}

\newtheorem{proposition}[theorem]{Proposition}
\newtheorem*{proposition*}{Proposition}
\newtheorem{lemma}[theorem]{Lemma}
\newtheorem*{lemma*}{Lemma}
\newtheorem{corollary}[theorem]{Corollary}
\newtheorem*{corollary*}{Corollary}
\newtheorem*{conjecture*}{Conjecture}

\newtheorem*{fact*}{Fact}

\newtheorem*{exercise*}{Exercise}

\newtheorem*{hypothesis*}{Hypothesis}

\theoremstyle{definition}
\newtheorem{definition}[theorem]{Definition}

\newtheorem{problem}[theorem]{Problem}

\newtheorem{exercise-easy}[theorem]{Exercise}
\newtheorem{exercise-med}[theorem]{Exercise}
\newtheorem{exercise-hard}[theorem]{Exercise$^\star$}
\newtheorem{claim}[theorem]{Claim}
\newtheorem*{claim*}{Claim}

\newtheorem*{remark*}{Remark}

\newtheorem*{observation*}{Observation}

\DeclareMathAccent{\widehat}{\mathord}{largesymbols}{"62}

\DeclareMathOperator{\E}{\mathbb E}

\DeclareMathOperator{\rank}{rank}

\DeclareMathOperator{\poly}{poly}

\newcommand{\vabs}[1]{\left\| #1 \right\|}
\newcommand{\pbra}[1]{\left( #1 \right)}
\newcommand{\cbra}[1]{\left\{ #1 \right\}}
\newcommand{\sbra}[1]{\left[ #1 \right]}

\newcommand{\ceilbra}[1]{\left\lceil #1 \right\rceil}
\newcommand{\floorbra}[1]{\left\lfloor #1 \right\rfloor}
\newcommand{\bmat}[1]{\begin{bmatrix} #1 \end{bmatrix}}
\newcommand{\bin}{\{0,1\}}
\newcommand{\GL}{\mathrm{GL}}
\renewcommand{\P}{\mathcal{P}}
\newcommand{\NP}{\mathcal{NP}}
\newcommand{\rBSC}{r\mathtt{BSC}}
\newcommand{\GCM}{\mathtt{GCM}}
\newcommand{\LSM}{\mathtt{LSM}}

\newcommand{\Cbb}{\mathbb{C}}
\newcommand{\Fbb}{\mathbb{F}}

\newcommand{\Ccal}{\mathcal{C}}
\newcommand{\Ecal}{\mathcal{E}}

\newcommand{\Ical}{\mathcal{I}}
\newcommand{\Scal}{\mathcal{S}}
\newcommand{\Tcal}{\mathcal{T}}
\newcommand{\Gscr}{\mathscr{G}}
\newcommand{\Lscr}{\mathscr{L}}
\newcommand{\Amat}{\bm{\mathrm A}}
\newcommand{\Bmat}{\bm{\mathrm B}}
\newcommand{\Cmat}{\bm{\mathrm C}}

\newcommand{\Lmat}{\bm{\mathrm L}}
\newcommand{\Pmat}{\bm{\mathrm P}}
\newcommand{\Rmat}{\bm{\mathrm R}}
\newcommand{\Tmat}{\bm{\mathrm T}}
\newcommand{\Mmat}{\bm{\mathrm M}}
\newcommand{\Emat}{\bm{\mathrm E}}
\newcommand{\Fmat}{\bm{\mathrm F}}
\newcommand{\Imat}{\bm{\mathrm I}}
\newcommand{\Umat}{\bm{\mathrm U}}
\newcommand{\Vmat}{\bm{\mathrm V}}

\newcommand{\Xmat}{\bm{\mathrm X}}
\newcommand{\Ymat}{\bm{\mathrm Y}}

\newcommand{\onemat}{\bm{\mathrm 1}}

\newcommand{\Rsfmat}{\bm{\mathsf R}}

\newcommand{\ket}[1]{\left| #1 \right\rangle}

\title{Optimal Space-Depth Trade-Off of CNOT Circuits in Quantum Logic Synthesis}

\author{
Jiaqing Jiang\thanks{California Institute of Technology, United States. Email: \texttt{jiaqingjiang95@gmail.com}}
\and
Xiaoming Sun\thanks{Institute of Computing Technology, Chinese Academy of Science, China. University of Chinese Academy of Sciences, China. Email: \texttt{\{sunxiaoming, zhangjialin\}@ict.ac.cn}}
\and
Shang-Hua Teng\thanks{University of Southern California, United States. Email: \texttt{shanghua@usc.edu}}
\and
Bujiao Wu\thanks{Peking University, China. Email: \texttt{bujiaowu@gmail.com}}
\and
Kewen Wu\thanks{University of California, Berkeley, United States. Email: \texttt{shlw\_kevin@hotmail.com}}
\and
Jialin Zhang\footnotemark[2]
}

\date{}

\begin{document}
\maketitle 

\begin{abstract}
Decoherence---in the current physical implementations of quantum computers---makes {\em depth reduction} a vital 
task in quantum-circuit design. 
Moore and Nilsson (SIAM Journal of Computing, 2001) demonstrated that additional qubits--known as {\em ancillae}---can be used to provide an extended {\em space} to {\em parallelize} quantum circuits.
Specifically, they proved that, with $O(n^2)$ ancillae, any $n$-qubit CNOT circuit can be transformed into an equivalent one of $O(\log n)$ depth.
However, the near-term quantum technologies can only support a limited amount of qubits, making {\em space-depth trade-off} a fundamental research subject for quantum-circuit synthesis.
 
In this work, we establish an asymptotically optimal space-depth trade-off for CNOT circuits.
We prove that any $n$-qubit CNOT circuit can be parallelized to $O\left(\max\left\{\log n, \frac{n^2}{(n+m)\log (n+m)}\right\}\right)$ depth with $m$ ancillae.
This bound is tight even if the task is expanded from exact synthesis to the approximation of CNOT circuits with arbitrary two-qubit quantum gates.
Our result can be extended to stabilizer circuits via the reduction by Aaronson and Gottesman (Physical Review A, 2004).
Furthermore, we provide hardness evidence for optimizing CNOT circuits in terms of size or depth.

Our result has improved upon two previous papers that motivated our work.
\begin{itemize}
\item {\em Moore-Nilsson's construction (aforementioned) for $O(\log n)$-depth CNOT circuit synthesis}: 
We have reduced their need for ancillae by a factor of $\log^2 n$ by showing that $m= O(n^2/\log^2 n)$ additional qubits---which is asymptotically optimal--- suffice to build equivalent $O(\log n)$-depth $O(n^2/\log n)$-size CNOT circuits.
\item {\em Patel-Markov-Hayes's construction (Quantum Information \& Computation 2008) for $m = 0$}:
We have reduced their depth by a factor of $n$ and achieved the asymptotically optimal bound of $O(n/\log n)$.
\end{itemize}
\end{abstract}

\section{Introduction}
\label{sec:introduction}

{\em Quantum-circuit synthesis} is a fundamental task in quantum computing.
Given any $n$-qubit unitary operator, it implements the operator as a sequence of low-level gates, aiming to optimize the {\em size} and/or {\em depth} of the designed circuit~\cite{barenco1995elementary,divincenzo1995two}.
During the last quarter century, quantum-synthesis algorithms have been developed to achieve asymptotically optimal size~\cite{ShendeBM06,vartiainen2004efficient,knill1995approximation,shende2004smaller}.

Depth reduction---a vital step for combating decoherence in the current physical implementations of quantum computing systems---has been a challenging problem.
In order to significantly shorten the critical-paths, synthesis methods commonly use ancillae---i.e., additional qubits---to enhance {\em parallelism}. 
For example, with sufficient ancillae, the Quantum Fourier Transform can be approximated by an $O(\log n+\log\log(1/\epsilon))$-depth circuit~\cite{cleve2000fast}, and the stabilizer circuits can be parallelized to $O(\log n)$ depth~\cite{moore2001parallel}. 
Depth optimizations of Clifford+T circuits are presented in \cite{selinger2013quantum,amy2014polynomial}.

However, the near-term quantum devices only have a small number of qubits~\cite{preskill2018quantum}, fundamentally 
limiting the amount of ancillae,
needed as ``auxiliary logic space'' for depth optimization.
This practical concern gives rise to the following basic {\em space-depth trade-off problem} in quantum-circuit synthesis: 
\begin{center}
{\em What is the best achievable depth for each given number of ancillae?}
\end{center}

\subsection{CNOT-Circuit Optimization}

Since controlled-NOT gate (CNOT) and single-qubit operations form a {\em universal set} for quantum computing~\cite{barenco1995elementary,divincenzo1995two}, CNOT-circuit optimization---addressing both circuit size and circuit depth---has been extensively studied in quantum-logic synthesis.
Our work also focuses on CNOT circuits.

For size optimization, Patel, Markov, and Hayes~\cite{patel2008optimal} proved that each $n$-qubit CNOT circuit can be synthesized with $O(n^2/\log n)$ CNOT gates, and this bound is {\em asymptotically tight}.
When topological constraints---i.e., limited two-qubit connectivity among addressable qubits---exist, synthesis algorithms have been designed by Kissinger-de Griend \cite{kissinger2020cnot} and Nash-Gheorghiu-Mosca \cite{nash2020quantum} to build circuits of size $O(n^2)$.

For depth optimization, to our knowledge, $O(n^2/\log n)$ (following Patel-Markov-Hayes's size bound~\cite{patel2008optimal}) was also the best general upper bound known for the case without ancillae.
Moore and Nilsson \cite{moore2001parallel} 
effectively utilized ancillae for depth reduction.
They proved that given $O(n^2)$ ancillae, any $n$-qubit CNOT circuit can be parallelized into $O(\log n)$ depth.
Results of CNOT-circuit optimization can be extended to stabilizer circuits.
In \cite{aaronson2004improved}, Aaronson and Gottesman established a strong connection between these two families:
Stabilizer circuits have a canonical form of $11$ blocks, in which each block consists of only one type of gates from gate set $\{\text{CNOT, Phase, Hadamard}\}$.
Since each block of Phase gates or Hadamard gates has depth 1 and size at most $n$, one can transfer CNOT-circuit optimization techniques to stabilizer circuits.

\subsection{Our Main Results}
In this paper, we establish an asymptotically optimal space-depth trade-off for quantum-logic synthesis. 
The heart of our results is the following theorem, characterizing 
the capacity of ancillae in parallelizing CNOT-circuits.

\begin{theorem}[Optimal Quantum Space-Depth Trade-off]\label{thm:upperbound}
For any integer $m\geq 0$, any $n$-qubit CNOT circuit can be transformed, using $m$ ancillae, into an equivalent CNOT circuit with depth
$$
O\pbra{\max\cbra{\log n, \frac{n^2}{(n+m)\log (n+m)}}}.
$$
The synthesis algorithm in the stated construction takes $\widetilde O\pbra{n^\omega}$ time, where $\omega$ is the matrix multiplication exponent~\cite{AlmanW21}.
\end{theorem} 

\Cref{thm:upperbound} can be readily extended to stabilizer circuits, using a reduction designed by Aaronson and Gottesman \cite{aaronson2004improved}. 
Mathematically, 
for any stabilizer circuit, 
there exists an equivalent circuit 
consisting of the following $11$ blocks sequence H-C-P-C-P-C-H-P-C-P-C, where 
H only contains Hadamard gates, 
C only contains CNOT gates, and
and P only contains Phase gates.
Since Hadamard gate or phase gate is single-qubit gate that can be merged, one block of them has at most depth 1 and size $n$. 
Therefore, for depth reduction, it suffices to optimize the CNOT blocks. 
Combining \Cref{thm:upperbound}, we have the following corollary.

\begin{corollary}[Depth-Optimization of Stabilizer Circuits]
For any integer $m\geq 0$, any $n$-qubit stabilizer circuit can be parallelized, with $m$ ancillae, to depth
$$O\pbra{\max\cbra{\log n, \frac{n^2}{(n+m)\log (n+m)}}}.$$
\end{corollary}

Inspired by the two prior landmark results on CNOT-circuit synthesis,
our result in \Cref{thm:upperbound} 
also improves upon both of these earlier constructions:
\begin{itemize}
\item {\bf Depth Optimization with Sufficient Ancillae}: By parallelizing any CNOT circuit into an equivalent one with $O(\log n)$ depth and $O(n^2/\log n)$ size using only $m=O(n^2/\log^2 n)$ ancillae, we reduce the number of ancillae needed by Moore and Nilsson \cite{moore2001parallel} by a factor of $\log^2 n$. 
\item {\bf Depth Optimization without Ancillae}: By achieving the asymptotically optimal depth bound of $O(n/\log n)$ in parallel CNOT-circuit synthesis without any ancillae---i.e., for the case $m = 0$---we reduce the depth implied in the work of Patel, Markov, and Hayes \cite{patel2008optimal} by a factor of $n$.
\end{itemize} 
 
These improvements are the best possible.
We prove---by a counting argument---that our space-depth trade-off is asymptotically tight. 
In fact, the optimality of \Cref{thm:upperbound} is established in a more general setting as stated in \Cref{thm:lowerbound} below.
We show that, even if one is allowed to use arbitrary two-qubit quantum gates---rather than CNOT gates only---to approximately implement given CNOT circuits, the depth of our synthesized circuits still meet the achievable 
lower bound (in the the worst-case). 

\begin{theorem}[Optimality of Quantum-Circuit Depth]\label{thm:lowerbound}
Let $\epsilon<\sqrt2/2$ be a non-negative constant.
For $1-o(1)$ fraction of $n$-qubit CNOT circuits, 
any $\epsilon$-approximate\footnote{In essence, an {\em $\epsilon$-approximate circuit} outputs a quantum state close to the CNOT circuit's output under $\ell_2$ distance.} $n$-qubit $m$-ancilla quantum circuit has depth
$$
\Omega\pbra{\max\cbra{\log n, \frac{n^2}{(n+m)\log (n+m)}}}.
$$
\end{theorem}

Besides the depth, for $m=O(n^2/\log^2 n)$, our construction has size $O(n^2/\log n)$. 
Techniques in \cite{andren2007complexity,christofides2014asymptotic}
can be generalized to show that such $n$-qubit $m$-ancilla circuit must have size $\Omega(n^2/\log n)$. 
Thus our synthesis algorithm is also asymptotically optimal in circuit size.

\subsection{Technical Highlights}

Mathematically, our synthesis method for \Cref{thm:upperbound} is based on carefully-designed Gaussian eliminations.
As observed in \cite{patel2008optimal}, any $n$-qubit CNOT circuit can be represented by an invertible matrix $\Mmat\in\Fbb_2^{n\times n}$, and the synthesis of CNOT circuit is equivalent to transform $\Mmat$ to identity matrix by Gaussian eliminations. 
Our optimal synthesis, aiming to reduce the depth of CNOT circuits, is the product of \emph{parallel Gaussian eliminations}. 
We minimize the number of parallel Gaussian eliminations using the following two techniques:
\begin{itemize}
\item {\bf Elimination Inspired by Oblivious Routing}: 
For the case without any ancilla, we first establish that if the structure of the matrix is {\em near random}, then it is amenable to effective parallel Gaussian elimination.
We then prepare the input circuit by using a popular idea from oblivious routing~\cite{KaklamanisKT91} to ensure the existence of a close-to-random structure.
This randomization step is then derandomized by a standard approach, but with somewhat subtle conditional expectations.
\item {\bf Elimination Inspired by the Method of Four Russians}: 
For the case with non-zero ancillae, we adopt the idea from {\em the Method of Four Russians.} 
In \cite{patel2008optimal}, Patel \emph{et al.} used the Four Russians' algorithm to eliminate $\log n$ columns, namely $n\log n$ elements, by $O(n)$ Gaussian eliminations.
In our work, we deal with $\log^2 n$ columns, namely $n\log^2 n$ elements, by $O(\log n)$ parallel Gaussian eliminations. 
This is done by preparing the additive basis of Boolean vectors \cite{Lovett17} and properly balancing the trade-off between resource and cost.
\end{itemize}

Both our ancilla-free and ancilla-based synthesis algorithms for CNOT circuits rely on the 1-factorization of almost regular bipartite graph \cite{ColeOS01,Alon03}---a direct application of Hall's marriage theorem---to get an ideal ordering. 
We show that both algorithms run in time 
$\widetilde{O}(n^\omega)$, where $\omega$ is the universal constant for matrix multiplication \cite{AlmanW21}.

Our results can be directly extended to matrix decomposition over finite fields. 
More precisely, we show that any $\Mmat\in\GL(n,2)$ can be decomposed to $O(n/\log n)$ parallel row-elimination matrices.
This parallel matrix-decomposition method can be generalized to finite field $\Fbb_q$ for any constant $q$. 

\begin{theorem}[Parallel Matrix Decomposition]\label{thm:FfieldParallel}
For any constant $q$ and $\Mmat\in\GL(n,q)$, 
$\Mmat$ can be transformed to the identity matrix 
by $O(n/\log n)$ parallel Gaussian eliminations.
\end{theorem}

Our construction also indicates that there might be a parallel Gaussian elimination algorithm for solving linear equations over $\Fbb_q$ by $O(n/\log n)$ parallel row-elimination matrices with $O(n/\log n)$ parallel time.
Some related work can be found in \cite{faugere2010parallel,rupp2006parallel,robert1989optimal,cosnard1988parallel}.
We leave this as an open problem for future research.

Note that our space-depth trade-off for quantum-synthesis is optimal only in the asymptotic sense.
A fundamental optimization problem, related to our results for the CNOT-circuit depth reduction, is the construction of an equivalent circuit with optimal depth for any \emph{given} CNOT circuit and amount of ancillae.
Specifically, given any matrix $\Mmat\in\GL(n,2)$ and a pair of integers $k> 0$ an $m \geq 0$, the problem is to determine whether there exists an $n$-qubit $m$-ancilla CNOT circuit for $\Mmat$ with depth at most $k$.
This decision problem is an analogue of the {\sc Minimum Circuit Size} Problem ($\tt MCSP$)---the famous problem that is unlikely to be proven in $\P$ or $\NP$-complete by natural proofs~\cite{kabanets2000circuit,razborov1997natural}. 
In this paper, we provide, what we consider to be, a relevant hardness evidence by proving hardness results for optimizing CNOT circuits in two slightly different scenarios.
\begin{itemize}
\item 
In the first scenario, the objective is to optimize, with ancillae, the depth of a CNOT circuit under certain topological constraints \cite{devitt2016performing,divincenzo2000physical}.
\item 
In the second scenario, the goal is to optimize, with ancillae, a sub-circuit of a CNOT circuit.
\end{itemize}
Below, we give a high-level summary of the inapproximability results in \Cref{thm:complexityTotal}. The formal statement can be found in \Cref{sec:hardnessresult}. 

\begin{theorem}[Inapproximability - Informal]\label{thm:complexityTotal}
It is $\NP$-hard to approximate the solution of the following problems 
 within any constant factor:
\begin{itemize}
\item {\sc Global Constrained Minimization} ($\GCM$): 
Given an $n$-qubit CNOT circuit, an integer $m$, and topological constraints on the $n+m$ qubits, output an equivalent $n$-qubit $m$-ancilla CNOT circuit of minimum size or depth.
\item {\sc Local Size Minimization} ($\LSM$): 
Given an $n$-qubit CNOT circuit, an integer $m$, and a specific part of the circuit, output an equivalent $n$-qubit $m$-ancilla CNOT circuit which optimizes the size of the specified sub-circuit.
\end{itemize}
\end{theorem}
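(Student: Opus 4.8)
The plan is to derive both inapproximability statements by gap-preserving reductions from minimization problems that are themselves $\NP$-hard to approximate within \emph{any} constant factor --- principally \textsc{Minimum Set Cover} (equivalently \textsc{Minimum Dominating Set}), whose optimum is $\NP$-hard to approximate within $(1-o(1))\ln N$ on size-$N$ instances, and \textsc{Minimum Graph Coloring}, $\NP$-hard within $N^{1-\epsilon}$. I work in the matrix picture of \thm{thm:upperbound}: an $(n+m)$-wire CNOT circuit is an element of $\GL(n+m,2)$; a topological constraint graph $G$ allows an elementary row-addition on a pair $\{i,j\}$ only if $\{i,j\}\in E(G)$; and ``re-synthesizing a designated sub-circuit'' means re-factoring one factor $\Amat$ of a \emph{fixed} product $\Mmat_{\mathrm{post}}\,\Amat\,\Mmat_{\mathrm{pre}}$, the other two being frozen. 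In each case I build an instance with an \emph{honest} solution of cost $c\cdot\mathrm{OPT}+o(\mathrm{OPT})$ for a fixed scale $c$, where $\mathrm{OPT}$ is the combinatorial optimum, and then argue that nothing does essentially better; since the combinatorial gap exceeds every constant, so does the circuit-optimization gap.

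For the \emph{size} version of $\GCM$ I encode a set-cover instance as a ``broadcast gadget.'' The input CNOT circuit computes the linear map $\Imat+\sum_{u}\Emat_{u,s}$ that XORs one distinguished wire $s$ into every \emph{element wire} $u$ --- exactly the ``connect $s$ to all element wires by a Steiner tree'' task familiar from constrained CNOT synthesis. The topology $G$ is a tree of gadgets, one per set, arranged so that the \emph{only} routes from $s$ to the element wires pass through these gadgets, each gadget padded by a length-$\ell$ path so that its one-time activation costs $\Theta(\ell)$ gates, dwarfing the $O(1)$ bookkeeping; ancillae are allowed but serve only as scratch/Steiner points inside gadgets, which does not change asymptotics. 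A set cover of size $t$ then gives a circuit of size $\Theta(t\ell)$ (route $s$ through those gadgets, do the XORs, uncompute), and conversely any circuit realizing the target on $G$ must transport the content of $s$ to \emph{all} element wires and restore the ancillae, hence must activate gadgets whose indices cover the universe; choosing $\ell$ large relative to the $O(\log(n+m))$-type overheads inherited from \thm{thm:upperbound} makes the honest size $(c\ell)\cdot\mathrm{OPT}+o(\ell\cdot\mathrm{OPT})$. For the \emph{depth} version, plain parallelism defeats this (route-disjoint gadgets run at once), so I instead make activations mutually un-parallelizable --- either route them through a shared bottleneck edge, still reducing from \textsc{Set Cover}, or let the gadget conflicts form an arbitrary graph $H$ and reduce from \textsc{Minimum Graph Coloring}, so that minimum depth becomes $\Theta(\ell\cdot\chi(H))$. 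For $\LSM$, the frozen blocks $\Mmat_{\mathrm{pre}},\Mmat_{\mathrm{post}}$ \emph{pin} the interface of the editable window: I choose them so the window must realize a broadcast matrix on the main wires while the ancillae it is handed carry prescribed garbage that $\Mmat_{\mathrm{post}}$ later consumes, so the window cannot outsource work, making its minimum gate count again (up to the fixed scale) a minimum set cover.

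The main obstacle is soundness: proving a \emph{small} circuit on $G$ (resp.\ a small editable window) truly \emph{cannot} beat the honest cost. I would use a potential/flow argument over $\Fbb_2^{n+m}$: track, layer by layer, the set of wires whose content functionally depends on $s$ --- equivalently the affine subspace reached so far --- observe that this ``support'' grows across an edge only when a gate actually uses that edge, and charge each growth to a gadget activation; since the target forces the support to eventually include every element wire and $G$ funnels every such growth through the set-gadgets, the number of distinct activated gadgets is at least the set-cover optimum, each costing $\Omega(\ell)$ in size. The genuinely delicate point is the \emph{depth} lower bound: one must show activations cost $\Omega(\ell)$ depth even after pipelining --- information advances one edge per layer, so with enough un-parallelizable traversals the depth really scales like $\ell\cdot\mathrm{OPT}$ and not like a weaker quantity such as $\ell+\mathrm{OPT}$ --- and matching this against the honest construction is the most sensitive estimate; the \textsc{Graph Coloring} route side-steps it at the cost of a different source problem. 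Polynomial-time computability of the reductions is immediate, since every gadget has size $\poly(N,\ell)$.
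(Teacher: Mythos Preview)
Your $\GCM$ reduction is in the right spirit --- the paper also reduces from bounded Set Cover --- but your gadgetry is substantially heavier than what is needed, and this over-engineering is what creates the ``genuinely delicate'' depth issue you flag. The paper uses no length-$\ell$ paths or Steiner routing at all. It takes $p$ element qubits, $q$ set-ancillae, and $c$ target qubits, with topology allowing only element$\to$set edges $(i,j{+}n)$ when $i\in W_j$ and set$\to$target edges $(j{+}n,p{+}t)$; the circuit to be synthesized simply XORs $x_1\oplus\cdots\oplus x_p$ into each of the $c$ targets. Soundness is then a one-line counting argument: every gate into a target has some set-ancilla as control, and the sets used must cover $[p]$. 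For \emph{size} one takes $c=p$ so the $kc$ set$\to$target gates dominate the $2p$ element$\to$set gates; for \emph{depth} one takes $c=1$, so all set$\to$target gates share a single target wire and are forced to be sequential, giving depth $\geq k$ immediately --- no pipelining analysis, no bottleneck-edge or Graph-Coloring detour. Your ``shared bottleneck edge'' idea is morally this, but you never exploit that a shared \emph{target} already serializes everything without any $\ell$-padding, which is why you end up worrying about $\ell\cdot\mathrm{OPT}$ versus $\ell+\mathrm{OPT}$.

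Your $\LSM$ plan, however, has a real gap. With no topological constraints the editable window can realize your ``broadcast matrix'' $\Imat+\sum_u \Emat_{u,s}$ directly on the main wires in $p$ gates, completely ignoring whatever ``prescribed garbage'' the ancillae carry --- nothing in your description ties the window's gate count to the cover number. The paper's mechanism is quite different and specific: the task is to add $x_1\oplus\cdots\oplus x_p$ into a \emph{single} target wire, and $\Mmat_{\mathrm{pre}}$ loads the ancillae with \emph{all} partial sums $\bigoplus_{t\in U}x_t$ for every $U\subseteq W_i$ and every $i$ (this is where the exponential-in-$r$ blowup of $r$-bounded Set Cover is used). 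The lower bound then rests on the elementary fact that after $w$ CNOT gates the target row of the window's matrix has at most $w{+}1$ nonzero entries; since each such entry corresponds to either a single $x_i$ or a single ancilla $\bigoplus_{t\in U}x_t$ with $U\subseteq W_j$, these $\leq w$ contributions must jointly cover $[p]$, forcing $w\geq k$. The upper bound is just $k$ gates, one per set in an optimal cover, picking the ancilla holding exactly the needed residual subset. Your ``pinning'' intuition is right, but the concrete device --- pre-computing every subset sum so that one CNOT equals one set choice --- is the missing idea.
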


At a high level, {\sc Global Constrained Minimization} ($\GCM$) aims to find the optimal size or depth under certain topological constraints $S$, where CNOT gate with control $i$ and target $j$ is {\em legal} iff $(i,j)\in S$. 
Such a restriction is common in existing quantum devices~\cite{devitt2016performing,divincenzo2000physical}, and 
CNOT-circuit optimization on such devices has been discussed~\cite{nash2020quantum,kissinger2020cnot}. 
The {\sc Local Size Minimization} ($\LSM$) aims to optimize the size of a selected part of the circuit while leaving other parts unchanged. 
Hardness for general quantum circuit optimization over topological constraints can be seen in \cite{herr2017optimization,childs2019circuit}.

\subsection{Paper organization}
In \Cref{sec:preliminary}, we review basic notations and definitions needed for the paper. 
In \Cref{sec:noancilla}, we focus on CNOT-circuit synthesis without any ancilla. 
We also prove that any tree-based CNOT circuit can be parallelized to depth $O(\log^2 n)$ without any ancilla.
In \Cref{sec:withancilla}, we present our ancilla-based synthesis algorithm and complete the proof of \Cref{thm:upperbound}. 
In \Cref{sec:lowernpc}, we give the tight lower bound showing that our space-depth trade-off is asymptotically optimal.
We also present the related hardness results for the quantum-synthesis optimization problems discussed above. 
Finally in \Cref{sec:conclusion}, we summarize the paper and discuss future research directions.

\section{Preliminary}\label{sec:preliminary}

\paragraph*{Basic Notation.} 
We use $[n]$ to denote $\cbra{1,\ldots,n}$, 
$\Cbb$ to denote the complex domain, 
$\Fbb_q$ to denote the finite field with $q$ elements;
for $q=2$, we use $\oplus$ to denote the exclusive-or, i.e., the addition under $\Fbb_2$.

We use $\GL(n,q)$ to denote the set of $n\times n$ invertible matrices with entries from $\Fbb_q$, superscript $\top$ to denote the transpose of matrix or vector,
$\Imat_n$ to denote the $n\times n$ identity matrix (its subscript is omitted if the context is clear),
$\onemat_{p,q},p\neq q$ to denote the all-zero matrix except that the $(p,q)$-th entry equals $1$.

For any matrix $\Mmat$, we use $\Mmat[i,j]$, $\Mmat[i,*]$, $\Mmat[*,j]$ to denote, respectively, the $(i,j)$-th entry, the $i$-th row, and $j$-th column 
in $\Mmat$.

Lastly, we use $\widetilde O$ to hide the polylogarithmic terms in the traditional Big-O notation.

\paragraph*{CNOT Gate and Circuits.} 
A CNOT gate maps Boolean tuple $(x,y)$ to $(x,x\oplus y)$.
Mathematically, it is the invertible linear map 
$\begin{bmatrix}
1&0\\
1&1 
\end{bmatrix}
$
over $\Fbb_2^2$.
Thus, each $n$-qubit CNOT circuit $\Ccal$ can be viewed as an invertible linear map over $\Fbb_2^n$, represented as an invertible matrix $\Mmat$ in $\Fbb_2^{n\times n}$. 

\paragraph*{Ancilla-Based CNOT Circuits.}
An $n$-qubit $m$-ancilla CNOT circuit has $m$ ancillae with initial assignment $\ket{0}^{\otimes m}$, and satisfies the following key property:
After the evaluation of the circuit, all ancillae are restored.
An $n$-qubit $m$-ancilla CNOT circuit $\Ccal$ {\em implements} an invertible matrix $\Mmat\in\Fbb_{2}^{n\times n}$ if for any $\bm x\in \Fbb_{2}^{n}$,
 the output of circuit $\Ccal$ on input $\ket{\bm x}\ket{ 0}^{\otimes m}$ 
 is $\ket{\Mmat \bm x}\ket{ 0}^{\otimes m}$. 
In other words, the matrix representation for $\Ccal$ is
$\begin{bmatrix}
\Mmat&\Emat\\
\bm 0&\Fmat
\end{bmatrix}
$
for some $\Emat\in\Fbb_{2}^{n \times m}$ and invertible $\Fmat\in\Fbb_{2}^{m\times m}$. 
Since $\Emat,\Fmat$ do not interfere the output when the input is $\ket{\bm x}\ket{0}^{\otimes m}$, we abbreviate them as $\bm*$. In particular, $\bm*_m$ represents some matrix in $\Fbb_{2}^{m\times m}$. 
In the rest of this paper, we say such $\Ccal$ is an $n$-qubit $m$-ancilla CNOT circuit for $\Mmat$. 
\paragraph*{Equivalent Ancilla-Based CNOT Circuits.}
We say an $n$-qubit $m_1$-ancilla CNOT circuit $\Ccal_1$ is equivalent to an $n$-qubit $m_2$-ancilla CNOT circuit $\Ccal_2$ (denoted by $\Ccal_1\cong\Ccal_2$) if for any $\bm x\in \bin^n$, the output of the first $n$ qubits are the same for $\Ccal_1\ket{\bm x}\ket{ 0}^{\otimes m_1}$ and $\Ccal_2\ket{\bm x}\ket{ 0}^{\otimes m_2}$. 

\paragraph*{Row-Elimination Matrices.}
Mathematically, a CNOT gate with control qubit $i$ and target qubit $j$ can be represented as a row elimination from $i$ to $j$ (i.e., adding row-$i$ to row-$j$).
Thus, a CNOT circuit can be viewed as the product of a sequence of row-elimination matrices.

\begin{definition}[Row-Elimination Matrix]
We say a matrix $\Rmat\in\Fbb_2^{n\times n}$ is a \emph{row-elimination matrix} if $\Rmat=\Imat$ or there exist $i,j\in[n],i\neq j$, such that
$$
\Rmat=\Imat+\onemat_{j,i}.
$$
\end{definition}
Note that for any row-elimination matrix $\Rmat$, we have $\Rmat^2 = \Imat$.
We use $\Rsfmat(i,j)$ to denote $\Imat+\onemat_{j,i}.$
A row-elimination matrix represents a single basic step in the process of Gaussian elimination. 
For any matrix, left-multiplying by $\Rsfmat(i,j)$ represents adding the $i$-th row to the $j$-th row.

\paragraph*{Parallel Row-Elimination Matrices.}
A basic concept for depth reduction in CNOT-circuit synthesis is parallel row elimination (or equivalently, parallel Gaussian elimination).
\begin{definition}[Parallel Row-Elimination Matrix]
We say matrix $\Rmat\in\Fbb_2^{n\times n}$ is a \emph{parallel row-elimination matrix} if $\Rmat=\Imat$ or there exist $t\in[n]$ and $\bm i,\bm j\in[n]^t$, such that $i_k,j_k$'s are $2t$ different indices, and
$$
\Rmat=\Imat+\sum_{k=1}^t\onemat_{j_k,i_k}=\prod_{k=1}^t\Rsfmat(i_k,j_k).
$$
\end{definition}

A parallel row-elimination matrix represents several independent steps in Gaussian elimination.
Since all $i_k,j_k$ are distinct, there is no need to apply these eliminations in any particular order. 
When $\bm i,\bm j$ is clear in the context (like in \Cref{sec:noancilla} and \Cref{sec:withancilla}), we use $\Rmat$ to denote a parallel row-elimination matrix, and $\Rmat_1,\Rmat_2,\ldots$ to denote a sequence of parallel row-elimination matrices.

\paragraph*{Quantum Approximation.}
In this paper, without loss of generality, we only consider quantum circuits consisting of single-qubit and two-qubit gates.
In \Cref{sec:lowernpc}, we will use the following definitions of quantum-circuit approximation. 

\begin{definition}[$\epsilon$-close]
For any $\epsilon>0$, two vectors $\bm u,\bm v\in\Cbb^{n}$ are $\epsilon$-close iff $\vabs{\bm u-\bm v}_2<\epsilon$.
\end{definition}

\begin{definition}[$\epsilon$-approximate]
Given $n$-qubit quantum circuit $\Ccal_1$ and $n$-qubit $m$-ancilla quantum circuit $\Ccal_2$, we say $\Ccal_2$ $\epsilon$-approximates $\Ccal_1$ if for any $\bm x\in\bin^n$,
\begin{itemize}
\item $\Ccal_2$ maps $\ket{\bm x}\ket{0}^{\otimes m}$ to $\ket{\varphi_{\bm x}}\ket{0}^{\otimes m}$ for some $ \ket{\varphi_{\bm x}}\in\Cbb^{2^n}$ with $\vabs{ \varphi_{\bm x}}_2=1$.
\item $\ket{\varphi_{\bm x}}$ and $\Ccal_1\ket{\bm x}$ are $\epsilon$-close.
\end{itemize}
\end{definition}

\section{CNOT-Circuit Synthesis: Depth Reduction Without Ancillae}\label{sec:noancilla}

We will divide the proof of \Cref{thm:upperbound} into two parts. 
In this section, we focus on the case when $m=o(n)$.
The rest case will be fully addressed in \Cref{sec:withancilla}. 

For the case of $m=o(n)$, it is sufficient to establish the following \Cref{thm:noancillaCNOT}, which provides an $\widetilde{O}\pbra{n^{\omega}}$-time algorithm to transform any $n$-qubit CNOT circuit to an equivalent one with depth $O(n/\log n)$. 
The construction naturally covers the cases in \Cref{thm:upperbound} when there are $o(n)$ ancillae.
\begin{theorem}[Ancilla-Free Synthesis]\label{thm:noancillaCNOT}
There is an $\widetilde O(n^\omega)$-time algorithm to parallelize any $n$-qubit CNOT circuit to depth $O(n/\log n)$ without ancillae.
\end{theorem}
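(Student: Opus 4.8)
The plan is to reduce the problem to transforming an arbitrary $\Mmat\in\GL(n,2)$ to the identity using $O(n/\log n)$ parallel row-elimination matrices, with everything computable in $\widetilde O(n^\omega)$ time. Since the final circuit is the reverse product of these parallel eliminations and each parallel elimination has CNOT-depth $1$, the depth bound follows immediately.

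First I would isolate the structural notion that makes a single "batch" of Gaussian elimination cheap: informally, if an $n\times k$ submatrix (a block of $k=\Theta(\log n)$ columns we want to clear below the diagonal block) is "near random", i.e.\ it has full column rank and moreover its rows are spread out enough that we can find a system of distinct representatives, then one can zero out that entire block using only $O(1)$ parallel row-elimination matrices, or more precisely $O(k)$ single eliminations organized into $O(\log k)$ parallel layers — the key subroutine being: pick pivot rows, and eliminate all $k$ columns simultaneously by reading off, for each target row, which pivot row to add, then decomposing the resulting (possibly conflicting) set of single eliminations into few parallel layers via $1$-factorization of an almost-regular bipartite graph (Hall's theorem, as cited). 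Doing $n/k = O(n/\log n)$ such blocks gives the total count $O(n/\log n)\cdot O(\log\log n) $ — here I would need to be careful that the $1$-factorization overhead is only $O(\log k)=O(\log\log n)$ layers per block, or better, amortize it so the per-block cost is truly $O(1)$ parallel steps; getting this amortization right is where the bookkeeping lives.

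The second ingredient handles the fact that an arbitrary invertible $\Mmat$ need not have near-random blocks. Following the oblivious-routing idea of \cite{KaklamanisKT91}, I would first left- and right-multiply (or just left-multiply, since we only get to do row operations, so more precisely apply a random invertible transformation realizable itself in low parallel depth) by a random structured matrix so that, with high probability, every block encountered during the elimination is near-random in the required sense; then argue this randomization costs only $O(n/\log n)$ parallel eliminations itself. Finally, to get a deterministic $\widetilde O(n^\omega)$-time algorithm rather than a randomized one, I would derandomize this randomization step by the method of conditional expectations, choosing the random bits greedily while maintaining a pessimistic estimator that the "bad block" event stays below $1$; the estimator must be efficiently computable, which is where the "somewhat tricky conditional expectations" alluded to in the introduction come in.

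The main obstacle, I expect, is the interface between the two ingredients: proving that after the (derandomized) structural transformation, the blocks remain near-random \emph{throughout} the elimination, not just at the first step — because each parallel elimination step perturbs the remaining submatrix, and one must show these perturbations do not destroy the near-randomness property (or re-randomize cheaply between phases). A secondary obstacle is the time complexity: naively simulating $O(n/\log n)$ parallel eliminations on an $n\times n$ matrix costs $O(n^3/\log n)$, so to hit $\widetilde O(n^\omega)$ one must batch the updates and perform them via fast matrix multiplication, i.e.\ compute each phase's cumulative transformation as a single matrix product rather than elimination-by-elimination. I would close by assembling the pieces: randomize-and-derandomize the structure, run the block elimination with $1$-factorization scheduling, account the depth as $O(n/\log n)$ and the running time as $\widetilde O(n^\omega)$, and note that reversing and transcribing the parallel row-elimination matrices yields the desired CNOT circuit.
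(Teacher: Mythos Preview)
Your proposal captures the right surrounding themes (oblivious-routing randomization, $1$-factorization for scheduling, conditional-expectation derandomization) but the central mechanism has a genuine gap, and as written it cannot reach $O(n/\log n)$ depth.

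The specific problem is the claim that a ``near-random'' $n\times k$ block with $k=\Theta(\log n)$ can be cleared with ``$O(k)$ single eliminations organized into $O(\log k)$ parallel layers''. That is impossible on counting grounds: there are $n-k=\Omega(n)$ target rows below the pivots, each of which must be touched at least once, so you need $\Omega(n)$ single eliminations per block, not $O(k)$. Worse, even granting $\Omega(n)$ eliminations, scheduling them is the hard part: each of the $k$ pivot rows may be demanded by $\Theta(n)$ targets, so the source side of your bipartite graph has degree $\Theta(n)$ and $1$-factorization yields $\Theta(n)$ layers per block, not $O(1)$. The missing idea is how to make many \emph{simultaneous} copies of the needed row patterns available without ancillae; your proposal never explains where that parallelism comes from. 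Relatedly, the ``randomize $\Mmat$ up front by a low-depth structured transform'' step is left unspecified, and it is not clear any such transform preserves near-randomness through all $n/\log n$ phases, as you yourself flag.

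The paper's route is structurally different from block-column elimination. It first LU-factorizes $\Pmat\Mmat=\Lmat\Umat$ in time $\widetilde O(n^\omega)$ (this is the sole source of the $n^\omega$ cost), reducing to the triangular case, and then runs a \emph{divide-and-conquer} on $\Umat$: recurse on the two diagonal halves in parallel, after which the matrix has the shape $\bigl[\begin{smallmatrix}\Imat & \Amat'\\ & \Imat\end{smallmatrix}\bigr]$. The crucial point is that the bottom-right $\Imat_{n/2}$ is now free to use as in-place scratch. Viewing it as $n/\log n$ copies of $\Imat_{\frac12\log n}$, the algorithm drives all these small blocks simultaneously through a \emph{row-traversal sequence} in $\GL(\tfrac12\log n,2)$ of length $O(\sqrt n)$ (\lem{lem:traversal}); over the traverse, every $\tfrac12\log n$-bit pattern becomes available in every block. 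At each timestamp a single matching (now balanced, degree $\Theta(\sqrt n/\log n)$) adds the currently-available pattern into the rows of $\Amat'$ that need it. The random ``layby'' $\Bmat_i$ is not a preprocessing of $\Mmat$ but an intermediate target: go $\Amat_i\to\Bmat_i$ then $\Bmat_i\to 0$; randomness of $\Bmat_i$ makes both $\Cmat_i=\Amat_i\oplus\Bmat_i$ and $\Bmat_i$ pattern-balanced (\lem{lem:derandom}), which is what keeps the matching degrees small. The recursion $d(n)\le d(n/2)+O(n/\log n)$ then gives the depth bound.
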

Due to the connection between CNOT circuits and invertible Boolean matrices, we can reformulate \Cref{thm:noancillaCNOT} as follows.
\begin{lemma}[Parallel Gaussian-Eliminatin Reformulation of \Cref{thm:noancillaCNOT}]\label{lem:noancillaMatrix}
There is an $\widetilde O(n^\omega)$-time algorithm for the following task: Given any $\Mmat\in\GL(n,2)$, it outputs parallel row-elimination matrices $\Rmat_1,\ldots,\Rmat_d$ where $d=O(n/\log n)$, such that 
$$
\Rmat_d\cdots\Rmat_2\Rmat_1\Mmat=\Imat.
$$
\end{lemma}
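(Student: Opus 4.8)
The plan is to transform $\Mmat$ to the identity in $O(n/\log n)$ parallel row-elimination stages by a block-Gaussian-elimination strategy, clearing a block of $\Theta(\log n)$ rows/columns at a time using only $O(1)$ parallel eliminations per block. Partition $[n]$ into $n/b$ consecutive blocks of size $b = \Theta(\log n)$. Proceeding block by block, I want to bring $\Mmat$ to a state where the first $b$ columns restricted to the first $b$ rows form an identity block and the rest of those columns are zero; then recurse on the trailing $(n-b)\times(n-b)$ submatrix. The key number-theoretic engine is that a single parallel row-elimination matrix is a very flexible object: it can simultaneously add many disjoint single rows, and by composing $O(1)$ of them (and exploiting $\Rsfmat(i,j)$'s being involutions and the additive structure of $\Fbb_2^b$) we can realize, for each "target" row outside the current block, the addition of \emph{any} $\Fbb_2$-linear combination of the $b$ "pivot" rows. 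Concretely I would first use one stage to build, in a scratch region, the additive basis of all $2^b$ Boolean combinations of the pivot rows à la Lovett~\cite{Lovett17} — but since $2^b$ must be $\poly(n)$ we get $b = c\log n$ for small constant $c$, and then a constant number of parallel stages suffices to copy the appropriate combination onto each of the remaining $n-b$ rows, because the copy operations on distinct target rows are pairwise independent and can be packed into one parallel matrix each.

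The steps, in order: (1) reduce to the matrix formulation, already done in the statement; (2) fix block size $b=\Theta(\log n)$ and set up the "current pivot block / scratch block / target rows" decomposition; (3) within a block, first perform an ordinary (sequential) Gaussian elimination on the $b\times b$ pivot sub-block — this costs only $O(b)=O(\log n)$ non-parallel eliminations for that block, hence $O(n)$ total over all $n/b$ blocks, which is too much, so instead I must parallelize even this inner step: use the standard recursive halving of Gaussian elimination on a $b\times b$ matrix, or simply note that since $2^b=\poly(n)$ we can afford to precompute all combinations; (4) use $O(1)$ parallel stages to zero out the pivot columns below and above the block by adding the right precomputed combination to each target row, invoking a $1$-factorization of the associated (almost-regular) bipartite "which combination goes to which row" graph~\cite{ColeOS01,Alon03} to schedule the additions into $O(1)$ parallel layers; (5) clean up the scratch block and recurse; (6) bound the running time by $\widetilde O(n^\omega)$, since each block's work is a handful of $n\times n$ Boolean matrix multiplications and we do $n/\log n$ blocks.

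The main obstacle — and the place I expect the real work to be — is step (3)–(4): getting the cost \emph{per block} down to $O(1)$ parallel eliminations rather than $O(\log n)$. Naively, eliminating a $b\times b$ pivot block and then broadcasting to $n-b$ rows takes $\Theta(\log n)$ parallel layers (one per bit of the combination), which only yields total depth $O\!\big(\tfrac{n}{\log n}\cdot\log n\big)=O(n)$ — no improvement. The trick must be to amortize: prepare \emph{once} (in $O(\log n)$ parallel stages, using $b$ extra scratch rows) all $2^b-1$ nonzero $\Fbb_2$-combinations of the $b$ pivot rows, each stored in its own row; then a single parallel row-elimination matrix can, in one layer, add to every target row the unique stored combination it needs, since the sources (stored-combination rows) are distinct and the targets are distinct — this is exactly the Four-Russians idea pushed from $\log n$ to $\log^2 n$-many eliminated entries per parallel step, as the paper flags. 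Making the bookkeeping consistent (restoring scratch, handling the rows inside the block itself, ensuring the "distinct $2t$ indices" constraint of a parallel row-elimination matrix is never violated, and threading the $1$-factorization argument through) is the technical heart; the outer recursion and the $\widetilde O(n^\omega)$ time bound are then routine.
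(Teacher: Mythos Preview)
Your proposal has a genuine gap at the heart of step~(4). A parallel row-elimination matrix requires all $2t$ indices to be distinct, so each source row may be used at most once per layer. When many target rows happen to need the \emph{same} $\Fbb_2$-combination of the $b$ pivots---which is the generic situation for adversarial $\Mmat$---your bipartite ``which combination goes to which row'' graph has source-side maximum degree as large as $n-b$, and the $1$-factorization produces that many matchings, not $O(1)$. Even under an idealized balanced count (say $b=\tfrac12\log n$, hence $\sqrt n$ combinations and average multiplicity $\approx\sqrt n$), you get $\Omega(\sqrt n)$ layers per block and total depth $\Omega(n^{3/2}/\log n)$, nowhere near $O(n/\log n)$. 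Separately, storing all $2^b-1$ combinations ``each in its own row'' needs $2^b-1$ rows, not $b$; in the ancilla-free setting those rows do not exist, since every row of $\Mmat$ already carries live data. What you have sketched is essentially the paper's \emph{ancilla-based} construction (\lem{lem:GenYCol} and \cor{cor:SParallel}), which genuinely relies on the extra $3sn$ rows.

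The paper's ancilla-free argument is structured differently. It first LU-decomposes and reduces to upper-triangular $\Mmat$, then runs divide-and-conquer: recurse simultaneously on the two $(n/2)\times(n/2)$ diagonal blocks, after which the bottom-right block has become $\Imat_{n/2}$ and those $n/2$ rows serve as the working area for clearing the off-diagonal block. The idea you are missing---the one that tames the degree problem above---is a \emph{random layby} borrowed from oblivious routing: rather than eliminate each $\Amat_i$ directly, route it through a uniformly random intermediate $\Bmat_i$, so that both legs $\Cmat_i=\Amat_i\oplus\Bmat_i$ and $\Bmat_i$ are marginally uniform. Then every vector in $\Fbb_2^{\frac12\log n}$ occurs $O(\sqrt n/\log n)$ times in each row and column (\lem{lem:derandom}), the bipartite graph really \emph{is} near-regular, and the $1$-factorization yields $O(\sqrt n/\log n)$ matchings per step of a length-$O(\sqrt n)$ row-traversal sequence (\lem{lem:traversal}). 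The recursion $d(n)\le d(n/2)+O(n/\log n)$ then gives the claimed depth, and the layby is derandomized by conditional expectations in $\widetilde O(n^2)$ time.
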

\begin{proof}
First, by Bunch and Hopcroft~\cite{LUdecom}, we can factorize, in time $\widetilde O(n^\omega)$, $\Pmat\Mmat=\Lmat\Umat$, where (1) $\Pmat$ is a permutation matrix, and (2) $\Lmat$ and $\Umat$ are, respectively, lower and upper triangular matrices.

Next, it follows a result from Moore and Nilsson~\cite{moore2001parallel} that any permutation matrix $\Pmat$ can be decomposed into six parallel row-elimination matrices. 
Thus, \Cref{lem:noancillaMatrix} follows from the technical lemma below on the low-depth parallel-row-elimination factorization of triangular matrices. 
\end{proof}

\subsection{CNOT Circuits with Triangular Topological Structures}

\begin{lemma}[Triangular Matrices]\label{lem:noancillaU}
\Cref{lem:noancillaMatrix} holds for any triangular matrix $\Mmat$.
\end{lemma}
\begin{proof}
In the proof, we will focus on the case where $\Mmat$ is an upper triangular matrix.
The proof can be naturally extended to lower triangular matrices.

Our algorithm uses a divide-and-conquer scheme. For illustration, we assume here that $n$ is large enough and $\frac n2,\frac12\log n,\frac n{\log n}$ are all integers to avoid rounding issue (the details will become clear later in the proof). We will explain how to handle general $n$ at the end of this subsection.
For simplicity, we first consider a randomized algorithm.
We will then derandomize it using \Cref{lem:derandom} below. 
The synthesis process is illustrated in \Cref{fig:noancillaAlgo}, which has five main steps.
\begin{equation*}
\Diamondblack \quad \quad \Diamondblack \quad \quad \Diamondblack \quad \quad \Diamondblack \quad \quad \Diamondblack
\end{equation*}
\paragraph*{Step 1 (Recursion).}
Denote $\Mmat$ as $\bmat{\Mmat_1&\Amat\\&\Mmat_2}$, where $\Amat$ is of size $\frac n2\times\frac n2$. 
After simultaneous recursive parallel row elimination on $\Mmat_1,\Mmat_2$, the diagonal blocks becomes the identity matrices, and the upper-right sub-matrix becomes $\Mmat_1^{-1}\Amat$.
This step can be computed in advance---independently from the recursion---in time $\widetilde O(n^\omega)$~\cite{AlmanW21,Strassen1969}.

\paragraph*{Step 2 (Find Random Layover).}
Divide $\Mmat_1^{-1}\Amat$ as $\bmat{\Amat_1^\top&\cdots&\Amat^\top_{\frac12\log n}}^\top$ where each $\Amat_i$ is in $\Fbb_2^{\frac n{\log n}\times\frac n2}$.
Our key observation here is: 
\begin{quote}
{\em If $\Amat_i$ is ``close to random'' (to be formally defined in \Cref{lem:derandom}), then it has a low-depth parallel row elimination.}
\end{quote}
To ensure the needed degree of randomness in our matrix structures, we use 
a classical idea of Valiant~\cite{ValiantHypercube,ValiantBrebner} 
for oblivious routing \cite{KaklamanisKT91}:
We generate a random $\Bmat_i$ of the same size for each $\Amat_i$ as its {\em layover}.
We also formulate $\Cmat_i=\Amat_i\oplus\Bmat_i$.
Note that, although they are correlated, $\Bmat_i$ and $\Cmat_i$ are individually random $\frac n{\log n}\times\frac n{\log n}$ matrices with entries from $\Fbb_2^{\frac12\log n}$.

\paragraph*{Step 3 (Generate Row-Traversal Sequence).}
We say matrix sequence $\Xmat_1,\ldots,\Xmat_T\in\GL(c,2)$ is \emph{row-traversal} if:
\begin{itemize}
\item $\Xmat_T=\Imat$, 
\item for any $p\in[c]$, every vector in $\Fbb_2^c\backslash\cbra{0^c}$ appears in the sequence $\Xmat_1[p,*],\ldots,\Xmat_T[p,*]$. 
\end{itemize}
Let $c=\frac12\log n$. 
We will apply \Cref{lem:traversal} below to obtain a row-traversal sequence with $T=O(\sqrt n)$.

\begin{figure*}[ht]
\centering
\scalebox{0.42}{\begin{tikzpicture}
    \draw[black,dashed] (0,0) rectangle +(8,8);
    \draw[black,very thick] (0,8) -- (8,0);
    \fill[gray!80!white] (0,8) -- (8,0) -- (8,8) -- cycle;
    \node[scale=2.3] at (9.5,4) {$\xLongrightarrow{\!\!\text{Recursion}\!}$};
    \draw[black,very thick,dashed] (0,4) -- (8,4);
    \draw[black,very thick,dashed] (4,0) -- (4,8);
    \node[scale=3] at (2,6) {$\Mmat_1$};
    \node[scale=3] at (6,2) {$\Mmat_2$};
    \node[scale=3] at (6,6) {$\Amat$};
    
    \begin{scope}[xshift=11cm]
    \draw[black,dashed] (0,0) rectangle +(8,8);
    \draw[black,very thick] (0,8) -- (6,2) (7,1) -- (8,0);
    \fill[gray!80] (4,4) rectangle +(4,4);
    \foreach \x in {4,5,7}
        \draw[black,dashed,very thick] (\x,7-\x) rectangle +(1,1);
    \foreach \x in {6.25,6.5,6.75}
        \fill[black!80] (\x,8-\x) circle (2pt);
    \foreach \y in {4,5,6,7}{
        \draw[black,dashed] (4,\y) -- (8,\y);
    }
    \node[scale=1.8] at (6,7.5) {$\Amat_1$};
    \node[scale=1.8] at (6,6.5) {$\Amat_2$};
    \node[scale=1.8] at (6,4.5) {$\Amat_{\frac12\log n}$};
    \foreach \y in {5,6,7}
        \fill[black!80] (\y,5.5) circle (2pt);
    \node[scale=2.3] at (9.5,4) {$\xLongrightarrow{\text{Traverse}}$};
    \draw[decorate,decoration={brace,amplitude=5pt},xshift=-0.8mm] (4,7) -- (4,8) node[midway,xshift=-0.8cm,scale=1.5] {$\frac n{\log n}$};
    \draw[decorate,decoration={brace,amplitude=5pt},xshift=-0.8mm] (4,3) -- (4,4) node[midway,xshift=-1.1cm,scale=1.5] {$\frac12\!\log n$};
    \draw[decorate,decoration={brace,amplitude=5pt},yshift=-0.8mm] (5,3) -- (4,3) node[midway,yshift=-0.5cm,xshift=-0.4cm,scale=1.5] {$\frac12\!\log n$};
    \end{scope}

    \begin{scope}[xshift=22cm]
    \draw[black,dashed] (0,0) rectangle +(8,8);
    \draw[black,very thick] (0,8) -- (4,4);
    \fill[gray!80] (4,4) rectangle +(4,4);
    \foreach \x in {4,5,7}{
        \fill[gray!30] (\x,7-\x) rectangle +(1,1);
        \foreach \y in {1,2,3}
            \draw[dashed] (\x,7+0.25*\y-\x) -- +(1,0);
        \draw[-{Straight Barb[length=1mm]},xshift=-3mm] (\x+0.5,7.125-\x) to (\x+0.5,4.5);
        \draw[-{Straight Barb[length=1mm]},xshift=-1mm] (\x+0.5,7.125-\x+0.25) to (\x+0.5,5.5);
        \draw[-{Straight Barb[length=1mm]},xshift=1mm] (\x+0.5,7.125-\x+0.5) to (\x+0.5,6.5);
        \draw[-{Straight Barb[length=1mm]},xshift=3mm] (\x+0.5,7.125-\x+0.75) to (\x+0.5,7.5);
    }
    \foreach \x in {6.25,6.5,6.75}
        \fill[black!80] (\x,8-\x) circle (2pt);
    \foreach \y in {4,5,6,7}{
        \draw[black,dashed] (4,\y) -- (8,\y);
    }
    \foreach \y in {5,6,7}
        \fill[black!80] (\y,5.5) circle (2pt);
    \end{scope}

    \begin{scope}[yshift=-9cm]
    \draw[black,dashed] (0,0) rectangle +(8,8);
    \draw[black,very thick] (0,8) -- (6,2) (7,1) -- (8,0);
    \fill[gray!30] (4,4) rectangle +(4,4);
    \foreach \x in {4,5,7}
        \draw[dashed,very thick] (\x,7-\x) rectangle +(1,1);
    \foreach \x in {6.25,6.5,6.75}
        \fill[black!80] (\x,8-\x) circle (2pt);
    \foreach \y in {4,5,6,7}{
        \draw[black,dashed] (4,\y) -- (8,\y);
    }
    \node[scale=1.8] at (6,7.5) {$\Bmat_1$};
    \node[scale=1.8] at (6,6.5) {$\Bmat_2$};
    \node[scale=1.8] at (6,4.5) {$\Bmat_{\frac12\log n}$};
    \foreach \y in {5,6,7}
        \fill[black!80] (\y,5.5) circle (2pt);
    \node[scale=2.3] at (9.5,4) {$\xLongrightarrow{\text{Traverse}}$};
    \node[scale=2.3,yshift=-2mm] at (-1.5,4) {$\xLongrightarrow[\text{Layover}]{\text{Reach}}$};
    \end{scope}

    \begin{scope}[yshift=-9cm,xshift=11cm]
    \draw[black,dashed] (0,0) rectangle +(8,8);
    \draw[black,very thick] (0,8) -- (4,4);
    \fill[gray!30] (4,4) rectangle +(4,4);
    \foreach \x in {4,5,7}{
        \fill[gray!30] (\x,7-\x) rectangle +(1,1);
        \foreach \y in {1,2,3}
            \draw[dashed] (\x,7+0.25*\y-\x) -- +(1,0);
        \draw[-{Straight Barb[length=1mm]},xshift=-3mm] (\x+0.5,7.125-\x) to (\x+0.5,4.5);
        \draw[-{Straight Barb[length=1mm]},xshift=-1mm] (\x+0.5,7.125-\x+0.25) to (\x+0.5,5.5);
        \draw[-{Straight Barb[length=1mm]},xshift=1mm] (\x+0.5,7.125-\x+0.5) to (\x+0.5,6.5);
        \draw[-{Straight Barb[length=1mm]},xshift=3mm] (\x+0.5,7.125-\x+0.75) to (\x+0.5,7.5);
    }
    \foreach \x in {6.25,6.5,6.75}
        \fill[black!80] (\x,8-\x) circle (2pt);
    \foreach \y in {4,5,6,7}{
        \draw[black,dashed] (4,\y) -- (8,\y);
    }
    \foreach \y in {5,6,7}
        \fill[black!80] (\y,5.5) circle (2pt);
    \node[scale=2.3] at (9.5,4) {$\xLongrightarrow{\text{Done}}$};
    \end{scope}

    \begin{scope}[xshift=22cm,yshift=-9cm]
    \draw[black,dashed] (0,0) rectangle +(8,8);
    \draw[black,very thick] (0,8) -- (8,0);
    \end{scope}
\end{tikzpicture}}
\caption{Main algorithm for in-place parallel Gaussian elimination.}\label{fig:noancillaAlgo}
\end{figure*}

\paragraph*{Step 4 (First Traverse).}
In this step, we add $\Cmat_i$ to $\Amat_i$ and then get $\Bmat_i$ for all $i$.

We view the bottom-right $\Imat_{n/2}$ as $n/\log n$ identity matrices 
of same size, and refer to them as $\Ymat_1,\ldots,\Ymat_{n/\log n}$.

Let $\Xmat_0=\Imat$.
For each time stamp $t\in[T]$, simultaneously all $\Ymat_j$'s go from $\Xmat_{t-1}$ to $\Xmat_t$ using original Gaussian elimination algorithm.
Then:
\begin{itemize}
\item for all $i\in\sbra{\frac12\log n}$, find ``large'' set $S_i\subseteq\sbra{\frac n{\log n}}\times\sbra{\frac n{\log n}}$ such that any $(j,k)\in S_i$ satisfies:
\begin{itemize}
\item $\Cmat_i[j,k]=\Xmat_t[i,*]$ (recall that entries of $\Cmat_i$ are from $\Fbb_2^{\frac12\log n}$),
\item $(j,k)$ was not selected in previous $S_i$ (i.e., $S_i$ in previous time stamps or repetitions),
\item for any other $(j',k')\in S_i$, we have $j'\neq j$ and $k'\neq k$;
\end{itemize}
\item for all $i$ and $(j,k)\in S_i$, add row-$i$ of $\Ymat_k$ to row-$j$ of $\Amat_i$ as one parallel row elimination;
\item repeat the two steps above until all $S_i=\emptyset$.
\end{itemize}
The detailed explanation for how to construct $S_i$ will be provided below immediately after Step 5.

\paragraph*{Step 5 (Second Traverse).}
Now in the upper-right part, all $\Amat_i$'s have reached the pre-decided layout $\Bmat_i$'s. 
In this step, we do another round of traverse similar to that of Step 4; 
the only difference is that we use $\Bmat_i$ when constructing $S_i$. 
Thus, we add $\Bmat_i$'s to $\Bmat_i$'s as in Step 4 in order to turn the 
upper-right square into the zero matrix.
\begin{equation*}
\Diamondblack \quad \quad \Diamondblack \quad \quad \Diamondblack \quad \quad \Diamondblack \quad \quad \Diamondblack
\end{equation*}

Now we explain the construction of $S_i$ in Step 4. 
For fixed $t$ and $i$, although $S_i$ is formulated iteratively in our description above (in favor of clarity of presentation), it is actually implemented in a single shot. 
We now provide its detail where the random $\Bmat_i$ plays an essential role.

When $\Bmat_i$ is random, with high probability, any vector in $\Fbb_2^{\frac12\log n}$ appears $O\pbra{\sqrt n/\log n}$ times in every row and column of $\Cmat_i$.
Then, we enumerate all $(j,k)\in\sbra{\frac n{\log n}}\times\sbra{\frac n{\log n}}$ such that $\Cmat_i[j,k]=\Xmat_t[i,*]$ and view them as the edges on a bipartite graph. 
Thus, any valid $S_i$ is a matching in this graph and our iterative construction is equivalent to a matching decomposition. 
Since any vertex has degree $O\pbra{\sqrt n/\log n}$, the bipartite graph can be factorized into $O\pbra{\sqrt n/\log n}$ matchings in near linear time \cite{ColeOS01,Alon03}.

Hence, Step 4 uses $O\pbra{\sqrt n/\log n}$ parallel row-elimination matrices for every time stamp. 
Similar analysis holds for Step 5.
Below, we will derandomize the choice of $\Bmat_i$ using \Cref{lem:derandom}.

Putting together, the maximum number of parallel row-elimination matrices, denoted as $d(n)$, can be obtained as $O(n/\log n)$ by the following recursion:
$$
d(n)\leq
\underbrace{d\pbra{\frac n2}}_\text{Step 1}+
\underbrace{2\cdot}_\text{Step 4,5}
\underbrace{O(\sqrt n)}_T\cdot
\pbra{
\underbrace{O\pbra{\log^3n}}_{\text{Move }\Ymat_j}+
\underbrace{O\pbra{\sqrt n/\log n}}_\text{Matchings}
}.
$$

Using \Cref{lem:derandom} and \Cref{lem:traversal} (which will be explained shortly), the running time, denoted as $T(n)$, can be obtained as $\widetilde O(n^\omega)$ by the following recursion:
\begin{equation*}
T(n)\leq
\underbrace{2\cdot T\pbra{\frac n2}+\widetilde O(n^\omega)}_\text{Step 1}+
\underbrace{\widetilde O\pbra{n^2}}_\text{Step 2}+
\underbrace{\widetilde O(n)}_\text{Step 3}
+\underbrace{2\cdot}_\text{Step 4,5}
\pbra{
\underbrace{\widetilde O\pbra{n^2}}_\text{Get $S_i$}+
\underbrace{\widetilde O\pbra{n^2}}_\text{Add $\Ymat_k$ to $\Amat_i$}+
\underbrace{O(\sqrt n)}_T\cdot
\underbrace{\widetilde O(1)}_{\text{Move }\Ymat_j}
}.
\tag*{\qedhere}
\end{equation*}
\end{proof}

Now we build the two essential components needed in the proof of \Cref{lem:noancillaU}. 

\Cref{lem:derandom} below addresses the crucial property that $\Bmat_i,\Cmat_i$ must have in order to make the matching decomposition and parallel row elimination efficient.
Note that the proof of existence in \Cref{lem:derandom} can be obtained by a direct application of Chernoff-Hoeffding inequality.
However, the resulting construction is hard to derandomize, particularly, in time $\widetilde O(n^2)$.
In the proof, we present an efficient deterministic construction. 

\begin{lemma}[Efficient Layover]\label{lem:derandom}
There is an $\widetilde O\pbra{n^2}$-time algorithm such that for sufficiently large $n$, given any $\frac n{\log n}\times\frac n{\log n}$ matrix $\Amat$ with entries from $\Fbb_2^{\frac12\log n}$, it outputs $\Bmat$ of the same format satisfying 
that for all $\bm v\in\Fbb_2^{\frac12\log n}$, $\bm v$ appears at most $\frac{\sqrt{en}}{\log n}$ times in any row or column of $\Bmat$ and of $\Amat\oplus\Bmat$.
\end{lemma}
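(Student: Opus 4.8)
The plan is to derandomize the obvious random choice of $\Bmat$ by the method of pessimistic estimators; since the existence statement (via Chernoff) is routine, the real work is to carry out the derandomization in near-quadratic time, the naive implementation costing $\widetilde O(n^{5/2})$.

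Set $N=\frac n{\log n}$, $q=2^{\frac12\log n}=\sqrt n$, $\mu=N/q=\frac{\sqrt n}{\log n}$, and $K=\frac{\sqrt{en}}{\log n}=\sqrt e\,\mu$. Take $\Bmat$ with i.i.d.\ uniform entries over $\Fbb_2^{\frac12\log n}$. For every $\Mmat'\in\cbra{\Bmat,\,\Amat\oplus\Bmat}$, every line $\ell$ (a row or a column), and every symbol $\bm v$, the multiplicity of $\bm v$ in $\ell$ of $\Mmat'$ is a sum of $N$ independent $\mathrm{Bernoulli}(1/q)$ variables --- here one uses that $\Amat_{ij}\oplus\Bmat_{ij}$ is uniform as well --- so by Chernoff the bad event $E_{\Mmat',\ell,\bm v}$ that this multiplicity exceeds $K$ has probability at most $\pbra{e^{\sqrt e-1}/(\sqrt e)^{\sqrt e}}^{\mu}=e^{-(1-\sqrt e/2)\mu}=e^{-\Omega(\sqrt n/\log n)}$. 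There are only $2\cdot(2N)\cdot q=O(n^{3/2}/\log n)$ such triples, so a union bound makes a uniformly random $\Bmat$ work with probability $1-o(1)$; in particular a good $\Bmat$ exists.

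To find one deterministically, fix the $N^2$ entries of $\Bmat$ one at a time, say in row-major order, maintaining the exponential pessimistic estimator $\Phi=\sum_{\Mmat',\ell,\bm v}\Phi_{\Mmat',\ell,\bm v}$ with $\lambda=\tfrac12$, where $\Phi_{\Mmat',\ell,\bm v}=e^{-\lambda K}\prod_{(i,j)\in\ell}\E\sbra{e^{\lambda\bm 1[\Mmat'_{ij}=\bm v]}\mid\text{entries fixed so far}}$ dominates $\Pr[E_{\Mmat',\ell,\bm v}\mid\cdots]$. Initially $\Phi$ equals the sum of the Chernoff bounds above, hence $o(1)<1$; at each step $\Phi$ is a martingale under the true uniform choice, so some value for the current entry keeps $\Phi$ non-increasing, and we pick it. At the end $\Phi<1$, but $\Phi$ then equals $\sum_{\Mmat',\ell,\bm v}e^{\lambda(\mathrm{mult}_{\Mmat',\ell,\bm v}-K)}$, so every multiplicity is at most $K$ --- otherwise a single term would already exceed $1$. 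This yields a correct deterministic algorithm.

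The bottleneck, and what I expect to be the main obstacle, is the running time. Each entry $(i,j)$ lies in only four lines --- the row and column through $(i,j)$ in each of $\Bmat$ and $\Amat\oplus\Bmat$ --- and fixing $\Bmat_{ij}$ changes only the $O(q)$ estimators attached to those four lines, each multiplicatively: all of them get a common factor $\rho=(1-1/q+e^{\lambda}/q)^{-1}$, and the single one whose symbol matches the chosen value gets an extra factor $e^{\lambda}$; moreover $\Phi_{\Mmat',\ell,\bm v}$ depends on $\bm v$ only through the current multiplicity of $\bm v$ in $\ell$. So I would keep, per line, a lazy scalar multiplier, the running sum $\Sigma_\ell$ of its estimators, and the multiplicities bucketed by value; this makes the effect on $\Phi$ of any candidate value for $(i,j)$ an $O(1)$ expression and the post-choice bookkeeping $O(1)$ amortized. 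The remaining difficulty is to select a good value without scanning all $q=\sqrt n$ symbols (which would cost $\widetilde O(n^{5/2})$ overall): for this I would exploit that after $t$ placements in a line all but $\le t$ symbols still have multiplicity $0$ there, so a small per-line pool of minimum-multiplicity symbols suffices, and a counting/pigeonhole argument guarantees a value simultaneously near-minimal for all four relevant lines, locatable in $\widetilde O(1)$ time. Each entry is then handled in $\widetilde O(1)$ amortized time, for a total of $\widetilde O(N^2)=\widetilde O(n^2)$.
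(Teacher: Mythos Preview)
Your existence argument via Chernoff and the pessimistic-estimator framework are fine, and you correctly identify that the naive entry-by-entry derandomization costs $\widetilde O(n^{5/2})$. But the step where you claim to find a good value among the $q=\sqrt n$ candidates in $\widetilde O(1)$ time is not justified, and I do not see how to make it work.

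Your own observation that ``after $t$ placements in a line all but $\le t$ symbols still have multiplicity $0$'' becomes vacuous as soon as $t\ge q$: in row-major order, when you reach entry $(N,N)$ both the row and the column already contain $N-1=n/\log n-1\gg\sqrt n$ fixed entries, so every symbol can (and typically will) have positive multiplicity in every one of the four lines. The ``small per-line pool'' therefore collapses for the bulk of the algorithm. The subsequent ``counting/pigeonhole'' step is also unclear: you need $\bm w$ making all four hit-estimators simultaneously small, but the hit symbol is $\bm w$ for the $\Bmat$-lines and $\Amat_{ij}\oplus\bm w$ for the $\Cmat$-lines, so knowing the low-multiplicity symbols in each line separately does not give you a common good $\bm w$ without a genuine search. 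Moreover, $\Phi$ is exponential in the multiplicity, so ``near-minimal multiplicity'' does not translate into ``below-average contribution to $\Phi$'' unless you also control the spread of multiplicities, which your estimator does not.

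The paper sidesteps exactly this bottleneck with a different decomposition: instead of fixing the entries of $\Bmat$ one symbol at a time, it fixes them one \emph{bit} at a time, and maintains the stronger invariant that for every $z\le\frac12\log n$, every $z$-bit prefix occurs at most $(\tfrac12+\epsilon)^z\frac n{\log n}$ times in every row and column of both $\Bmat$ and $\Amat\oplus\Bmat$. When choosing the $k$-th bit of $\Bmat[i,j]$ there are only two options, and the change in the (Chernoff-based) conditional expectation of the number of bad prefix events is an explicit $O(1)$ expression in a handful of row/column prefix counts; picking the better bit is then trivially $\widetilde O(1)$. The total number of decisions is $\frac12\log n\cdot N^2=\widetilde O(n^2)$, which gives the stated running time. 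In short, the bit-by-bit schedule is what makes the derandomization cheap; your entry-by-entry schedule leaves you with a $\sqrt n$-way choice at every step, and the data-structure sketch does not close that gap.
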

\begin{proof}
Let $\Bmat$ be a random $\frac n{\log n}\times\frac n{\log n}$
matrix with i.i.d. entries uniformly selected from $\Fbb_2^{\frac12\log n}$.
Let $\Cmat=\Amat\oplus\Bmat$ and set $\epsilon=(2\log n)^{-1}$ with foresight.

To set up both the existence proof and deterministic construction, we view $\Bmat$ as a matrix where one obtains its entries by picking bit by bit uniformly at random.
In the following, we prove by induction on the number of determined bits $z=0,\ldots,\frac12\log n$, that no vector in $\Fbb_2^z$ appears as a prefix more than $\pbra{\frac12+\epsilon}^z\frac n{\log n}$ times in any row or column of $\Bmat,\Cmat$.

Assume that the first $k-1$ bits are determined. Now we randomly assign the $k$-th bit from $\bin$. 
For any $\bm u\in\Fbb_2^k,i\in\sbra{\frac n{\log n}}$, we consider the indicators for the following four 0/1 ``bad events'':
\begin{itemize}
\item $B_{\bm u}^\text{row-$i$}=1$ ($C_{\bm u}^\text{row-$i$}=1$) iff $\bm u$ appears as a prefix more than $\pbra{\frac12+\epsilon}^k\frac n{\log n}$ times in $\Bmat[i,*]$ ($\Cmat[i,*]$),
\item $B_{\bm u}^\text{col-$i$}=1$ ($C_{\bm u}^\text{col-$i$}=1$) iff $\bm u$ appears as a prefix more than $\pbra{\frac12+\epsilon}^k\frac n{\log n}$ times in $\Bmat[*,i]$ ($\Cmat[*,i]$).
\end{itemize}
Let $K=\pbra{\frac12+\epsilon}^{k-1}\frac n{\log n}$ and $\text{Bin}(K,1/2)$ be the $K$-round Bernoulli trial with probability $1/2$.
Then the expected number of bad events is:
\begin{align*}
\E\sbra{\sum_{\bm u,i}B_{\bm u}^\text{row-$i$}+B_{\bm u}^\text{col-$i$}+C_{\bm u}^\text{row-$i$}+C_{\bm u}^\text{col-$i$}}
\leq & \ 2^k\cdot\frac{4n}{\log n}\cdot
\Pr\sbra{\text{Bin}\pbra{K,\frac12}>\pbra{\frac12+\epsilon}K}\\
\leq &\ \frac{4n\sqrt n}{\log n}\cdot e^{-2K\epsilon^2}=o(1).
\tag{by Chernoff-Hoeffding inequality}
\end{align*}

Therefore, there exists an assignment of the $k$-th bit such that no $\bm u\in\Fbb_2^k$ appears as a prefix for more than $\pbra{\frac12+\epsilon}^k\frac n{\log n}$ times in any row or column of $\Bmat,\Cmat$ as stated.
This desired property follows from
$$
\pbra{\frac12+\epsilon}^{\frac12\log n}\frac n{\log n}\leq\frac{\sqrt{en}}{\log n}.
$$
We use $\bm \cdot\stackrel{\ell}{=}\bm \cdot$ to denote the relation that two vectors share the same first $\ell$ bits. 
Let the undetermined bit in entries of $\Bmat,\Cmat$ be $*$.

Now, for fixed $i,j$, we derandomize the assignment of the $k$-th bit of $\Bmat[i,j]$ by the method of conditional expectation. 
Let the first $k-1$ bits of $\Bmat[i,j]$ and $\Cmat[i,j]$ be $\bm p$ and $\bm q$ respectively. 
Define
$$
r_\clubsuit^{\spadesuit,\diamondsuit}=\#\cbra{y\in\sbra{\frac n{\log n}}\middle|\clubsuit[i,y]\stackrel k=\spadesuit\diamondsuit}
\quad\text{and}\quad
c_\clubsuit^{\spadesuit,\diamondsuit}=\#\cbra{y\in\sbra{\frac n{\log n}}\middle|\clubsuit[y,j]\stackrel k=\spadesuit\diamondsuit},
$$
where $(\clubsuit,\spadesuit)\in\cbra{(\Bmat,\bm p),(\Cmat,\bm q)},\diamondsuit\in\cbra{0,1,*}$.

Let $s=\pbra{\frac12+\epsilon}^k\frac n{\log n}$ and the $k$-th bit of $\Amat[i,j]$ be $b$.
Suppose we pick $b_{\Bmat}$ as the $k$-th bit of $\Bmat[i,j]$, and define $b_{\Cmat}=b_{\Bmat}\oplus b,\bar b_{\Bmat}=b_{\Bmat}\oplus1,$ and
$\bar b_{\Cmat}=b_{\Cmat}\oplus1$, then the expectation of bad events 
decreases by:
$$
\sum_{\kappa\in\cbra{r,c},(\clubsuit,\spadesuit)}
\Bigg[
\sum_{\diamondsuit\in\bin}
f\pbra{\kappa_\clubsuit^{\spadesuit,*},s-\kappa_\clubsuit^{\spadesuit,\diamondsuit}}
-
f\pbra{\kappa_\clubsuit^{\spadesuit,*}-1,\kappa_\clubsuit^{\spadesuit,b_{\clubsuit}}-1}-
f\pbra{\kappa_\clubsuit^{\spadesuit,*}-1,\kappa_\clubsuit^{\spadesuit,\bar b_{\clubsuit}}}\Bigg],
$$
where
$$
f(u,v)=\Pr\sbra{\text{Bin}\pbra{u,\frac12}>v}=2^{-u}\sum_{i>v}\binom ui.
$$
We choose $b_{\Bmat}\in\bin$ with the largest decrease, which must be non-negative.

To speed-up the selection, we pre-process and truncate $f(u,v)$ to the highest $\log^2n$ significant bits. 
Naturally, the truncation can introduce some errors, particularly, the best truncated $b_{\Bmat}$ can increase the expectation.
However, because the fluctuation is sufficiently small, i.e., bounded by $O\pbra{2^{-\log^2n}}$, the errors accumulate as an insignificant $o(1)$ term.
\end{proof}

\Cref{lem:traversal} below presents a simple construction of the row-traversal sequence. 
Although one can further improve the length of the sequence, the achieved order is asymptotically tight and sufficient for our purpose.
Thus, in favor of the simplicity of the presentation, we won't pursue the optimal parameter for the construction.

\begin{lemma}[Row-Traversal Sequence]\label{lem:traversal}
There is an $O\pbra{\poly(k)2^k}$-time algorithm to generate a row-traversal sequence of length $3\cdot 2^{k-1}-k+1$.
\end{lemma}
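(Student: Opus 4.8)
The plan is to give an explicit construction based on the powers of a \emph{Singer cycle}. Fix a primitive polynomial $f$ of degree $k$ over $\Fbb_2$ (one exists for every $k\ge 1$) and let $\Cmat\in\GL(k,2)$ be its companion matrix. Under the standard identification of the column space $\Fbb_2^k$ with $\Fbb_2[x]/(f)\cong\Fbb_{2^k}$, the matrix $\Cmat$ acts as multiplication by the residue $\bar x$, a generator of $\Fbb_{2^k}^{\times}$; hence $\Cmat$ has multiplicative order exactly $2^k-1$, and $\cbra{\Cmat^t v : 1\le t\le 2^k-1}=\Fbb_2^k\setminus\cbra{0}$ for every nonzero $v$. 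I claim the sequence $\Cmat,\Cmat^2,\ldots,\Cmat^{2^k-1}$ (whose last term is $\Imat$) is row-traversal. Indeed $\Xmat_T=\Imat$ because $\Cmat$ has order $2^k-1$, and the $p$-th row of $\Cmat^t$ equals $e_p^\top\Cmat^t=\pbra{(\Cmat^\top)^t e_p}^\top$, where $e_p$ is the $p$-th standard basis vector. Since every matrix is similar to its transpose, write $\Cmat^\top=\Pmat\Cmat\Pmat^{-1}$; then $\cbra{(\Cmat^\top)^t e_p : 1\le t\le 2^k-1}=\Pmat\cdot\cbra{\Cmat^t(\Pmat^{-1}e_p):1\le t\le 2^k-1}=\Pmat\pbra{\Fbb_2^k\setminus\cbra{0}}=\Fbb_2^k\setminus\cbra{0}$, using that $\Pmat^{-1}e_p\ne 0$. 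So every row of the sequence sweeps over all of $\Fbb_2^k\setminus\cbra{0}$, as required.

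This yields a row-traversal sequence of length $2^k-1$, which is at most $3\times 2^{k-1}-k+1$ since $2^{k-1}-k+2\ge 0$ for all $k\ge 1$; to hit the stated length exactly I would simply prepend $2^{k-1}-k+2$ copies of $\Imat$, which affects neither $\Xmat_T=\Imat$ nor the set of row values attained. For the running time I would first factor $2^k-1$ by trial division up to $\sqrt{2^k-1}<2^{k/2}$, costing $\poly(k)\cdot 2^{k/2}$ time; then iterate over the at most $2^k$ monic degree-$k$ polynomials $f$ with nonzero constant term and test, via fast modular exponentiation, whether $\bar x$ has order $2^k-1$ in $\Fbb_2[x]/(f)$ (i.e.\ $x^{2^k-1}\equiv 1$ and $x^{(2^k-1)/q}\not\equiv 1\pmod{f}$ for each prime $q\mid 2^k-1$) --- each test is $\poly(k)$ time, and since a unit of order $2^k-1$ forces $f$ to be primitive, the first success is the desired polynomial. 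Finally compute $\Cmat,\ldots,\Cmat^{2^k-1}$ by successive $k\times k$ multiplications in $O(k^3\cdot 2^k)$ time and prepend the identities; the total is $O(\poly(k)\,2^k)$.

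The only genuinely delicate part is the row-versus-column bookkeeping and the passage from $\Cmat$ to $\Cmat^\top$ (which is exactly why the transpose-similarity fact is invoked, to transfer transitivity of $\langle\Cmat\rangle$ on nonzero columns to transitivity of $\langle\Cmat^\top\rangle$); everything else is routine finite-field theory, and the time bound has ample slack --- in particular, factoring $2^k-1$ is affordable precisely because trial division runs in $2^{k/2}\poly(k)\ll 2^k\poly(k)$. If one preferred to avoid primitive polynomials altogether, I would instead build the sequence inductively on $k$ by embedding a row-traversal sequence for $\GL(k-1,2)$ in the top-left block and appending bridging steps that cover the rows and columns involving the $k$-th coordinate; this produces a somewhat longer sequence whose length is still $O(2^k)$, which suffices given the slack above, but the Singer-cycle construction is the cleanest.
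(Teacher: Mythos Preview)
Your proof is correct and takes a genuinely different route from the paper. You build the sequence from powers of a Singer cycle: the companion matrix $\Cmat$ of a degree-$k$ primitive polynomial has multiplicative order $2^k-1$ and (via the transpose-similarity step you give) each row of $\Cmat,\Cmat^2,\ldots,\Cmat^{2^k-1}=\Imat$ sweeps through all of $\Fbb_2^k\setminus\cbra{0}$. This already has length $2^k-1$, which is in fact optimal (each row must hit $2^k-1$ distinct nonzero vectors), and you pad with identities to match the stated length. The running-time analysis---trial-factoring $2^k-1$, scanning for a primitive $f$ by order-testing $\bar x$, then listing the powers---is sound and comfortably within $O(\poly(k)2^k)$.

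The paper instead uses the elementary family of rank-one perturbations $\Tmat_{\bm v}=\Imat\oplus\bm 1\bm v^\top$: row $p$ of $\Tmat_{\bm v}$ is $e_p^\top\oplus\bm v^\top$, so as $\bm v$ ranges over $\Fbb_2^k$ each row position traverses all vectors. Since $\Tmat_{\bm v}$ is singular precisely when $\bm v^\top\bm 1=1$, the paper patches those $2^{k-1}$ cases by modifying one row of $\Tmat_{\bm v}$ and, when the displaced row is nonzero, emitting a second invertible matrix carrying it; a short count gives the length $3\cdot 2^{k-1}-k+1$. The trade-off is clear: the paper's construction needs no finite-field theory (no primitive polynomials, no factoring of $2^k-1$) and only elementary rank arguments, at the cost of a sequence roughly $1.5$ times longer; your Singer-cycle construction is cleaner and hits the information-theoretic minimum length, at the cost of importing a little algebraic machinery. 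Either suffices for the application, as the paper itself remarks that the length ``can be further improved'' but the asymptotic order is all that matters.
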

\begin{proof}
Let $\bm1=\bmat{1&\cdots&1}^\top$.
Let $\rank(\cdot)$ denote the function for computing the rank of matrices over $\Fbb_2^{k\times k}$.
Observe that for any $\bm v\in\Fbb_2^k$,
$$
\rank\pbra{\Imat\oplus\bm1\bm v^\top}=\begin{cases}
k-1&\bm v^\top\bm1=1,\\
k&\bm v^\top\bm1=0,
\end{cases}
$$
since for any $\bm x\neq\bm0$,
$$
\bm x^\top\pbra{\Imat\oplus\bm1\bm v^\top}\begin{cases}
=\bm0^\top&\pbra{\bm x=\bm v}\land\pbra{\bm v^\top\bm1=1},\\
\neq\bm0^\top&\text{otherwise}.
\end{cases}
$$

We will use the following algorithm.

\begin{algorithm}[H]
\caption{Row-traversal sequence}\label{alg:traversal}
\ForEach{$\bm v\in\Fbb_2^k$}{
    $\Tmat\gets\Imat\oplus\bm 1\bm v^\top$\;
	\If(\tcc*[f]{$\Tmat$ is not invertible}){$\bm v^\top\bm 1=1$}{
		let $i$ be an arbitrary index that $v_i=1$\;
		$\bm u\gets\Tmat[i,*]$ and change $\Tmat[i,*]$ to make $\Tmat$ invertible\;
		\If(\tcc*[f]{Compensate for replacing $\bm u$}){$\bm u\neq\bm0$}{
			\textbf{output} some $\Vmat\in\GL(k,2)$ that $\Vmat[i,*]=\bm u$
		}
	}
	\textbf{output} $\Tmat$
}
\textbf{output} $\Imat$
\end{algorithm}

Note that any row of $\Imat\oplus\bm1\bm v^\top$ traverses all vectors in $\Fbb_2^k$ as $\bm v$ goes through $\Fbb_2^k$.
Thus, \Cref{alg:traversal} generates a desired sequence.
\end{proof}

\paragraph*{Handling General $n$.} 
For $n$ being general that $\frac n2,\frac12\log n,\frac n{\log n}$ are not necessarily integers, we make several modifications to the algorithms in \Cref{lem:noancillaU} as illustrated in \Cref{fig:generaln}. Recall that we have no ancilla, thus we can not simply extend $n$ to some $n'\geq n$ satisfying those conditions and then claim \Cref{lem:noancillaU} holds for $n$ since it holds for $n'$. The details are as follows.

\begin{figure*}[ht]
\centering
\scalebox{0.42}{\begin{tikzpicture}
    \fill[gray!80] (4.5,3.5) rectangle +(3.5,4.5);
    \fill[gray!80] (4.5,3.5) rectangle +(3.5,0.5);
    \fill[gray!80] (7.5,3.5) rectangle +(0.5,4.5);
    \fill[pattern=north west lines, pattern color=gray!100] (4.5,4) rectangle +(3,4);
    \draw[black,dashed] (0,0) rectangle +(8,8);
    \draw[black,very thick] (0,8) -- (5.6,2.4) (6.4,1.6) -- (8,0);
    \foreach \x in {4.5,6.5}
        \draw[dashed,very thick] (\x,7-\x) rectangle +(1,1);
    \draw[densely dotted,very thick] (7.5,0) rectangle +(0.5,0.5);
    \foreach \x in {5.8,6,6.2}
        \fill[black!80] (\x,8-\x) circle (2pt);
    \foreach \y in {4,5,6,7}
        \draw[black,dashed] (4.5,\y) -- (7.5,\y);
    \draw[black,dashed] (7.5,4) -- (7.5,8);
    \foreach \y in {5.5,6,6.5}
        \fill[black!80] (\y,5.5) circle (2pt);
    \draw[decorate,decoration={brace,amplitude=5pt},xshift=-0.8mm] (4.5,2.5) -- (4.5,3.5) node[midway,xshift=-0.5cm,scale=1.5] {$g$};
    \draw[decorate,decoration={brace,amplitude=5pt},yshift=-0.8mm] (5.5,2.5) -- (4.5,2.5) node[midway,yshift=-0.5cm,scale=1.5] {$g$};
    \draw[decorate,decoration={brace,amplitude=5pt},yshift=0.8mm] (0,8) -- (4.5,8) node[midway,yshift=0.5cm,scale=1.5] {$\lceil n/2\rceil$};
    \draw[decorate,decoration={brace,amplitude=5pt},yshift=0.8mm] (4.5,8) -- (8,8) node[midway,yshift=0.5cm,scale=1.5] {$\lfloor n/2\rfloor$};
    \draw[decorate,decoration={brace,amplitude=4pt},xshift=0.8mm] (8,8) -- (8,7) node[midway,xshift=0.5cm,scale=1.5] {$h$};
    \draw[decorate,decoration={brace,amplitude=4pt},xshift=0.8mm] (8,4) -- (8,3.5) node[midway,xshift=0.5cm,scale=1.5] {$h'$};
    \draw[decorate,decoration={brace,amplitude=4pt},yshift=-0.8mm] (8,3.5) -- (7.5,3.5) node[midway,yshift=-0.4cm,scale=1.5] {$g'$};
    \draw[decorate,decoration={brace,amplitude=4pt},xshift=0.8mm] (8,0.5) -- (8,0) node[midway,xshift=0.5cm,scale=1.5] {$g'$};
    \draw[decorate,decoration={brace,amplitude=4pt},yshift=-0.8mm] (8,0) -- (7.5,0) node[midway,yshift=-0.4cm,scale=1.5] {$g'$};
    
    \begin{scope}[xshift=11cm]
    \fill[gray!80] (4.5,3.5) rectangle +(3.5,0.5);
    \fill[gray!80] (7.5,3.5) rectangle +(0.5,4.5);
    \fill[pattern=north west lines, pattern color=gray!100] (4.5,3.5) rectangle +(3,4);
    \draw[black,dashed] (0,0) rectangle +(8,8);
    \draw[black,very thick] (0,8) -- (5.6,2.4) (6.4,1.6) -- (8,0);
    \foreach \x in {4.5,6.5}
        \draw[dashed,very thick] (\x,7-\x) rectangle +(1,1);
    \draw[densely dotted,very thick] (7.5,0) rectangle +(0.5,0.5);
    \foreach \x in {5.8,6,6.2}
        \fill[black!80] (\x,8-\x) circle (2pt);
    \foreach \y in {4.5,5.5,6.5,7.5}
        \draw[black,dashed] (4.5,\y) -- (7.5,\y);
    \draw[black,dashed] (7.5,3.5) -- (7.5,7.5);
    \foreach \y in {5.5,6,6.5}
        \fill[black!80] (\y,5) circle (2pt);
    \draw[black,dashed] (4.5,3.5) -- (4.5,8) (4.5,3.5) -- (8,3.5);
    \draw[decorate,decoration={brace,amplitude=4pt},xshift=0.8mm] (8,4.5) -- (8,3.5) node[midway,xshift=0.5cm,scale=1.5] {$h$};
    \draw[decorate,decoration={brace,amplitude=5pt},xshift=-0.8mm] (4.5,2.5) -- (4.5,3.5) node[midway,xshift=-0.5cm,scale=1.5] {$g$};
    \draw[decorate,decoration={brace,amplitude=5pt},yshift=-0.8mm] (5.5,2.5) -- (4.5,2.5) node[midway,yshift=-0.5cm,scale=1.5] {$g$};
    \draw[decorate,decoration={brace,amplitude=5pt},yshift=0.8mm] (0,8) -- (4.5,8) node[midway,yshift=0.5cm,scale=1.5] {$\lceil n/2\rceil$};
    \draw[decorate,decoration={brace,amplitude=5pt},yshift=0.8mm] (4.5,8) -- (8,8) node[midway,yshift=0.5cm,scale=1.5] {$\lfloor n/2\rfloor$};
    \draw[decorate,decoration={brace,amplitude=4pt},xshift=0.8mm] (8,8) -- (8,7.5) node[midway,xshift=0.5cm,scale=1.5] {$h'$};
    \draw[decorate,decoration={brace,amplitude=4pt},yshift=-0.8mm] (8,3.5) -- (7.5,3.5) node[midway,yshift=-0.4cm,scale=1.5] {$g'$};
    \draw[decorate,decoration={brace,amplitude=4pt},xshift=0.8mm] (8,0.5) -- (8,0) node[midway,xshift=0.5cm,scale=1.5] {$g'$};
    \draw[decorate,decoration={brace,amplitude=4pt},yshift=-0.8mm] (8,0) -- (7.5,0) node[midway,yshift=-0.4cm,scale=1.5] {$g'$};
    \node[scale=2.3,yshift=-2mm] at (-1,4) {$\Longrightarrow$};
    \end{scope}
    
    \begin{scope}[xshift=22cm]
    \fill[gray!80] (7.5,3.5) rectangle +(0.5,4.5);
    \fill[pattern=north west lines, pattern color=gray!100] (5,4) rectangle +(3,4);
    \draw[black,dashed] (0,0) rectangle +(8,8);
    \draw[black,very thick] (0,8) -- (6.1,1.9) (6.9,1.1) -- (8,0);
    \foreach \x in {5,7}
        \draw[dashed,very thick] (\x,7-\x) rectangle +(1,1);
    \draw[densely dotted,very thick] (4.5,3) rectangle +(0.5,0.5);
    \foreach \x in {5.8,6,6.2}
        \fill[black!80] (\x+0.5,8-0.5-\x) circle (2pt);
    \foreach \y in {4,5,6,7}
        \draw[black,dashed] (5,\y) -- (8,\y);
    \foreach \y in {6,6.5,7}
        \fill[black!80] (\y,5.5) circle (2pt);
    \draw[black,dashed] (4.5,3.5) -- (4.5,8) (4.5,3.5) -- (8,3.5) (5,4) -- (5,8);
        
    \draw[decorate,decoration={brace,amplitude=5pt},xshift=0.8mm] (8,1) -- (8,0) node[midway,xshift=0.5cm,scale=1.5] {$g$};
    \draw[decorate,decoration={brace,amplitude=5pt},yshift=-0.8mm] (8,0) -- (7,0) node[midway,yshift=-0.5cm,scale=1.5] {$g$};
    \draw[decorate,decoration={brace,amplitude=5pt},yshift=0.8mm] (0,8) -- (4.5,8) node[midway,yshift=0.5cm,scale=1.5] {$\lceil n/2\rceil$};
    \draw[decorate,decoration={brace,amplitude=5pt},yshift=0.8mm] (4.5,8) -- (8,8) node[midway,yshift=0.5cm,scale=1.5] {$\lfloor n/2\rfloor$};
    \draw[decorate,decoration={brace,amplitude=4pt},yshift=-0.8mm] (8,3.5) -- (7.5,3.5) node[midway,yshift=-0.4cm,scale=1.5] {$g'$};
    \draw[decorate,decoration={brace,amplitude=4pt},xshift=-0.8mm] (4.5,3) -- (4.5,3.5) node[midway,xshift=-0.5cm,scale=1.5] {$g'$};
    \draw[decorate,decoration={brace,amplitude=4pt},yshift=-0.8mm] (5,3) -- (4.5,3) node[midway,yshift=-0.4cm,xshift=-0.1cm,scale=1.5] {$g'$};
    \draw[decorate,decoration={brace,amplitude=4pt},xshift=0.8mm] (8,8) -- (8,7) node[midway,xshift=0.5cm,scale=1.5] {$h$};
    \draw[decorate,decoration={brace,amplitude=4pt},xshift=0.8mm] (8,4) -- (8,3.5) node[midway,xshift=0.5cm,scale=1.5] {$h'$};
    \node[scale=2.3,yshift=-2mm] at (-1,4) {$\Longrightarrow$};
    \end{scope}
    
    \begin{scope}[yshift=-10cm,xshift=5.5cm]
    \fill[gray!80] (7.5,3.5) rectangle +(0.5,0.5);
    \fill[pattern=north west lines, pattern color=gray!100] (5,3.5) rectangle +(3,4);
    \draw[black,dashed] (0,0) rectangle +(8,8);
    \draw[black,very thick] (0,8) -- (6.1,1.9) (6.9,1.1) -- (8,0);
    \foreach \x in {5,7}
        \draw[dashed,very thick] (\x,7-\x) rectangle +(1,1);
    \draw[densely dotted,very thick] (4.5,3) rectangle +(0.5,0.5);
    \foreach \x in {5.8,6,6.2}
        \fill[black!80] (\x+0.5,8-0.5-\x) circle (2pt);
    \foreach \y in {4.5,5.5,6.5,7.5}
        \draw[black,dashed] (5,\y) -- (8,\y);
    \foreach \y in {6,6.5,7}
        \fill[black!80] (\y,5) circle (2pt);
    \draw[black,dashed] (4.5,3.5) -- (4.5,8) (4.5,3.5) -- (8,3.5) (5,3.5) -- (5,7.5);
        
    \draw[decorate,decoration={brace,amplitude=5pt},xshift=0.8mm] (8,1) -- (8,0) node[midway,xshift=0.5cm,scale=1.5] {$g$};
    \draw[decorate,decoration={brace,amplitude=5pt},yshift=-0.8mm] (8,0) -- (7,0) node[midway,yshift=-0.5cm,scale=1.5] {$g$};
    \draw[decorate,decoration={brace,amplitude=5pt},yshift=0.8mm] (0,8) -- (4.5,8) node[midway,yshift=0.5cm,scale=1.5] {$\lceil n/2\rceil$};
    \draw[decorate,decoration={brace,amplitude=5pt},yshift=0.8mm] (4.5,8) -- (8,8) node[midway,yshift=0.5cm,scale=1.5] {$\lfloor n/2\rfloor$};
    \draw[decorate,decoration={brace,amplitude=4pt},yshift=-0.8mm] (8,3.5) -- (7.5,3.5) node[midway,yshift=-0.4cm,scale=1.5] {$g'$};
    \draw[decorate,decoration={brace,amplitude=4pt},xshift=-0.8mm] (4.5,3) -- (4.5,3.5) node[midway,xshift=-0.5cm,scale=1.5] {$g'$};
    \draw[decorate,decoration={brace,amplitude=4pt},yshift=-0.8mm] (5,3) -- (4.5,3) node[midway,yshift=-0.4cm,xshift=-0.1cm,scale=1.5] {$g'$};
    \draw[decorate,decoration={brace,amplitude=4pt},xshift=0.8mm] (8,7.5) -- (8,6.5) node[midway,xshift=0.5cm,scale=1.5] {$h$};
    \draw[decorate,decoration={brace,amplitude=4pt},xshift=0.8mm] (8,4) -- (8,3.5) node[midway,xshift=0.5cm,scale=1.5] {$h'$};
    \node[scale=2.3,yshift=-2mm] at (-1,4) {$\Longrightarrow$};
    \end{scope}
    
    \begin{scope}[xshift=16.5cm,yshift=-10cm]
    \draw[black,dashed] (0,0) rectangle +(8,8);
    \draw[black,very thick] (0,8) -- (8,0);
    \node[scale=2.3,yshift=-2mm] at (-1,4) {$\Longrightarrow$};
    \end{scope}
\end{tikzpicture}}
\caption{Main algorithm for general $n$.}\label{fig:generaln}
\end{figure*}

In the \textit{Step 1 (Recursion)}, denote $\Mmat$ as $\bmat{\Mmat_1&\Amat\\&\Mmat_2}$ where $\Mmat_1$ is of size $\ceilbra{ \frac{n}{2}} \times \ceilbra{ \frac{n}{2}}$, $\Mmat_2$ is of size $\floorbra{ \frac{n}{2} } \times \floorbra{ \frac{n}{2} }$, and $\Amat$ is of size $\ceilbra{ \frac{n}{2}} \times \floorbra{ \frac{n}{2} }$. As usual, after the simultaneous recursion on $\Mmat_1,\Mmat_2$, the upper-right sub-matrix is again $\Mmat_1^{-1}\Amat$.

Note that now $\Mmat_1^{-1}\Amat$ is not necessarily a square matrix, and $\frac{1}{2}\log n,\frac{n}{\log n}$ may not be integers. 
Therefore, let
$$
g=\ceilbra{ \frac{1}{2}\log n },\quad
g'=\floorbra{ \frac{n}{2} } \mod g,
\quad\text{and}\quad
h=\floorbra{ \frac{\ceilbra{ \frac{n}{2} }}{g} },\quad
h'=\ceilbra{ \frac{n}{2} } \mod g. 
$$ 
In \textit{Step 2 (Find Random Layover)}, We divide $\Mmat_1^{-1}\Amat$ as $\bmat{\Amat_1^\top&\cdots&\Amat^\top_{g+1}}^\top$, where each $\Amat_i, i\in[g]$ has $h$ rows and $\Amat_{g+1}$ has $h'$ rows.

As in the first sub-figure of \Cref{fig:generaln}, $\Mmat_1^{-1}\Amat$ can be viewed as one regular part (the part under slash, i.e., the upper-left $(\ceilbra{\frac n2}-h')\times(\floorbra{\frac n2}-g')$ sub-matrix) and one corner by it. 
We now eliminate the regular part using our original algorithm. 
Specifically, in \textit{Step 3 (Generate Row-Traversal Sequence)} we set $c=g$, and view the bottom-right $\Imat_{\floorbra{n/2}}$ as many $\Imat_g$'s along with one $\Imat_{g'}$ which we do not use for now. 
Then we generate the layover $\Bmat,\Cmat$'s for the regular part, and perform \textit{Step 4 (First Traverse)} and \textit{Step 5 (Second Traverse)} to eliminate it to zero. 
Here we note that the derandomization (\Cref{lem:derandom}), originally proved for square matrices and special $n$, can be easily adapted here.

To eliminate the corner part, we simply use $3$ shifts of regular parts as illustrated in \Cref{fig:generaln}. More precise, we define
\begin{itemize}
\item \texttt{DA-1}/\texttt{DA-2}/\texttt{DA-3}/\texttt{DA-4} to, respectively, view the upper-left/bottom-left/upper-right/bottom-right $(\ceilbra{\frac n2}-h')\times(\floorbra{\frac n2}-g')$ sub-matrix of $\Mmat_1^{-1}\Amat$ as the regular part.
\item \texttt{DI-1} to view the bottom-right $\Imat_{\floorbra{n/2}}$ as many $\Imat_g$'s along with one $\Imat_{g'}$.
\item \texttt{DI-2} to view the bottom-right $\Imat_{\floorbra{n/2}}$ as one $\Imat_{g'}$ along with many $\Imat_g$'s.
\end{itemize}
Then in \Cref{fig:generaln},
\begin{itemize}
\item sub-figure 1 to 2 is, what we detailedly described above, the combination of \texttt{DA-1} and \texttt{DI-1};
\item sub-figure 2 to 3 is the combination of \texttt{DA-2} and \texttt{DI-1};
\item sub-figure 3 to 4 is the combination of \texttt{DA-3} and \texttt{DI-2};
\item sub-figure 4 to 5 is the combination of \texttt{DA-4} and \texttt{DI-2}.
\end{itemize}

To analyze the performance of the algorithm for general $n$, we rewrite the recursion for depth (i.e., $d(n)$) and running time (i.e., $T(n)$):
\begin{align*}
&d(n)\leq
\underbrace{\max\cbra{d\pbra{\ceilbra{ \frac{n}{2} }},d\pbra{\floorbra{\frac n2}}}}_\text{recursion}+
\underbrace{4\cdot}_\text{4 shifts}
\underbrace{O\pbra{\frac n{\log n}}}_\text{eliminate the regular part}
=O\pbra{\frac n{\log n}}\\
&\text{and}\quad T(n)\le T\pbra{\ceilbra{\frac n2}}+T\pbra{\floorbra{\frac n2}}+\widetilde O\pbra{n^\omega}=\widetilde O\pbra{n^\omega}.
\end{align*}

\paragraph*{Extension to General Domain.}
The techniques above for proving \Cref{lem:noancillaMatrix} can be extended to general $\Mmat\in\GL(n,q)$ to establish \Cref{thm:FfieldParallel}.

\begin{proof}[Proof of \Cref{thm:FfieldParallel}]
The proof is almost identical except that each $\log$ is now replaced with $\log_q$.
We only note that the change to \Cref{lem:traversal}: For $\GL(n,q)$, the length of the row-traversal sequence becomes $(q+1)q^{k-1}-k+1$.
\end{proof}

\subsection{Topological Structures in CNOT-Circuit Synthesis}
\label{sec:TopologicalStructures}

A fundamental problem in parallel CNOT-circuit synthesis~---~with or without ancillae~---~is to characterize the impact of circuits' topological structures on the size-depth trade-off.
In the asymptotic space-depth trade-off of the previous section, CNOT circuits are compressed essentially as an invertible matrix in $\GL(n,2)$.
There, we have shown that any $n$-qubit CNOT circuit can be transformed into an equivalent one with $O(n/\log n)$ depth without ancillae. 
To exploit topological structures in circuit optimization, the circuit layout itself becomes a part of the input to synthesis algorithms.

In the following, we use a basic family of CNOT circuits to illustrate that the topological details of CNOT circuits can be effectively used.
This family of the CNOT circuits has {\em tree structures}, and in this section, we will consider two types of tree structures. Consider a {\em tree} $\Tcal$ with $n$ nodes, in which each node $i$ corresponds to a qubit $|x_i\rangle$ in the circuit, and each edge $i\rightarrow j$ corresponds to a CNOT gate with control qubit $\ket{x_i}$ and target qubit $\ket{x_j}$, denoted as $\Rsfmat(i,j)$. The first tree structure is out-arborescence where the direction of each edge in the tree is from the node's parent to the node. The order of the CNOT gate corresponds to the breadth-first search from the root. The second tree structure is the opposite of the first one. It is in-arborescence where the direction of each edge in the tree is from the node to its parent. The order of the CNOT gate corresponds to the reverse order of breadth-first search from the root. See \Cref{fig:outtreeexample} and \Cref{fig:intreeexample} for the illustration examples. 

\begin{figure*}[ht]
\centering
\scalebox{1}{\begin{tikzpicture}[cc/.style={draw,circle,inner sep=1.5pt}, node distance=1cm]
    \node (q1) at (0,0) {$1$};
    \node (q2) [below left=of q1,xshift=5mm] {$2$};
    \node (q3) [below right=of q1,xshift=-5mm] {$3$};
    \node (q4) [below left=of q2,xshift=5mm] {$4$};
    \node (q5) [below right=of q2,xshift=-5mm] {$5$};
    \draw[->,>=stealth,very thick] (q1) to (q2);
    \draw[->,>=stealth,very thick] (q1) to (q3);
    \draw[->,>=stealth,very thick] (q2) to (q4);
    \draw[->,>=stealth,very thick] (q2) to (q5);

    \begin{scope}[xshift=2.6cm,yshift=-1.4cm]
    \node[scale=2] at (0,0) {$\iff$};
    \node at (4.3,0) {
        \Qcircuit @C=2em @R=1.2em {
            \lstick{\ket{x_1}} & \ctrl{1} & \ctrl{2} & \qw & \qw & \qw\\
            \lstick{\ket{x_2}} & \targ & \qw & \ctrl{2} & \ctrl{3} & \qw\\
            \lstick{\ket{x_3}} & \qw & \targ & \qw & \qw & \qw\\
            \lstick{\ket{x_4}} & \qw & \qw & \targ & \qw & \qw\\
            \lstick{\ket{x_5}} & \qw & \qw & \qw & \targ & \qw 
        }
    };
    \end{scope}
\end{tikzpicture}}
\caption{An example of CNOT out-arborescence and its corresponding circuit.}\label{fig:outtreeexample}
\end{figure*}

\begin{figure*}[ht]
\centering
\scalebox{1}{\begin{tikzpicture}[cc/.style={draw,circle,inner sep=1.5pt}, node distance=1cm]
    \node (q1) at (0,0) {$1$};
    \node (q2) [below left=of q1,xshift=5mm] {$2$};
    \node (q3) [below right=of q1,xshift=-5mm] {$3$};
    \node (q4) [below left=of q2,xshift=5mm] {$4$};
    \node (q5) [below right=of q2,xshift=-5mm] {$5$};
    \draw[->,>=stealth,very thick] (q2) to (q1);
    \draw[->,>=stealth,very thick] (q3) to (q1);
    \draw[->,>=stealth,very thick] (q4) to (q2);
    \draw[->,>=stealth,very thick] (q5) to (q2);

    \begin{scope}[xshift=2.6cm,yshift=-1.4cm]
    \node[scale=2] at (0,0) {$\iff$};
    \node at (4.3,0) {
        \Qcircuit @C=2em @R=1.2em {
            \lstick{\ket{x_1}} & \qw & \qw & \targ & \targ & \qw\\
            \lstick{\ket{x_2}} & \targ & \targ & \qw & \ctrl{-1} & \qw\\
            \lstick{\ket{x_3}} & \qw & \qw & \ctrl{-2} & \qw & \qw\\
            \lstick{\ket{x_4}} & \qw & \ctrl{-2} & \qw & \qw & \qw\\
            \lstick{\ket{x_5}} & \ctrl{-3} & \qw & \qw & \qw & \qw 
        }
    };
    \end{scope}
\end{tikzpicture}}
\caption{An example of CNOT in-arborescence and its corresponding circuit.}\label{fig:intreeexample}
\end{figure*}

We call such tree structure $\Tcal$ {\em a CNOT out-arborescence} (resp. {\em a CNOT in-arborescence}) and the corresponding CNOT circuit $\Mmat_\Tcal$ {\em an out-arborescence-based circuit} (resp. {\em an in-arborescence-based circuit}).
It is oblivious that not all CNOT circuits are arborescence-based circuits. For two equivalent CNOT circuits, it is also possible that one of them is arborescence-based while the other one is not. 

The following theorem gives an efficient method for generating an equivalent $O(\log ^2 n)$-depth CNOT circuit for any $n$-qubit arborescence-based CNOT circuit $\Mmat_\Tcal$.

\begin{theorem}[Parallel Synthesis of CNOT Arborescences]\label{thm:ParCNOTTree}
For any CNOT arborescence $\Tcal$ with $n$ nodes (either in-arborescence or out-arborescence), the $n$-qubit CNOT circuit $\Mmat_\Tcal$ can be efficiently parallelized to $O(\log^2 n)$ depth without ancillae.
\end{theorem}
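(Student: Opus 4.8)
The plan is to prove \thm{thm:ParCNOTTree} by a heavy-path decomposition of $\Tcal$, which at each recursion level reduces the task to parallelizing a single ``caterpillar-type'' chain of CNOTs. First I would record the basic locality fact (an easy induction on subtrees): for every node $v$, the gates coming from $T_v$ act only on the qubits in $L_v$, and executing them in postorder leaves $\bigoplus_{\ell\in L_v}x_\ell$ on qubit $i(v)$. In particular, sibling subtrees act on disjoint qubit sets, so their sub-circuits commute, and any gate of $\Tcal$ lying outside $T_v$ touches $L_v$ only through the single qubit $i(v)$.

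Next, let $u_0$ (root), $u_1,\dots,u_t$ be the heavy path (always descend to the child with more leaves) and let $w_0,\dots,w_{t-1}$ be the hanging light children, so the subtrees $T_{w_0},\dots,T_{w_{t-1}}$ are pairwise disjoint and each has $\le n/2$ leaves. Expanding the postorder and sliding the light-subtree gates to the front --- each light block commutes with every heavy-path gate it must pass, since those act on disjoint qubit sets --- I would obtain
$$
\Mmat_\Tcal=\pbra{\Mmat_{u_0}\Mmat_{u_1}\cdots\Mmat_{u_{t-1}}}\cdot\pbra{\prod_{j=0}^{t-1}\Mmat_{T_{w_j}}},
$$
where the second factor is a circuit on pairwise disjoint qubit sets, executable in parallel with depth $\max_j D(|L_{w_j}|)\le D(n/2)$, and the first factor is the ``heavy-path chain'' of $t$ individual CNOTs. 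This gives $D(n)\le D(n/2)+C(n)$, where $C(n)$ is the cost of the chain; since this recursion has depth $O(\log n)$ (each light step at least halves the leaf count), it suffices to show $C(n)=O(\log n)$.

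The crux is the analysis of the heavy-path chain. Unwinding definitions, $\Mmat_{u_j}$ is a CNOT between $i(w_j)$ and $i(u_{j+1})$, and $i(u_{j+1})$ is always one of the distinct ``spine'' qubits $p_0:=i(w_0),\dots,p_{t-1}:=i(w_{t-1}),p_t:=i(u_t)$, the exact one being determined by which heavy-path nodes point to their light child (the \emph{special} nodes). Running the chain in postorder ($j=t-1,\dots,0$), it alternately accumulates a run of light-subtree values into the current spine qubit and then performs one CNOT onto the next special qubit below it; after reordering commuting gates this equals a product of (i) disjoint ``in-place fan-in XORs'', one per gap between consecutive special nodes, and (ii) a suffix-XOR map on the $O(\log n)$ special-plus-endpoint qubits. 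Both pieces are $O(\log n)$-depth and ancilla-free: the fan-ins via an up-sweep that accumulates into the target followed by a cleanup pass (the reverse up-sweep with the steps touching the target removed); the suffix-XOR via a Brent--Kung-style up-sweep/down-sweep, each layer of which is a legal parallel row-elimination. Hence $C(n)=O(\log n)$ and $D(n)=O(\log^2 n)$.

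I expect the main obstacle to be bookkeeping rather than any single idea: tracking which light blocks sit before/after which heavy-path gates in the postorder (to license the pull-to-front commutation), and pinning down the special-node/spine combinatorics so that the heavy-path chain is exactly ``disjoint fan-ins followed by a suffix-XOR''. A secondary technical point is the ancilla-free $O(\log n)$-depth realization of the fan-in and prefix-sum maps: the constraint that a parallel row-elimination layer must use $2t$ \emph{distinct} wires rules out Hillis--Steele-style updates and forces the Brent--Kung form (each layer a disjoint matching of CNOTs) together with an explicit uncompute pass to restore the working wires.
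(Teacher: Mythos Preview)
Your heavy-path argument is correct in outline and yields the same $O(\log^2 n)$ bound, but it is a genuinely different route from the paper. The paper invokes Miller--Reif parallel tree contraction: it alternates a \textsc{Rake} step (evaluate all internal nodes both of whose children are leaves, one parallel row-elimination layer) with a \textsc{Compress} step (replace every maximal tree path by a single edge, using the $O(\log n)$-depth prefix-$\oplus$ circuit of \clm{clm:prefixCir}); $O(\log n)$ rounds of this contract the tree to its root. Your approach instead peels off one heavy path per recursion level, so you get $O(\log n)$ levels from the heavy/light split rather than from the Rake--Compress schedule. Both approaches ultimately bottom out in the same primitive---an ancilla-free $O(\log n)$-depth prefix/suffix-$\oplus$---but yours is more self-contained (no appeal to the tree-contraction literature), while the paper's is more modular and does not require the fan-in/suffix-XOR bookkeeping you flag as the main obstacle.

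One correction: your claim that the heavy-path chain yields a suffix-XOR on ``$O(\log n)$ special-plus-endpoint qubits'' is wrong. The special nodes are those $u_j$ whose $L/R$ label points to the light child; since the labels are arbitrary and independent of which child is heavy, every node on the heavy path can be special, so there may be as many as $t+1\le n$ marked qubits. Fortunately this does not damage the argument: the suffix-XOR on up to $n$ qubits is still $O(\log n)$ depth by the Brent--Kung construction you already cite, so $C(n)=O(\log n)$ and $D(n)\le D(n/2)+O(\log n)=O(\log^2 n)$ as you claim. Just remove the erroneous $O(\log n)$ count.
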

\Cref{thm:ParCNOTTree} can be obtained by applying Miller and Reif's parallel-tree-contraction technique ~\cite{miller1985parallel,miller1989parallel,miller1991parallel}. 
First, we consider three specific arborescence-based CNOT circuits which can be parallelized to $O(\log n)$ depth. See \Cref{prop:path}, \Cref{prop:inarborescence} and \Cref{prop:outarborescence}.
We then use these constructions as well as Miller and Reif's parallel-tree-contraction technique to obtain the proof of \Cref{thm:ParCNOTTree}.

\begin{proposition}\label{prop:path}
For the circuit $\Mmat$ based on the CNOT path in \Cref{fig:path}, we have parallel row-elimination matrices $\Rmat_1,\ldots, \Rmat_d$ such that $\Rmat_d\cdots \Rmat_{1} = \Mmat$ and $d=O(\log n)$.
\end{proposition}

\begin{figure*}[ht]
\centering
\scalebox{1}{\begin{tikzpicture}[cc/.style={draw,circle,inner sep=1.5pt}, node distance=1cm]
    \node (q1) at (0,0) {$1$};
    \node (q2) [right=of q1] {$2$};
    \node (qc) [right=of q2] {$\cdots$};
    \node (qn) [right=of qc] {$n$};
    \draw[->,>=stealth,very thick] (q1) to (q2);
    \draw[->,>=stealth,very thick] (q2) to (qc);
    \draw[->,>=stealth,very thick] (qc) to (qn);
\end{tikzpicture}}
\caption{A CNOT path.}\label{fig:path}
\end{figure*}

\begin{proof}
The corresponding matrix of circuit based on CNOT tree in \Cref{fig:path} is 
$$
\Mmat =\bmat{
1&&\\ 
1& 1& & \\
\vdots&\vdots&\ddots&\\
1 & 1&\cdots&1}.
$$
We focus on the case when $n$ is a perfect power of $2$.
In this case, let $d=(2\log n-1)$.
We use the following process to construct parallel row-elimination matrices:
\begin{itemize}
\item In layer $j\in[\log n]$, perform $\Rsfmat\pbra{2^j k + 2^{j - 1}, 2^{j}k + 2^{j}}$ for $0\leq k < n/2^j$.
\item In layer $(2\log n - j)$ with $j\in[\log n - 1]$, perform $\Rsfmat\pbra{2^j k , 2^{j}k + 2^{j-1}}$ for $0 < k < n/2^j$.
\end{itemize}
This construction can be naturally generalized to any integer $n > 0$.
\end{proof}

\begin{proposition}\label{prop:inarborescence}
For the circuit $\Mmat$ based on the CNOT in-arborescence $\Tcal$ in \Cref{fig:inarborescence}, we have parallel row-elimination matrices $\Rmat_1,\ldots, \Rmat_d$ such that $\Rmat_d\cdots \Rmat_{1} = \Mmat$ and $d=O(\log n)$.
\end{proposition}

\begin{figure*}[ht]
\centering
\scalebox{1}{\begin{tikzpicture}[cc/.style={draw,circle,inner sep=1.5pt}, node distance=1cm]
    \node (q1) at (0,0) {$1$};
    \node (qc) [below =of q1] {$\cdots$};
    \node (q3) [left=of qc] {$3$};
    \node (q2) [left=of q3] {$2$};
    \node (qnm1) [right=of qc] {$n-1$};
    \node (qn) [right=of qnm1] {$n$};
    \draw[->,>=stealth,very thick] (q2) to (q1);
    \draw[->,>=stealth,very thick] (q3) to (q1);
    \draw[->,>=stealth,very thick] (qnm1) to (q1);
    \draw[->,>=stealth,very thick] (qn) to (q1);
\end{tikzpicture}}
\caption{A special CNOT in-arborescence.}\label{fig:inarborescence}
\end{figure*}

\begin{proof}
The corresponding matrix of circuit based on CNOT in-arborescence in \Cref{fig:inarborescence} is 
$$
\Mmat =\bmat{
1&1&1&\cdots&1\\ 
0&1&0&\cdots&0\\
\vdots&\vdots&\vdots&\ddots&\vdots\\
0 & 0&0&\cdots&1}.
$$
We focus on the case when $n-1$ is a perfect power of $2$.
In this case, let $d=2\log (n-1)+1$.
We use the following process to construct parallel row-elimination matrices:
\begin{itemize}
\item In layer $j\in[\log (n-1)]$, perform $\Rsfmat\pbra{2^j k + 2^{j - 1}+1, 2^{j}k + 2^{j}+1}$ for $0\leq k < (n-1)/2^j$.
\item In layer $\log (n-1)+1$, perform $\Rsfmat\pbra{n, 1}$.
\item In layer $(2\log(n-1)+2 - j)$ with $j\in[\log (n - 1)]$, perform $\Rsfmat\pbra{2^j k + 2^{j - 1}+1, 2^{j}k + 2^{j}+1}$ for $0\leq k < (n-1)/2^j$.
\end{itemize}
This construction can be naturally generalized to any integer $n > 0$.
\end{proof}

\begin{proposition}\label{prop:outarborescence}
For the circuit $\Mmat$ based on the CNOT out-arborescence $\Tcal$ in \Cref{fig:outtree}, we have parallel row-elimination matrices $\Rmat_1,\ldots, \Rmat_d$ such that $\Rmat_d\cdots \Rmat_{1} = \Mmat$ and $d=O(\log n)$.
\end{proposition}

\begin{figure*}[ht]
\centering
\scalebox{1}{\begin{tikzpicture}[cc/.style={draw,circle,inner sep=1.5pt}, node distance=1cm]
    \node (q1) at (0,0) {$1$};
    \node (qc) [below =of q1] {$\cdots$};
    \node (q3) [left=of qc] {$3$};
    \node (q2) [left=of q3] {$2$};
    \node (qnm1) [right=of qc] {$n-1$};
    \node (qn) [right=of qnm1] {$n$};
    \draw[->,>=stealth,very thick] (q1) to (q2);
    \draw[->,>=stealth,very thick] (q1) to (q3);
    \draw[->,>=stealth,very thick] (q1) to (qnm1);
    \draw[->,>=stealth,very thick] (q1) to (qn);
\end{tikzpicture}}
\caption{A special CNOT out-arborescence.}\label{fig:outtree}
\end{figure*}

\begin{proof}
The corresponding matrix of circuit based on CNOT out-arborescence in \Cref{fig:outtree} is 
$$
\Mmat =\bmat{
1&0&0&\cdots&0\\ 
1&1&0&\cdots&0\\
1&0&1&\cdots&0\\
\vdots&\vdots&\vdots&\ddots&\vdots\\
1 & 0&0&\cdots&1}.
$$
We focus on the case when $n-1$ is a perfect power of $2$.
In this case, let $d=2\log (n-1)+1$.
We use the following process to construct parallel row-elimination matrices:
\begin{itemize}
\item In layer $j\in[\log (n-1)]$, perform $\Rsfmat\pbra{2^j k + 2, 2^{j}k + 2^{j-1}+2}$ for $0\leq k < (n-1)/2^j$.
\item In layer $\log (n-1)+1$, perform $\Rsfmat\pbra{1, 2}$.
\item In layer $(2\log(n-1)+2 - j)$ with $j\in[\log (n - 1)]$, perform $\Rsfmat\pbra{2^j k + 2, 2^{j}k + 2^{j-1}+2}$ for $0\leq k < (n-1)/2^j$.
\end{itemize}
This construction can be naturally generalized to any integer $n > 0$.
\end{proof}

We now go back to prove \Cref{thm:ParCNOTTree}.
\begin{proof}[Proof of \Cref{thm:ParCNOTTree}]

We will focus on the result that for any CNOT in-arborescence $\Tcal$ with $n$ nodes, the $n$-qubit CNOT circuit $\Mmat_\Tcal$ can be efficiently parallelized to $O(\log^2 n)$ depth without ancillae. By this result, if we reverse the order of all CNOT gates and also change the control direction in each CNOT gate, we can parallelize any out-arborescence-based circuit within  $O(\log^2 n)$ depth without ancillae. We can also use very similar method for the in-arborescence case and the help of \Cref{prop:outarborescence} to obtain the same result. 

We use the following classic result of Miller and Reif~---~known as {\em parallel tree contraction}~---~in parallel algorithm design \cite{miller1985parallel,miller1989parallel,miller1991parallel}.
Miller and Reif introduced two parallel abstract operations on trees:
\begin{itemize}
\item {\sc Rake} - removing all leaves from the tree.
\item {\sc Compress} - replacing every maximal tree path by a tree path with one-half the length. Here, a tree path is a chain $v_1, v_2,\ldots, v_k$ where $v_{i-1}$ is the only child of $v_i$, for any $2\leq i\leq k$.
\end{itemize}
They proved that $O(\log n)$ rounds of ``first {\sc Rake} then {\sc Compress}'' reduce any rooted tree to its root.

Here, we will cast this rake-and-compress technique for our CNOT trees. We use the following two operations:
\begin{itemize}
\item {\sc CNOT-Rake} - removing all leaves from our in-arborescence. 
\item {\sc CNOT-Compress} - replacing every maximal path by an edge.
\end{itemize}
We firstly explain in details about these two operations. 

For {\sc CNOT-Rake} operation, suppose in-arborescence before the operation is $\Tcal$ and the in-arborescence after the operation is $\Tcal'$. Suppose in $\Tcal$, there are $k$ leaves $i_1, i_2,\ldots, i_k$ and their corresponding parents are $j_1, j_2,\ldots, j_k$. It is oblivious that 
$i_1, i_2,\ldots, i_k$ are different, and any $j_s$ is not a leaf. Note that it is possible that $j_s=j_t$ for some $s\neq t$. By the definition of in-arborescence-based circuit, we have $\Mmat_{\Tcal}= \Mmat_{\Tcal'}\cdot \prod_{s=1}^k \Rsfmat(i_s, j_s)$. We partition $(i_1,j_1), (i_2,j_2),\ldots, (i_k,j_k)$ into several groups where each group has the same $j$. Since each group is independent with other groups, then by \Cref{prop:inarborescence} the circuit $\prod_{s=1}^k \Rsfmat(i_s, j_s)$ can be parallelized within $O(\log n)$ depth. 

\begin{figure*}[ht]
\centering
\scalebox{1}{\begin{tikzpicture}[cc/.style={draw,circle,inner sep=1.5pt}, node distance=1cm]
    \node (q1) at (0,0) {$1$};
    \node (q2) [right=of q1] {$2$};
    \node (qc) [right=of q2] {$\cdots$};
    \node (qn) [right=of qc] {$k$};
    \draw[->,>=stealth,very thick] (q1) to (q2);
    \draw[->,>=stealth,very thick] (q2) to (qc);
    \draw[->,>=stealth,very thick] (qc) to (qn);
    \node (t1) [left=of q1] {$\Tcal_1$};
    \node (t2) [below=of t1] {$\Tcal_2$};
    \node (v) [right=of qn] {$v$};
    \node (t3) [below=of qn] {$\Tcal_3$};
    \draw[->,>=stealth,very thick] (t1) to (q1);
    \draw[->,>=stealth,very thick] (t2) to (q1);
    \draw[->,>=stealth,very thick] (qn) to (v);
    \draw[->,>=stealth,very thick] (t3) to (v);
    \node (cap) [below=of qc,yshift=-1cm] {$\text{In-arborescence }\Tcal$};
    \begin{scope}[xshift=3cm,yshift=-3.5cm]
        \node[scale=2] at (0,0) {$\Downarrow$};
    \end{scope}
    \begin{scope}[yshift=-4.5cm]
        \node (q11) at (0,0) {$1$};
        \node (qn1) [right=of q11] {$k$};
        \draw[->,>=stealth,very thick] (q11) to (qn1);
        \node (t11) [left=of q11] {$\Tcal_1$};
        \node (t21) [below=of t11] {$\Tcal_2$};
        \node (v1) [right=of qn1] {$v$};
        \node (t31) [below=of qn1] {$\Tcal_3$};
        \draw[->,>=stealth,very thick] (t11) to (q11);
        \draw[->,>=stealth,very thick] (t21) to (q11);
        \draw[->,>=stealth,very thick] (qn1) to (v1);
        \draw[->,>=stealth,very thick] (t31) to (v1);
        \node (cap1) [below=of t31,yshift=5mm] {$\text{In-arborescence }\Tcal'$};
    \end{scope}
    \begin{scope}[xshift=3.7cm,yshift=-5.5cm]
        \node[scale=2] at (0,0) {$+$};
    \end{scope}
    \begin{scope}[xshift=4.5cm,yshift=-5cm]
        \node (q12) at (0,0) {$1$};
        \node (q22) [right=of q12] {$2$};
        \node (qc2) [right=of q22] {$\cdots$};
        \node (qn2) [right=of qc2] {$k-1$};
        \draw[->,>=stealth,very thick] (q12) to (q22);
        \draw[->,>=stealth,very thick] (q22) to (qc2);
        \draw[->,>=stealth,very thick] (qc2) to (qn2);
        \node (cap2) [below=of qc2,yshift=-5mm] {$\text{In-arborescence }\Tcal''$};
    \end{scope}
\end{tikzpicture}}
\caption{Illustration of the {\sc CNOT-Compress} operation.}\label{fig:cnotcompress}
\end{figure*}

For the {\sc CNOT-Compress} operation, at first glance, it is the same as the case in \Cref{prop:path}. But note that we cannot directly apply the proposition since the start point of the maximal path might not be a leaf. See \Cref{fig:cnotcompress} for the illustration of the {\sc CNOT-Compress} operation. The first step is to add qubit $x_2,\ldots, x_{k-1}$ to qubit $x_k$. By \Cref{prop:inarborescence}, this can be done in $O(\log n)$ depth. Then, in the viewpoint of qubit $x_k$, the in-arborescence $\Tcal$ becomes $\Tcal'$. But remember we do not handle qubit $x_2,\ldots, x_{k-1}$. After the in-arborescence $\Tcal'$ is computed, we need to run the second step to compute the path $\Tcal''$. By \Cref{prop:path}, this can be done in $O(\log n)$ depth. More precisely, for the case in \Cref{fig:cnotcompress}, we have $\Mmat_{\Tcal}= \Mmat_{\Tcal''}\cdot \Mmat_{\Tcal'}\cdot \prod_{s=2}^{k-1} \Rsfmat(s, k)$. Since we only consider the maximal path, any node can be involved in at most one path. So for all maximal paths, both the first step and the second step can be done simultaneously. This means we can do one {\sc CNOT-Compress} operation within $O(\log n)$ depth. 

Finally, Miller and Reif's result actually shows that $O(\log n)$ rounds of ``first {\sc CNOT-Rake} then {\sc CNOT-Compress}'' reduce any in-arborescence to its root. Note that for the  {\sc CNOT-Compress} operation, instead of replacing every maximal tree path by a tree path with one-half the length, we actually replace every maximal tree path by an edge.   
\end{proof}

The following is an immediate corollary of \Cref{thm:ParCNOTTree}. Given $k$ CNOT arborescences $\Tcal_1,\ldots, \Tcal_k$, the product of $k$ CNOT arborescences is defined as the circuit $\Mmat = \Mmat_{\Tcal_k}\cdots \Mmat_{\Tcal_2}\Mmat_{\Tcal_1}$.
\begin{corollary}
If an $n$-qubit CNOT circuit can be expressed as the product of $k$ CNOT arborescences, then it can be efficiently transformed into an equivalent CNOT circuit of $O(k\log^2n)$ depth without ancillae.
\end{corollary}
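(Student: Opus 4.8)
The plan is to reduce the statement directly to \thm{thm:ParCNOTTree} by composing the parallel syntheses of the individual trees. Write the given $n$-qubit CNOT circuit $\Ccal$ as $\Mmat_\Ccal = \Mmat_{\Tcal_k}\cdots\Mmat_{\Tcal_2}\Mmat_{\Tcal_1}$, where each $\Tcal_i$ is a proper binary tree with $n$ leaves bijectively labelled by $[n]$ and $\Mmat_{\Tcal_i}$ is the associated CNOT tree as defined above. By \thm{thm:ParCNOTTree}, for each $i\in[k]$ there is an $n$-qubit CNOT circuit $\Ccal_i$, using no ancillae, with $\mathrm{depth}(\Ccal_i)=O(\log^2 n)$ and $\Mmat_{\Ccal_i}=\Mmat_{\Tcal_i}$.

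Next I would concatenate $\Ccal_1,\ldots,\Ccal_k$ in order: apply $\Ccal_1$ first, then $\Ccal_2$, and so on up to $\Ccal_k$. Call the resulting circuit $\Ccal'$. Since in our convention the circuit applied first corresponds to the rightmost factor in the matrix product, we get $\Mmat_{\Ccal'}=\Mmat_{\Ccal_k}\cdots\Mmat_{\Ccal_1}=\Mmat_{\Tcal_k}\cdots\Mmat_{\Tcal_1}=\Mmat_\Ccal$, so $\Ccal'\cong\Ccal$. Because depth is subadditive under sequential composition, $\mathrm{depth}(\Ccal')\le\sum_{i=1}^{k}\mathrm{depth}(\Ccal_i)=k\cdot O(\log^2 n)=O(k\log^2 n)$, and no ancillae are introduced since none of the $\Ccal_i$ uses any.

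There is essentially no obstacle here; the only point worth noting in the bookkeeping is that the hidden constant in the bound of \thm{thm:ParCNOTTree} is uniform over all proper binary trees on $n$ leaves — it depends only on $n$, not on the shape or labelling of $\Tcal_i$ — so that summing over the $k$ layers genuinely yields $O(k\log^2 n)$ rather than a bound growing with some tree-dependent quantity. One might additionally remark that the decomposition of $\Mmat_\Ccal$ into $k$ CNOT trees is assumed to be part of the input; if one only knew that such a decomposition exists, an extra (classical) step would be needed to produce it, but the corollary as stated asserts only the existence of the shallow equivalent circuit.
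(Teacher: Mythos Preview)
Your proposal is correct and is exactly the approach the paper has in mind: the corollary is stated without proof as an immediate consequence of \thm{thm:ParCNOTTree}, obtained by applying the theorem to each of the $k$ trees and concatenating. Your remarks on the uniformity of the hidden constant and on the decomposition being part of the hypothesis are accurate and appropriate.
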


\section{CNOT-Circuit Synthesis: Optimal Ancillae-Depth Trade-Off}\label{sec:withancilla}

In this section, we prove \Cref{thm:upperbound} for the case when $m=\Omega(n)$ (i.e., $m=sn,s=\Omega(1)$). 
Note that for any $s=\Omega(n/\log^2n)$, the bound in \Cref{thm:upperbound} is always $O(\log n)$.
Thus it suffices to consider $s = O(n/\log^2n)$. We assume $n$ is an even power of $2$ so $\sqrt{n}$ and $ \frac{1}{2}\log n$ are integers. For general $n$,
we choose the minimum integer $n'\geq n$ which is an even power of $2$, and embed our $n$-qubit circuit into an $n'$-qubit circuit by adding ancillae. Note that $n'\leq 4 n$ so it does not change our asymptotic results.

We restate \Cref{thm:upperbound} in this case as the following \Cref{thm:CNOTancilla}.

\begin{theorem}[Parallelizing CNOT Circuits]\label{thm:CNOTancilla}
For any $s$ in range $1\leq s\leq O(n/\log^2n)$, any $n$-qubit CNOT circuit can be synthesized into an equivalent circuit of $O\pbra{\frac{n}{s\log n}}$ depth with $(3s+1)n$ ancillae.
Moreover, there is an $\widetilde O(n^{\omega})$-time algorithm to for achieving this parallelization.
\end{theorem}

We use a standard technique in reversible computation to simplify the problem. 
Given arbitrary $\Mmat\in \GL(n,2)$, \Cref{thm:CNOTancilla} aims to construct a CNOT circuit for $\Mmat$ with ancillae, by first constructing two $2n$-qubit $3sn$-ancilla CNOT circuits $\Ccal_1,\Ccal_2$ for $\bmat{\Imat&\\\Mmat&\Imat},\bmat{\Imat&\\\Mmat^{-1}&\Imat}$ respectively.
In other words, for any $\bm x\in \Fbb_{2}^{n}$, $\bm j\in \Fbb_{2}^{n}$, 
\begin{align*}
&\Ccal_1\ket{\bm x}\ket{\bm j}\ket{0}^{\otimes 3sn}=\ket{\bm x}\ket{\bm j\oplus\Mmat\bm x}\ket{0}^{\otimes 3sn},\\
&\Ccal_2\ket{\bm x}\ket{\bm j}\ket{0}^{\otimes 3sn}=\ket{\bm x}\ket{\bm j\oplus \Mmat^{-1}\bm x}\ket{0}^{\otimes 3sn}.
\end{align*}

Starting with $\ket{\bm x}\ket{0}^{\otimes n}\ket{0}^{\otimes 3sn}$, we first apply $\Ccal_1$.
Then, before applying $\Ccal_2$, we permute the first and second $n$ qubits, which can be done in depth $6$ by~\cite{moore2001parallel}.
Call this permutation $\Scal_{n}$.
We get:
\begin{align*}
\ket{\bm x}\ket{0}^{\otimes n}\ket{0}^{\otimes 3sn}&\xrightarrow{\Ccal_1}\ket{\bm x}\ket{\Mmat\bm x}\ket{0}^{\otimes 3sn}\\
& \xrightarrow{\Scal_n}\ket{\Mmat\bm x}\ket{\bm x}\ket{0}^{\otimes 3sn}
\\
&\xrightarrow{\Ccal_2}\ket{\Mmat\bm x}\ket{\bm x\oplus \Mmat^{-1}\Mmat\bm x}\ket{0}^{\otimes 3sn}
= \ket{\Mmat\bm x}\ket{0}\ket{0}^{\otimes 3sn}.
\end{align*}

Based on the observation above together with the equivalence between CNOT circuits and invertible Boolean matrices, to prove \Cref{thm:CNOTancilla} it suffices to construct circuit for $\bmat{\Imat&\\\Mmat&\Imat}$ as in \Cref{lem:M}.

By \Cref{lem:M}, the time complexity to construct $\Ccal_1$ is $\widetilde{O}(n^2)$. 
To construct $ \Ccal_2$, we first need $\widetilde{O}\pbra{n^{\omega}}$ time to compute $\Mmat^{-1}$ by~\cite{Strassen1969,Gall14a}. 
The circuit for $\Scal_n$~\cite{moore2001parallel} is simply using swap gates on the $i$-th and $(i+n)$-th qubits for $i=1,\ldots,n$ which can be constructed in $O(n)$ time.
Thus, the overall time complexity is $\widetilde{O}\pbra{n^{\omega}}$ for 
the construction in \Cref{thm:CNOTancilla}.

\begin{lemma}\label{lem:M}
There is an $\widetilde O(n^2)$-time algorithm such that given $1\leq s \leq O(n/\log^2n)$ and $\Mmat\in \GL(n,2)$, it outputs parallel row-elimination matrices $\Rmat_1,\ldots,\Rmat_d$ where $d = O\pbra{\frac{n}{s\log n}}$, such that
$$
\Rmat_d\cdots\Rmat_1 = \begin{bmatrix}
\Imat & & \\
\Mmat & \Imat&\bm*\\
& &\Imat_{3sn}\\
\end{bmatrix}.
$$
\end{lemma}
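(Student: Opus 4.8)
The plan is to mimic the Four-Russians strategy of Patel--Markov--Hayes, but scaled up so that one ``batch'' eliminates $\Theta(s\log^2 n)$ columns of $\Mmat$ using only $O(s\log n)$ parallel row-elimination matrices, and the $3sn$ ancillae serve as scratch space holding a precomputed additive basis. Concretely, I would work on the $(3s+2)n$-row matrix and first bring the ancilla block from $\Imat_{3sn}$ into a useful form: using the top $n$ identity rows (the ones carrying $\bm x$, which are never disturbed) as sources, I would load into the scratch rows all $\Fbb_2$-linear combinations of small groups of the $n$ coordinates. Splitting $[n]$ into $O(n/\log n)$ blocks of size $\log n$ and, for each block, materializing all $2^{\log n}=n$ nonzero combinations would need $n\cdot(n/\log n)$ scratch rows, too many; so instead I would use the Lovett-style additive-basis construction cited in the introduction~\cite{Lovett17}: for a group of $g=\Theta(\log n)$ coordinates one can build, in $O(g)$ parallel row-elimination steps, a collection of $O(g)$ scratch rows whose subsets additively generate all $2^g$ linear combinations. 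Doing this in parallel across all $O(n/g)=O(n/\log n)$ groups, and over $s$ disjoint copies of the scratch space (this is where the factor $s$ in $3sn$ is spent), costs $O(\log n)$ parallel steps and prepares enough pre-summed rows that any target linear form on $\log^2 n$ coordinates can be produced by XORing only $O(\log n / \log\log n)$ — in fact $O(\log n)$ suffices for our bookkeeping — scratch rows.

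Next comes the main loop. Process the $n$ columns of $\Mmat$ in $O\!\left(\tfrac{n}{s\log^2 n}\right)$ super-batches, each covering $c:=\Theta(s\log^2 n)$ columns. For a fixed super-batch, the submatrix of $\Mmat$ restricted to these $c$ columns has $n$ rows, each row being a vector in $\Fbb_2^{c}$; I want to clear all these entries by adding, to each row~$j$ of the $\Mmat$-block, the unique scratch row already equal to $\sum_{k}M_{jk}$ over those $c$ columns. The difficulty is that many rows may want the \emph{same} scratch row, so a single parallel row-elimination matrix cannot service them all at once. This is exactly the bipartite-matching situation from \sct{sec:noancilla}: form a bipartite graph between the $n$ target rows and the available pre-summed scratch rows, observe (using the spread guaranteed by the additive-basis preparation, or if necessary by replicating each needed sum across the $s$ scratch copies so no scratch row is demanded more than $\widetilde O(1)$ times) that the graph is almost regular of degree $\widetilde O(1)$, and invoke the $1$-factorization of almost-regular bipartite graphs~\cite{ColeOS01,Alon03} to decompose the required additions into $\widetilde O(1)$ matchings, i.e.\ $\widetilde O(1)$ parallel row-elimination matrices. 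Actually, to hit the claimed bound I must be a little more careful: producing the right scratch sums for one super-batch itself needs combining $O(\log n)$ basis rows, which I do with another $O(\log n)$ parallel steps (again matching-scheduled), then one round of $\widetilde O(1)$ matchings does the actual cancellation against $\Mmat$, then I run the scratch-preparation steps \emph{in reverse} to restore the ancillae to $\Imat_{3sn}$ before moving to the next super-batch. Per super-batch cost is therefore $O(\log n)$ parallel row-elimination matrices, and with $O\!\left(\tfrac{n}{s\log^2 n}\right)$ super-batches the total is $O\!\left(\tfrac{n}{s\log n}\right)$, as required. A final observation is that all off-diagonal bookkeeping in the ancilla columns gets absorbed into the $\bm*$ block, so the resulting product is exactly $\bmat{\Imat&&\\ \Mmat&\Imat&\bm*\\ &&\Imat_{3sn}}$.

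For the running time: the additive-basis preparation, the matching decompositions (linear in the number of edges up to polylog factors by~\cite{ColeOS01,Alon03}), and the bookkeeping each touch $\widetilde O(n^2)$ entries in total over all super-batches, and we never invert or multiply general matrices here, so the whole construction is $\widetilde O(n^2)$ time; the $n^\omega$ cost in \thm{thm:CNOTancilla} comes only from forming $\Mmat^{-1}$ outside this lemma.

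I expect the genuine obstacle to be the \emph{load-balancing} step — guaranteeing that each precomputed scratch sum is requested only $\widetilde O(1)$ times (or, dually, that the bipartite demand graph is almost regular) so that the $1$-factorization yields only $\widetilde O(1)$ matchings rather than, say, $\Theta(n/\log n)$. The clean way to force this is to budget the $s$ scratch copies precisely so that heavily-demanded sums are replicated across copies; verifying that $3sn$ ancillae suffice for this replication, for the additive-basis rows, \emph{and} for the uncomputation scratch simultaneously is the delicate accounting that the formal proof will have to nail down. Everything else — the recursion-free batching, the use of \lem{lem:traversal}-style enumeration is not even needed here, and the matching decomposition — is routine given the tools already developed in \sct{sec:noancilla}.
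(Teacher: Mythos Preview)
Your high-level plan matches the paper's exactly: process the columns of $\Mmat$ in $O\pbra{n/(s\log^2 n)}$ super-batches of width $t=s\log^2 n$, handle each batch in $O(\log n)$ parallel row-eliminations via precomputed partial sums stored in the $3sn$ ancilla rows, schedule the row-additions by $1$-factorizing a bounded-degree bipartite graph, and uncompute the scratch afterward. In the paper the per-batch engine is \cor{cor:SParallel}, and \lem{lem:M} is literally proved by invoking it $n/t$ times.

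Where you diverge is inside one batch, and this is precisely the spot you flag as delicate. The paper does \emph{not} build a Lovett-style sparse additive basis, and it does \emph{not} spend the $s$ scratch copies on replicating popular sums. Instead, \cor{cor:SParallel} uses the factor $s$ purely for column parallelism: the $t$ columns are split into $s$ independent $\log^2 n$-column pieces $\Ymat_1,\ldots,\Ymat_s$, each allotted $3n$ ancillae, and $s$ copies of \lem{lem:GenYCol} run simultaneously (their outputs are then summed into the target rows in $O(\log s)$ further steps). Within one copy, the $\log^2 n$ columns are cut into $2\log n$ sub-blocks of width $\tfrac12\log n$, and for each sub-block the paper \emph{fully enumerates} all $\sqrt n$ vectors of $\Fbb_2^{(1/2)\log n}$ (via \lem{lem:GenSparse}). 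Load balancing is then done by frequency-proportional cloning inside that single copy: vector $\bm y^\ell$ in sub-block $k$ is duplicated $\lceil s_{k,\ell}/(2\log n)\rceil$ times, where $s_{k,\ell}$ counts the target rows needing it. This forces both sides of the demand graph to have degree at most $2\log n$, yielding $O(\log n)$ matchings, and the clones fit in $n$ rows because $\sum_{k,\ell}s_{k,\ell}=2n\log n$. Your suggestion to devote the $s$ copies to replication would double-book them (they are already consumed by the $s$-way column parallelism), and ``$O(g)$ scratch rows whose subsets span $\Fbb_2^g$'' is just the unit basis, which by itself does nothing for contention; so the accounting you anticipate as the obstacle is exactly where your sketch needs to be replaced by the paper's enumerate-then-clone mechanism.
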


At the high level, our approach to prove \Cref{lem:M} is the following: 
We partition the columns of $\Mmat$ into blocks of $s\log^2n$ columns.
We show that with the help of $3sn$ additional rows from ancillae, any such $s\log^2 n$ columns of $\Mmat$ can be constructed using $O(\log n)$ parallel Gaussian-elimination matrices.
We then sequentially assemble these blocks to synthesize $\Mmat$.

We first establish a few technical lemmas. We begin our proof with the base case when $s=1$ (to be restated in \Cref{lem:GenYCol}).
Our proof of \Cref{lem:GenYCol} calls on \Cref{lem:GenSparse} below, which provides a subroutine for constructing a slim sparse matrix. 

\begin{lemma}\label{lem:GenSparse} 
There is an $\widetilde O(\sqrt n)$-time algorithm that,
given $t\leq \frac{\sqrt{n}}{2}\log n$ and $\Ymat\in \Fbb_{2}^{\sqrt{n}\times \frac{1}{2}\log n}$ with  $t$ one's,
 outputs $\Rmat_1, \ldots, \Rmat_{d}$ where $d=O(\log n)$, such that 
$$
\Rmat_d\cdots\Rmat_1
=
\begin{bmatrix}
\Imat&&\\
\Ymat&\Imat&\bm*\\
&&\Imat_{t}
\end{bmatrix}.
$$
\end{lemma}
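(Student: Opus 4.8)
The plan is to realize $\Ymat$ column by column (there are only $\frac12\log n$ columns) using the $t$ spare rows as scratch space, and to pay only $O(1)$ parallel row-eliminations per column. Write the ambient matrix as $3\times 3$ blocks: a top $\sqrt n\times\sqrt n$ block that starts as $\Imat$, a middle $\sqrt n\times\sqrt n$ block that starts as $\Imat$ and must become $\Ymat\mid\Imat$ in its left/diagonal part, and a bottom $\Imat_t$ block of scratch rows. The key structural observation is that $\Ymat$ has at most $t$ ones \emph{total}, so if we look at column $\ell$ of $\Ymat$ it has some number $t_\ell$ of ones with $\sum_\ell t_\ell\le t\le\frac{\sqrt n}{2}\log n$; hence on average a column has at most $\sqrt n$ ones, but more importantly the \emph{total} budget of scratch rows ($t$ of them) is enough to host one distinct scratch row per one-entry of $\Ymat$, across all columns simultaneously.

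First I would set up the following scheme. Assign to each one-entry $(p,\ell)$ of $\Ymat$ (i.e. $\Ymat[p,\ell]=1$) a private scratch row from the bottom block; this is a bijection onto a subset of $[t]$ since $\Ymat$ has $\le t$ ones. Each bottom scratch row $r$ currently holds the standard basis vector $\bm e_{n'+r}$ (in the full index space), which contributes nothing to the columns we care about. The idea is: (i) load the appropriate top-block basis vector $\bm e_{i(\ell)}$ into each scratch row assigned to column $\ell$ — but all scratch rows assigned to the \emph{same} column $\ell$ want the \emph{same} source row $i(\ell)$, so a single fan-out from row $i(\ell)$ to all its scratch rows is needed; (ii) fan those scratch rows \emph{into} the target rows $p$ of the middle block so that middle row $p$ accumulates $\bigoplus_{\ell:\Ymat[p,\ell]=1}\bm e_{i(\ell)}$, which is exactly the desired $p$-th row of $\Ymat$; (iii) undo step (i) to restore the scratch rows to $\Imat_t$. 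Steps (i) and (iii) are fan-out/fan-in from $\frac12\log n$ source rows to $\le t$ targets; step (ii) is a fan-in into $\le\sqrt n$ middle rows. Each of these is a bounded-degree-at-the-source routing problem, hence can be layered into $O(\log n)$ parallel row-elimination matrices by a binary-tree (doubling) schedule: e.g. fan-out from one source to $k$ sinks in $\lceil\log k\rceil=O(\log n)$ rounds, and since all $\frac12\log n$ sources act on \emph{disjoint} scratch rows the rounds compose in parallel. Concretely I would: pre-plan the assignment in $\widetilde O(\sqrt n)$ time (there are $\le t=\widetilde O(\sqrt n)$ ones to place); build the $O(\log n)$-round doubling schedule for the fan-out, then one $O(\log n)$-round fan-in into the middle block (this is safe because each middle target row pulls from a set of \emph{distinct} scratch rows, never from another middle row, so no dependency chain arises), then the mirrored $O(\log n)$-round fan-in to clear the scratch block. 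Total depth $d=O(\log n)$; the $\bm*$ in the top-right of the claimed product absorbs whatever the scratch interactions left in the off-diagonal, which is irrelevant since on inputs with the scratch block zeroed it contributes nothing.

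The main obstacle I expect is \emph{disjointness bookkeeping inside a single parallel row-elimination matrix}: by definition a parallel row-elimination matrix requires all $2t'$ participating indices in one layer to be distinct, so I cannot, in one layer, have a row act simultaneously as a control (source) and a target, nor can two eliminations share an endpoint. The doubling schedule respects this within each fan-out/fan-in automatically, but I have to be careful that (a) the $\frac12\log n$ simultaneous fan-outs use pairwise-disjoint scratch-row sets — which holds because the scratch rows are partitioned by one-entries of $\Ymat$ — and (b) when fanning into the middle block, if two one-entries in the same middle row $p$ are processed, their scratch sources differ and the combining tree into row $p$ must still be a valid matching per layer; this forces the fan-in into row $p$ to be \emph{sequential in depth $O(\log t_p)=O(\log n)$}, not $O(1)$, which is why the final bound is $O(\log n)$ rather than $O(1)$. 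Verifying the $\widetilde O(\sqrt n)$ running time is then routine: the schedule has $O(\log n)$ layers, each listing $O(t)=\widetilde O(\sqrt n)$ eliminations, and the assignment/planning is a linear scan over the $\le t$ ones of $\Ymat$.
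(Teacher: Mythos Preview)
Your approach is essentially the paper's: partition the $t$ scratch rows by column, use a doubling schedule to load each group with copies of the corresponding top-block unit vector $\bm e_\ell$, add one copy per one-entry of $\Ymat$ into the middle block, and then reverse the doubling to restore the bottom $\Imat_t$. A few minor slips --- the top identity block has size $\tfrac12\log n$, not $\sqrt n$; the $\bm*$ sits in the middle-right, not the top-right; and the fan-in into a fixed middle row $p$ takes at most $\tfrac12\log n$ layers outright (one per column), so no combining tree is needed and the scratch rows stay untouched during step~(ii) --- do not affect the argument or the $O(\log n)$ depth and $\widetilde O(\sqrt n)$ time bounds.
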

\begin{proof}
For each $j\in\sbra{\frac12\log n}$, we use $\bm e_j$ to denote the vector whose entries are $0$ except for the $j$-th entry.
Let $t_j=\#\{i|\Ymat[i,j]=1\}$. 
By the assumption of the lemma, we have $t=\sum_j t_j$.

We first focus on $j=1$ and assume without loss of generality $t_1 > 0$.
We now describe how to make $t_1$ copies of $\bm e_1$'s on any $t_1$ rows by $O(\log t_1)$ parallel row eliminations.
The construction can be naturally extended to other $j$.
In the construction, we will use the upper-left $\Imat_{\frac12\log n}$.

Let the $t_1$ rows be $a_1, \ldots, a_{t_1}$.
We add the first row (which is $\bm e_1$) to the $a_1$-th row.
We then double the number of $\bm e_1$ by adding the first and $a_1$-th rows to rows $a_2$ and $a_3$ simultaneously.
We will continue this doubling process with a proper last step till the number reaches $t_1$.

Since $\sum_i t_i=t\leq\frac12\sqrt{n}\log n$, we can make $t_i$ copies of $\bm e_i$ independently for all $i$ on the last $t$ rows with $O(\log n)$ parallel row-elimination matrices.

Then, we construct $\Ymat$ on the middle $\sqrt{n}$ rows $b_1,\ldots,b_{\sqrt n}$.
For $i\in\sbra{\sqrt n}$, we add $\bm e_j$'s to the $b_i$-th row if $\Ymat[i,j]=1$, which needs at most $\frac12\log n$ row additions. 
Since there are sufficient copies of $\bm e_j$'s, all $i\in\sbra{\sqrt n}$ are considered simultaneously.

At last, we restore the last $t$ rows by reversing the copy process.
\end{proof}

In the following lemma, corresponding to the case $s=1$, we use $O(\log n)$ parallel row-elimination matrices to construct any given $\log^2 n$ columns.

\begin{lemma}\label{lem:GenYCol}
There is an $\widetilde O(n)$-time algorithm that,
given $\Ymat\in\Fbb_2^{n\times \log^2 n}$, outputs parallel row-elimination matrices $\Rmat_1,\ldots,\Rmat_d$ where $d = O(\log n)$, such that 
$$
\Rmat_d\cdots\Rmat_1
=\begin{bmatrix}
\Imat&&\\
\Ymat&\Imat&\bm*\\
&&\Imat_{2n}
\end{bmatrix}.
$$
\end{lemma}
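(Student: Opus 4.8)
The plan is to partition the $\log^2 n$ columns of $\Ymat$ into $\log n$ blocks, each a slim matrix of the shape handled by \lem{lem:GenSparse}, and process all $\log n$ blocks \emph{in parallel} using disjoint pools of ancilla rows. Concretely, view $\Ymat\in\Fbb_2^{n\times\log^2 n}$ as $\Ymat=\bmat{\Ymat^{(1)}&\cdots&\Ymat^{(\log n)}}$ with each $\Ymat^{(\ell)}\in\Fbb_2^{n\times\log n}$; and chop the $n$ target rows into $\sqrt n$ groups of $\sqrt n$ rows each (assume for cleanliness that $n$ is a perfect square and a power of two, which is the usual divide-and-conquer convention in the paper). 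Then each $\Ymat^{(\ell)}$ restricted to one group of $\sqrt n$ rows is a matrix in $\Fbb_2^{\sqrt n\times\log n}$, i.e.\ two stacked copies of the $\Fbb_2^{\sqrt n\times\frac12\log n}$ format of \lem{lem:GenSparse}. First I would split each such block column-wise into its left and right halves so that \lem{lem:GenSparse} applies verbatim, giving $2\sqrt n\cdot\log n$ sub-instances in total.

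The key observation is a counting/pigeonhole bound on the ancilla budget. Each sub-instance of \lem{lem:GenSparse} acting on a $\sqrt n\times\frac12\log n$ sub-block needs $t$ extra rows, where $t$ is the number of ones in that sub-block, and $t\le\frac{\sqrt n}{2}\log n$ is required. Summing over all sub-instances, the total number of auxiliary rows consumed is at most the total number of ones in $\Ymat$, which is at most $n\log^2 n$ — but we have $2n$ ancilla rows, which is too few to give every sub-instance its own private pool if the ones are concentrated. The fix is to \emph{reuse} the ancilla rows across the $\sqrt n$ row-groups: within a fixed row-group, the sub-instances live on disjoint row sets (the $\sqrt n$ middle rows of that group plus at most $\sqrt n\log n$ auxiliary rows), and we run the $\sqrt n$ row-groups \emph{sequentially}, but that would cost $\sqrt n\cdot O(\log n)$ depth, which is too much. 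So instead I would run all row-groups in parallel but bound the simultaneous auxiliary demand: across all $\sqrt n$ groups and all $\log n$ column-blocks, the total number of ones is $\le n\log^2 n$, so by averaging some arrangement uses $\le 2n$ rows at once — but worst case it does not. I must therefore handle skew directly: for each column-block $\ell$ and each fixed value $\bm e_j$ (one of the $\frac12\log n$ standard basis vectors), prepare the required number of copies of $\bm e_j$ \emph{globally} using the doubling trick from \lem{lem:GenSparse}, on a shared ancilla pool of size $O(n)$, since the total number of copies needed over all $(j,\ell)$ is $\le n\log^2 n$ but the number needed \emph{per bit-pattern} summed appropriately fits in $O(n)$ once we note each of the $n$ target rows requests at most $\log^2 n$ basis vectors and each request is served by one copy. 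This is exactly the balancing-the-trade-off-between-resource-and-cost step advertised in the introduction.

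The main obstacle is precisely this bookkeeping of the ancilla pool: showing that $2n$ rows suffice simultaneously for all $\log n$ parallel \lem{lem:GenSparse} invocations, by the right pigeonhole argument on where the ones of $\Ymat$ sit, rather than letting the worst-case column-block hog everything. Once the pool allocation is fixed, the depth is immediate: preparing the basis-vector copies is $O(\log n)$ parallel row eliminations (doubling), assembling each target row from $\le\frac12\log n$ copies is another $O(\log n)$, and uncomputing the pool mirrors this, for $O(\log n)$ total; the middle $2n$ and last $2n$ rows play the roles of the $\Imat$ and $\bm*$ blocks in the stated block form, and invertibility of the product of row-elimination matrices is automatic. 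The running time is $\widetilde O(n)$ since we touch $n\cdot\log^2 n$ matrix entries a constant number of times. I would present the argument by first stating the ancilla-allocation claim, proving it by counting, and then invoking \lem{lem:GenSparse} as a black box on each piece.
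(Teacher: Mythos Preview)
Your ancilla bookkeeping does not close, and that is the real gap. With standard basis vectors $\bm e_j\in\Fbb_2^{\frac12\log n}$ as your building blocks, each of the $n$ target rows issues up to $\log^2 n$ requests (one per $1$-entry of $\Ymat$), so the total request count is $n\log^2 n$; you write this down and then assert it ``fits in $O(n)$,'' which it does not. Even if each prepared copy were reused $O(\log n)$ times via matchings, you would still need $\Omega(n\log n)$ ancilla rows, not $2n$. The attempted black-box invocation of \lem{lem:GenSparse} on all $2\sqrt n\log n$ sub-instances hits the same wall, since summing the $t$-parameters over all sub-instances is again $\Theta(n\log^2 n)$ in the worst case, and there is no pigeonhole argument that rescues this.

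The paper's proof closes the budget with two ideas you are missing. First, for each of the $2\log n$ column-blocks it builds the \emph{full lookup table} $\Pmat\in\Fbb_2^{\sqrt n\times\frac12\log n}$ whose rows enumerate all $\sqrt n$ vectors of $\Fbb_2^{\frac12\log n}$; then every row of every $\Ymat_k$ equals a \emph{single} row of $\Pmat$, so the total number of requests drops from $n\log^2 n$ to $2n\log n$. (\lem{lem:GenSparse} is used only to build these $2\log n$ tables, at total ancilla cost $O(\sqrt n\log^2 n)\ll n$.) Second, rather than one copy per request, it makes only $\lceil s_{k,\ell}/(2\log n)\rceil$ copies of each table row, where $s_{k,\ell}$ is the demand; summing gives at most $n$ copies, which fits in the budget. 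Each copy is reused up to $2\log n$ times, and each target row needs exactly $2\log n$ additions, so the addition pattern is a bipartite graph of maximum degree $2\log n$, which factorizes into $2\log n$ matchings and yields the $O(\log n)$ depth. Your plan has neither the lookup-table idea nor the fractional-copy-plus-matching idea, and without at least the first one the $2n$ ancilla bound cannot be met.
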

\begin{proof}
Rows of $\Ymat$ can be seen as a set of Boolean vectors of length $\log^2 n$. In the algorithm, we first synthesize an additive base for these vectors.
We then add them together to obtain $\Ymat$.
The main process is depicted in \Cref{fig:log2n}.
It applies the following four steps.

\begin{figure}[ht]
\centering
\scalebox{0.44}{\begin{tikzpicture}
    \pgfmathsetmacro{\X}{1}
    \pgfmathsetmacro{\XX}{4*\X}
    \pgfmathsetmacro{\TXX}{3*\X}
    \pgfmathsetmacro{\Y}{1}
    \pgfmathsetmacro{\Z}{3}

    \draw[black,dashed] (0,0) rectangle +(3*\X,\TXX+\XX+\Y+\Z);
    \draw[black,dashed,very thick] (0,\Y+\Z+\XX) -- (3*\X,\Y+\Z+\XX);
    \draw[black,dashed,very thick] (0,\Z+\XX) -- (3*\X,\Z+\XX);
    \draw[black,dashed,very thick] (0,\Z) -- (3*\X,\Z);
    \draw[black,very thick] (0,\TXX+\XX+\Y+\Z) -- (2*\X,\TXX+\XX+\Y+\Z-2*\X);
    \foreach \x in {1,2,3}
        \fill[black!80] (2*\X+\x/4*\X,\TXX+\XX+\Y+\Z-2*\X-\x/4*\X) circle (2pt);
    \draw[decorate,decoration={brace,amplitude=6pt,raise=2pt}] (\X,\TXX+\XX+\Y+\Z) -- (\X,\TXX+\XX+\Y+\Z-\X)
        node[midway,xshift=1cm,scale=1.5] {$\frac12\!\log n$};
    \draw[decorate,decoration={brace,amplitude=6pt,raise=2pt}] (3*\X,\TXX+\XX+\Y+\Z) -- (3*\X,\TXX+\Y+\Z+\X)
        node[midway,xshift=1cm,scale=1.5] {$\log^2\!n$};
    \draw[decorate,decoration={brace,amplitude=7pt,raise=2pt}] (0,\TXX+\XX+\Y+\Z) -- (3*\X,\TXX+\XX+\Y+\Z)
        node[midway,yshift=0.7cm,scale=1.5] {$\log^2n$};
    \draw[decorate,decoration={brace,amplitude=6pt,raise=2pt}] (0,\XX+\Z) -- (0,\XX+\Y+\Z)
        node[midway,xshift=-0.5cm,scale=1.5] {$n$};
    \draw[decorate,decoration={brace,amplitude=7pt,raise=2pt}] (0,\Z) -- (0,\XX+\Z)
        node[midway,xshift=-0.5cm,scale=1.5] {$n$};
    \draw[decorate,decoration={brace,amplitude=7pt,raise=2pt}] (0,0) -- (0,\Z)
        node[midway,xshift=-0.5cm,scale=1.5] {$n$};
    \foreach \x in {0,1}
        \draw[black,dashed] (\x*\X,\TXX+\XX+\Y+\Z-\x*\X) rectangle +(\X,-\X);
    \node[scale=1.8,yshift=2pt] at (3*\X+1,\XX+\Y/2+\Z/2) {$\xLongrightarrow[\Pmat_i]{\substack{\text{Const-}\\\text{ruct}}}$};

    \begin{scope}[xshift=5cm]
    \draw[black,dashed] (0,0) rectangle +(3*\X,\TXX+\XX+\Y+\Z);
    \draw[black,dashed,very thick] (0,\Y+\Z+\XX) -- (3*\X,\Y+\Z+\XX);
    \draw[black,dashed,very thick] (0,\Z+\XX) -- (3*\X,\Z+\XX);
    \draw[black,dashed,very thick] (0,\Z) -- (3*\X,\Z);
    \draw[black,very thick] (0,\TXX+\XX+\Y+\Z) -- (2*\X,\TXX+\XX+\Y+\Z-2*\X);
    \foreach \x in {1,2,3}{
        \fill[black!80] (2*\X+\x/4*\X,\TXX+\XX+\Y+\Z-2*\X-\x/4*\X) circle (2pt);
        \fill[black!80] (2*\X+\X-\x/4*\X,\Z+\x/4*\XX/3) circle (2pt);
    }
    \foreach \x in {0,1}{
        \pgfmathsetmacro\tx{\x*\X}
        \pgfmathsetmacro\ty{\Z+\XX-\x*\XX/3}
        \pgfmathsetmacro\Ty{\TXX+\XX+\Y+\Z-\x*\X}
        \pgfmathtruncatemacro\result{\x+1}
        \draw[black,dashed] (\tx,\Ty) rectangle +(\X,-\X);
        \fill[gray!30] (\tx,\ty) rectangle +(\X,-\XX/3);
        \node[scale=1.5] at (\tx+\X/2,\ty-\XX/6) {$\Pmat_\result$};
        \draw[-{Straight Barb[length=1mm]}] (\tx+\X/2,\Ty-0.8*\X) to (\tx+\X/2,\ty-0.3*\X);
    }
    \draw[decorate,decoration={brace,amplitude=7pt,raise=2pt}] (2*\X,\XX+\Z-\XX/3) -- +(0,-\XX/3)
        node[midway,xshift=0.6cm,scale=1.2] {$\sqrt n$};
    \node[scale=1.8] at (3*\X+1,\XX+\Y/2+\Z/2) {$\xLongrightarrow[\Pmat_i]{\text{Copy}}$};
    \end{scope}

    \begin{scope}[xshift=10cm]
    \draw[black,dashed] (0,0) rectangle +(3*\X,\TXX+\XX+\Y+\Z);
    \draw[black,dashed,very thick] (0,\Y+\Z+\XX) -- (3*\X,\Y+\Z+\XX);
    \draw[black,dashed,very thick] (0,\Z+\XX) -- (3*\X,\Z+\XX);
    \draw[black,dashed,very thick] (0,\Z) -- (3*\X,\Z);
    \draw[black,very thick] (0,\TXX+\XX+\Y+\Z) -- (2*\X,\TXX+\XX+\Y+\Z-2*\X);
    \foreach \x in {1,2,3}{
        \fill[black!80] (2*\X+\x/4*\X,\TXX+\XX+\Y+\Z-2*\X-\x/4*\X) circle (2pt);
        \fill[black!80] (2*\X+\X-\x/4*\X,\Z+\x/4*\XX/3) circle (2pt);
        \pgfmathsetmacro\t{\x*(\Z-1.7)/4}
        \fill[black!80] (2*\X+\X-\x/4*\X,\t) circle (2pt);
    }
    \foreach \x in {0,1}{
        \pgfmathsetmacro\tx{\x*\X}
        \pgfmathsetmacro\ty{\Z+\XX-\x*\XX/3}
        \pgfmathsetmacro\Ty{\TXX+\XX+\Y+\Z-\x*\X}
        \pgfmathtruncatemacro\result{\x+1}
        \draw[black,dashed] (\tx,\Ty) rectangle +(\X,-\X);
        \fill[gray!30] (\tx,\ty) rectangle +(\X,-\XX/3);
        \node[scale=1.5] at (\tx+\X/2,\ty-\XX/6) {$\Pmat_\result$};
    }
    \node[scale=1.8,yshift=2.5pt] at (3*\X+1,\XX+\Y/2+\Z/2) {$\xLongrightarrow[\Ymat]{\substack{\text{Const-}\\\text{ruct}}}$};
    \fill[gray!30] (0,\Z) rectangle +(\X,-0.5);
    \draw[-{Straight Barb[length=1mm]}] (\X/2,\Z+2*\XX/3+0.2) to (\X/2,\Z-0.25);
    \fill[gray!30] (\X,\Z-0.5) rectangle +(\X,-1.2);
    \draw[-{Straight Barb[length=1mm]}] (\X/2+\X,\Z+\XX/3+0.2) to (\X/2+\X,\Z-1);
    \end{scope}

    \begin{scope}[xshift=15cm]
    \draw[black,dashed] (0,0) rectangle +(3*\X,\TXX+\XX+\Y+\Z);
    \fill[gray!30] (0,\XX+\Z) rectangle +(3*\X,\Y);
    \draw[black,dashed,very thick] (0,\Y+\Z+\XX) -- (3*\X,\Y+\Z+\XX);
    \draw[black,dashed,very thick] (0,\Z+\XX) -- (3*\X,\Z+\XX);
    \draw[black,dashed,very thick] (0,\Z) -- (3*\X,\Z);
    \draw[black,very thick] (0,\TXX+\XX+\Y+\Z) -- (2*\X,\TXX+\XX+\Y+\Z-2*\X);
    \foreach \x in {1,2,3}{
        \fill[black!80] (2*\X+\x/4*\X,\TXX+\XX+\Y+\Z-2*\X-\x/4*\X) circle (2pt);
        \fill[black!80] (2*\X+\X-\x/4*\X,\Z+\x/4*\XX/3) circle (2pt);
        \pgfmathsetmacro\t{\x*(\Z-1.7)/4}
        \fill[black!80] (2*\X+\X-\x/4*\X,\t) circle (2pt);
    }
    \foreach \x in {0,1}{
        \pgfmathsetmacro\tx{\x*\X}
        \pgfmathsetmacro\ty{\Z+\XX-\x*\XX/3}
        \pgfmathsetmacro\Ty{\TXX+\XX+\Y+\Z-\x*\X}
        \pgfmathtruncatemacro\result{\x+1}
        \draw[black,dashed] (\tx,\Ty) rectangle +(\X,-\X);
        \fill[gray!30] (\tx,\ty) rectangle +(\X,-\XX/3);
        \node[scale=1.5] at (\tx+\X/2,\ty-\XX/6) {$\Pmat_\result$};
    }
    \fill[gray!30] (0,\Z) rectangle +(\X,-0.5);
    \draw[-{Straight Barb[length=1mm]},bend left=6] (\X/4,\Z-0.25) to (\X/4,\Z+\XX+\Y/2-0.1);
    \draw[-{Straight Barb[length=1mm]},bend right=6] (3*\X/4,\XX+\Z-0.25) to (3*\X/4,\Z+\XX+\Y/2);
    \fill[gray!30] (\X,\Z-0.5) rectangle +(\X,-1.2);
    \draw[-{Straight Barb[length=1mm]}] (\X+\X/2+0.2,2*\XX/3+\Z-0.25) to (\X+\X/2+0.2,\Z+\XX+\Y/2-0.2);
    \draw[-{Straight Barb[length=1mm]},bend right=6] (\X+\X-0.1,\Z-1) to (\X+\X-0.1,\Z+\XX+\Y/2);
    \node[scale=1.5] at (1.5*\X,\XX+\Z+\Y/2) {$\Ymat$};
    \end{scope}

\end{tikzpicture}}
\caption{The process (step 1-3) to construct $\Ymat$, (only the first $\log^2n$ columns) are drawn.}\label{fig:log2n}
\end{figure}

\paragraph*{Step 1 (Construct $\Pmat_i$'s).}
There are $\sqrt{n}$ distinct vectors in $\Fbb_2^{\frac{1}{2}\log n}$. 
With arbitrary ordering, we write $\bm y^\ell, \ell\in[\sqrt{n}]$ as the $\ell$-th vector in
$\Fbb_2^{\frac{1}{2}\log n}$. Let $\Pmat\in\Fbb_2^{\sqrt{n}\times \frac{1}{2}\log n}$ be the matrix where $\Pmat[\ell,*]=\bm y^\ell$. 
In the following, we will construct several copies of $\Pmat$ in parallel, as shown in \Cref{fig:log2n}.

Specifically, for $0\leq j \leq 2\log n - 1$, we apply \Cref{lem:GenSparse} with $O(\log n)$ parallel row-elimination matrices and ancillary row-$\pbra{\log^2 n+2n+\frac{j}{2}\sqrt{n}\log n+1}$ to
row-$\pbra{\log^2 n+2n+\frac{j+1}{2}\sqrt{n}\log n}$,
to construct $\Pmat$ in the intersection of
$\pbra{\log^2 n+n+j\sqrt{n}+1}$-th row to
$\pbra{\log^2 n+n+(j+1)\sqrt{n}}$-th row and
$\pbra{\frac j2\log n + 1}$-th column to
$\pbra{\frac{j + 1}2\log n}$-th column. 
For simplicity, we write the constructed $\Pmat$'s as $\Pmat_1,\Pmat_2,\ldots,$ $\Pmat_{2\log n}$. Notice that $\Pmat_i$ can be constructed simultaneously, thus the total cost is still $O(\log n)$ parallel row-elimination matrices. Here for simplicity we use $n$ ancillary rows to store those $\Pmat_i$, though this number can be slightly reduced by finer analysis.

\paragraph*{Step 2 (Copy rows in $\Pmat_i$'s).}
We divide $\Ymat$ as $\bmat{\Ymat_1&\cdots&\bm Y_{2\log n}}$ where $\Ymat_k\in \Fbb_2^{n\times \frac{1}{2}\log n}$ for all $k$. 
Suppose $\bm y^\ell$ appears $s_{k,\ell}$ times in $\Ymat_k$, i.e.,
$$
s_{k,\ell}=\#\cbra{u\in [n]\middle| \Ymat_k[u,*]=\bm y^\ell}.
$$
In principle, similar to \Cref{lem:GenSparse}, if we make $s_{k,\ell}$ copies of $\bm y^\ell$ from column-$\pbra{\frac{k-1}{2}\log n+1}$ to column-$\pbra{\frac{k}{2}\log n}$, then we can construct $\Ymat$ with $O(\log n)$ parallel row-elimination matrices. 
But a direct application of this approach will require $2n\log n$ ancillary rows. 
To achieve the optimality in ancillae-depth trade-off, we will use the following modification, given by the next two steps, that requires fewer initial copies.

We make $\ceilbra{\frac{s_{k,\ell}}{2\log n}}$ copies of $\bm y^\ell$ in column-$\pbra{\frac{k-1}{2}\log n+1}$ to column-$\pbra{\frac{k}{2}\log n}$ in parallel for all $k,\ell$ using $O(\log n)$ parallel row-elimination matrices.
Note that in this step, the rows in the original $\Pmat_i$ is regarded as the first copy of corresponding $\bm y^\ell$. 
This way, the ancillary rows being used during this step are at most $\frac{2n\log n}{2\log n}=n$.
Because $s_{k,\ell}\leq n$ for all $k,\ell$, Step 2 can be realized by $O(\log n)$ parallel row-elimination matrices.

\paragraph*{Step 3 (Construct $\Ymat$).}
For $i\in[n]$, let $u_i$ denote the $\pbra{\log^2 n+i}$-th row. 
For $j\in[2n]$, let $w_j$ denote the $\pbra{\log^2 n+n+j}$-th row, which corresponds to a copy of certain $\bm y^{\ell}$.

Our next step is to add proper $w_j$ to $u_i$ to form $\Ymat$. 
We say $w_j$ is used for $t$ times if $w_j$ is planned to be added to $t$ different $u_i$'s.
By Step 2, we have generated enough copies of the basic vectors to ensure that each $w_j$ will be used at most $2\log n$ times. 
Notice that $u_i$ is supposed to be the sum of $2\log n$ corresponding $w_j$s. We use the matching of the following bipartite graph $G=(V_1,V_2,E)$ to guide the process.
In $G$, $V_1=\{u_i\}_{i\in[n]}$, $V_2=\{w_j\}_{j\in [2n]}$, and $E=\cbra{(u_i,w_j)\middle|\text{$w_j$ will be added to $u_i$}}$.
Instrumental to our construction, all vertices in $G$ have degree at most $2\log n$. 
Thus, it can be factorized into at most $2\log n$ matchings in nearly linear time \cite{ColeOS01,Alon03}.
Hence, $\Ymat$ can be constructed from these copies of $\Pmat$ with $O(\log n)$ parallel row-elimination matrices.

The steps above are illustrated in \Cref{fig:log2n}.

\paragraph*{Step 4 (Restore).}
In this step, we erase the copies of $\bm y^\ell$'s and $\Pmat_i$'s 
 by the inverse of Step 2 and Step 1. The complexity is the same as before.

To sum up, the overall time complexity is
$$
\underbrace{\widetilde O\pbra{\sqrt{n}}}_\text{Step 1}+ \underbrace{\widetilde{O}(n)}_{\text{Step 2}}+ \underbrace{\widetilde{O}(n)}_{\text{Step 3}}+
\underbrace{\widetilde O\pbra{\sqrt{n}}+\widetilde{O}(n)}_{\text{Step 4}},
$$
and the total number of parallel row-elimination matrices is
\begin{equation*}
\underbrace{ O\pbra{\log n}}_\text{Step 1}+ \underbrace{O\pbra{\log n}}_{\text{Step 2}}+ \underbrace{ O\pbra{\log n}}_{\text{Step 3}}+
\underbrace{ O\pbra{\log n}}_{\text{Step 4}}.
\tag*{\qedhere}
\end{equation*}
\end{proof}

Our construction above can be naturally extended to cases when $s > 1$.

\begin{corollary}\label{cor:SParallel}
There is an $\widetilde{O}(sn)$-time algorithm that, given $1\leq s \leq O(n/\log^2n)$ and $\Ymat\in \Fbb_{2}^{n \times s\log^2 n}$, outputs parallel row-elimination matrices $\Rmat_1,\ldots,\Rmat_d$ where $d = O(\log n)$, such that
$$
\Rmat_d\cdots\Rmat_1 = \begin{bmatrix}
\Imat & &\bm* \\
\Ymat & \Imat&\bm*\\
& &\Imat_{3sn}\\
\end{bmatrix}.
$$
\end{corollary}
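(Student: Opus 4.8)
The plan is to bootstrap \lem{lem:GenYCol} --- the $s=1$ case --- by running $s$ of its instances simultaneously on pairwise disjoint blocks of rows and then merging the $s$ partial outputs into one middle block by a binary tree. First I would split $\Ymat$ column-wise into $s$ consecutive blocks $\Ymat^{(1)},\ldots,\Ymat^{(s)}$, each in $\Fbb_2^{n\times\log^2 n}$, so that $\Ymat=\sbra{\Ymat^{(1)}\mid\cdots\mid\Ymat^{(s)}}$. Correspondingly I would index the $s\log^2 n+n+3sn$ ambient rows so that the top $s\log^2 n$ rows split into $s$ source blocks $S_1,\ldots,S_s$ of $\log^2 n$ rows each (together carrying $\Imat_{s\log^2 n}$), the middle $n$ rows form one block $R$, and the last $3sn$ rows split into $s$ groups, the $k$-th group being an $n$-row receiving block $D_k$ paired with a $2n$-row worker block $W_k$. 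The sets $S_1,\ldots,S_s$, $D_1,\ldots,D_s$, $W_1,\ldots,W_s$ and $R$ are pairwise disjoint, and initially each carries the relevant identity.

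The construction would proceed in three phases. \emph{Phase~1 (parallel copies):} for each $k$, apply \lem{lem:GenYCol} on the rows $S_k\cup D_k\cup W_k$ with $S_k$ as the source, $D_k$ as the rows that receive $\Ymat^{(k)}$, and $W_k$ as the ancilla rows; it returns $O(\log n)$ parallel row-elimination matrices whose supports lie entirely inside $S_k\cup D_k\cup W_k$. Because these index sets are disjoint over $k$, the $t$-th matrices of the $s$ instances have disjoint supports and compose into a single legal parallel row-elimination matrix, so all $s$ instances are carried out at once in $O(\log n)$ parallel steps; afterwards each $S_k,W_k$ is restored and $D_k$ holds $\Ymat^{(k)}$ on the columns of $S_k$, $\Imat_n$ on the columns of $D_k$, and $\bm*$ elsewhere. \emph{Phase~2 (merge):} fold $D_1\oplus\cdots\oplus D_s$ into $D_1$ by a binary tree --- at round $r=1,\ldots,\lceil\log_2 s\rceil$ each surviving block absorbs one other block, and since a round's additions are between disjoint row blocks they form a single parallel step --- costing $O(\log s)=O(\log n)$ steps (here I use $s\le O(n/\log^2 n)$); then add $D_1$ into $R$ in one more step, so that $R=\sbra{\Ymat\mid\Imat_n\mid\bm*}$, using that the $\Ymat^{(k)}$ occupy disjoint column groups. \emph{Phase~3 (uncompute):} run Phase~2 in reverse (each parallel row-elimination matrix is its own inverse), restoring every $D_k$ to its post-Phase-1 value and leaving $R$ untouched; then run the inverse of each Phase-1 \lem{lem:GenYCol} sequence, all $s$ simultaneously, which by reversibility returns every $D_k$ to $\Imat_n$ and keeps each $S_k,W_k$ equal to the identity, again without touching $R$. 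The product of all parallel row-elimination matrices then equals the matrix in the statement, their number is $O(\log n)$, and --- since the algorithm only emits the sparse descriptions of these $O(\log n)$ matrices --- the running time is $2s\cdot\widetilde O(n)+\widetilde O(sn)=\widetilde O(sn)$.

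I do not expect a genuine obstacle; the only thing that needs care is the depth accounting. I must check that the $s$ parallel \lem{lem:GenYCol} sequences really collapse step-by-step into a single sequence of length $O(\log n)$ (which follows from disjointness of supports), that the merge is done with a logarithmic-depth tree rather than a linear scan so that $O(\log s)=O(\log n)$, and that Phase~3 restores the full $\Imat_{3sn}$ ancilla block exactly --- which it does, because the only junk produced by \lem{lem:GenYCol} lands in the $\bm*$ positions of $R$, and those are permitted by the statement.
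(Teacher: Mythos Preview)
Your proof is correct and follows essentially the same approach as the paper: run $s$ disjoint instances of \lem{lem:GenYCol} in parallel on disjoint row blocks, merge the $s$ partial outputs into the target block via a binary tree in $O(\log s)$ depth, and then uncompute the auxiliary blocks. The only imprecision is the phrase ``run Phase~2 in reverse'' in Phase~3 --- taken literally this would also undo the final add-into-$R$ step and erase $\Ymat$ --- but your explicit clarification ``leaving $R$ untouched'' shows you intend to reverse only the folding portion, which is precisely what the paper does as well.
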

\begin{proof}
With (last) $3sn$ ancillary rows, we can be more efficient in parallelizing the construction presented in \Cref{lem:GenYCol}: We execute \Cref{lem:GenYCol} in parallel with $3sn$ ancillary rows. 
\Cref{fig:CNOTwithAncil1} helps to illustrate this construction.

\begin{figure*}[ht]
\centering
\scalebox{0.55}{\begin{tikzpicture}
    \pgfmathsetmacro{\X}{2.2}
    \pgfmathsetmacro{\T}{1.5/4*\X}
    \pgfmathsetmacro{\Z}{\X-\T}
    \draw[black,dashed] (0,\Z) rectangle +(2.5*\X+\T,2.5*\X+\T);
    \draw[black,very thick] (0,3.5*\X) -- (3.5*\X-\Z,\Z);
    \draw[black,dashed,very thick] (0,2.5*\X) -- (3.5*\X-\Z,2.5*\X);
    \draw[black,dashed,very thick] (0,1.5*\X+\Z) -- (3.5*\X-\Z,1.5*\X+\Z);
    \draw[black,dashed,very thick] (\X,3.5*\X) -- (\X,\Z); 
    \draw[decorate,decoration={brace,amplitude=10pt}] (\X,3.5*\X) -- (\X,2.5*\X) node[midway,xshift=1.2cm,scale=1.5] {$s\!\log^2n$};
    \draw[decorate,decoration={brace,amplitude=10pt}] (\X,1.5*\X+\Z) -- (\X,\Z) node[midway,xshift=0.7cm,scale=1.5] {$sn$};
    \draw[decorate,decoration={brace,amplitude=6pt,mirror}] (\X,2.5*\X) -- (\X,1.5*\X+\Z) node[midway,xshift=-0.5cm,scale=1.5] {$n$};
    \node[scale=2.3] at (2.5*\X+\T+0.75,1.25*\X+\T/2+\Z) {$\xLongrightarrow{\Rmat_a}$};

    \begin{scope}[xshift=3.55*\X cm]
    \draw[black,dashed] (0,\X-\T) rectangle +(2.5*\X+\T,2.5*\X+\T);
    \draw[black,very thick] (\X,2.5*\X) -- (3.5*\X-\Z,\Z);
    \draw[black,dashed,very thick] (0,2.5*\X) -- (3.5*\X-\Z,2.5*\X);
    \draw[black,dashed,very thick] (0,1.5*\X+\Z) -- (3.5*\X-\Z,1.5*\X+\Z);
    \draw[black,dashed,very thick] (\X,3.5*\X) -- (\X,\Z); 
    \draw[decorate,decoration={brace,amplitude=5pt,raise=1pt}] (\X,2.75*\X) -- (\X,2.5*\X) node[midway,xshift=1cm,scale=1.5,yshift=1mm] {$\log^2n$};
    \foreach \x in {0,1,2,3}{
        \pgfmathsetmacro\y{3-\x}
        \ifthenelse{\x=2}
            {
                \foreach \xx in {1,2,3}
                    \fill[black!80] (\x*\X/4+\xx*\X/16,3.5*\X-\x*\X/4-\xx*\X/16) circle (1pt);
            }
            {
                \fill[gray!30] (\x/4*\X,\X+\y/4*1.5*\X-\T) rectangle +(\X/4,1.5/4*\X);
                \draw[black,dashed] (\x*\X/4,3.5*\X-\x*\X/4) rectangle +(\X/4,-\X/4);
                \draw[black,very thick] (\x*\X/4,3.5*\X-\x*\X/4) -- +(\X/4,-\X/4);
                \draw[-{Straight Barb[length=1mm]}]
                    (\x*\X/4+\X/8,3.5*\X-\x*\X/4-\X/4*3/4) to (\x/4*\X+\X/8,\X+\y/4*1.5*\X+1.5/4*\X*6.5/8-\T);
            }
        \pgfmathtruncatemacro\result{\x+1}
        \node[scale=1.3] at (\x/4*\X+\X/8,\X+\y/4*1.5*\X+1.5*\X/8-\T)
            {
                \ifthenelse{\result<3}
                    {$\Ymat_{\!\result}$}
                    {\ifthenelse{\result<4}{}{$\Ymat_{\!s}$}}
            };
    }
    \foreach \x in {1,2,3}{
        \pgfmathsetmacro\u{\X/2+\x*\X/16}
        \pgfmathsetmacro\v{\X+1.5*\X/4*(2-\x/4)}
        \fill[black!80] (\u,\v-\T) circle (2pt);
        }
    \draw[decorate,decoration={brace,amplitude=6pt,raise=2pt}] (\X,\X+\X*1.5/4-\T) -- (\X,\X-\T) node[midway,xshift=0.5cm,scale=1.5] {$n$};
    \node[scale=1.8] at (2.5*\X+\T+0.75,1.25*\X+\T/2+\Z) {$\xLongrightarrow{\!\!\textbf{Add}\!}$};
    \end{scope}

    \begin{scope}[yshift=-3.2*\X cm]
    \draw[black,dashed] (0,\X-\T) rectangle +(2.5*\X+\T,2.5*\X+\T);
    \draw[black,very thick] (0,3.5*\X) -- (3.5*\X-\Z,\Z);
    \draw[black,dashed,very thick] (0,1.5*\X+\Z) -- (3.5*\X-\Z,1.5*\X+\Z);
    \draw[black,dashed,very thick] (0,2.5*\X) -- (3.5*\X-\Z,2.5*\X);
    \draw[black,dashed,very thick] (\X,3.5*\X) -- (\X,\Z);
    \foreach \x in {0,1,2,3}{
        \pgfmathsetmacro\y{3-\x}
        \ifthenelse{\x=2}{}
            {
                \fill[gray!30] (\x/4*\X,\X+\y/4*1.5*\X-\T) rectangle +(\X/4,1.5/4*\X);
                \fill[gray!30] (\x/4*\X,1.5*\X+\Z) rectangle +(\X/4,1.5/4*\X);
                \draw[-{Straight Barb[length=1mm]}]
                    (\x/4*\X+\X/8,\X+\y/4*1.5*\X+1.5/4*\X-1.5*\X/4*9/8) to (\x/4*\X+\X/8,1.5*\X+\Z+1.5/4*\X*1.5/8);
            }
        \pgfmathtruncatemacro\result{\x+1}
        \node[scale=1.3] at (\x/4*\X+\X/8,\X+\y/4*1.5*\X+1.5*\X/8-\T)
            {
                \ifthenelse{\result<3}
                    {$\Ymat_{\!\result}$}
                    {\ifthenelse{\result<4}{}{$\Ymat_{\!s}$}}
            };
        \node[scale=1.3] at (\x/4*\X+\X/8,\Z+1.5*\X/8+1.5*\X)
            {
                \ifthenelse{\result<3}
                    {$\Ymat_{\!\result}$}
                    {\ifthenelse{\result<4}{}{$\Ymat_{\!s}$}}
            };
    }
    \foreach \x in {1,2,3}{
        \pgfmathsetmacro\u{\X/2+\x*\X/16}
        \pgfmathsetmacro\v{\X+1.5*\X/4*(2-\x/4)}
        \pgfmathsetmacro\vv{\X+1.5*\X/8}
        \fill[black!80] (\u,\v-\T) circle (2pt);
        \fill[black!80] (\u,\vv-\T+1.5*\X) circle (1.2pt);
    }
    \fill[gray!80] (\X+\T,\Z+1.5*\X) rectangle +(1.5*\X,\T);
    \node[scale=2.3,yshift=1.5pt] at (2.5*\X+\T+0.75,1.25*\X+\T/2+\Z) {$\xLongrightarrow{\!\!\Rmat^{-1}_a\!\!}$};
    \end{scope}

    \begin{scope}[xshift=3.55*\X cm, yshift=-3.2*\X cm]
    \draw[black,dashed] (0,\X-\T) rectangle +(2.5*\X+\T,2.5*\X+\T);
    \draw[black,very thick] (0,3.5*\X) -- (3.5*\X-\Z,\Z);
    \draw[black,dashed,very thick] (0,1.5*\X+\Z) -- (3.5*\X-\Z,1.5*\X+\Z);
    \draw[black,dashed,very thick] (0,2.5*\X) -- (3.5*\X-\Z,2.5*\X);
    \draw[black,dashed,very thick] (\X,3.5*\X) -- (\X,\Z);
    \fill[gray!30] (0,\Z+1.5*\X) rectangle +(\X,\T);
    \node[scale=1.4] at (\X/2,\Z+\T/2+1.5*\X) {$\Ymat$};
    \fill[gray!80] (\X+\T,\Z+1.5*\X) rectangle +(1.5*\X,\T);
    \end{scope}

\end{tikzpicture}}
\caption{The process to construct $\Ymat$ (for simplicity, we omit $2sn$ ancillary rows and corresponding columns in the construction of $\Ymat_{k},k\in[s]$).
}
\label{fig:CNOTwithAncil1}
\end{figure*}

We first divide $\Ymat$ as $\bmat{\Ymat_{1}&\cdots&\Ymat_{s}}$ where $\Ymat_i\in \Fbb_2^{n\times \log^2n}$.
By \Cref{lem:GenYCol}, we construct all $\Ymat_i$ simultaneously by $O(\log n)$ parallel row-elimination matrices with another $2sn$ ancillary rows.
Specifically, view the upper left $\Imat_{s\log^2 n}$ as $s$ blocks of $\Imat_{\log^2 n}$. 
We use the $i$-th $\Imat_{\log^2 n}$ and $3n$ rows to construct $\Ymat_i$, $n$ rows of which are for specifying $\Ymat_i$.
In other words, we put $\Ymat_i$ in the $((i-1)n +1)$-th to $in$-th rows as illustrated in \Cref{fig:CNOTwithAncil1}. 
We denote the whole process in this step by $\Rmat_{a}$.

We then add $\Ymat_{1},\ldots, \Ymat_{s}$ from the corresponding rows to the first $n$ ancillary rows to construct $\Ymat$. 
This can be done with $O(\log s)$ parallel row-elimination matrices, since we can add $s$ rows to any given row with $O(\log s)$ parallel row-elimination (See \Cref{prop:inarborescence}) and then restore all but the target rows by corresponding inverse operations.

Finally, we apply $\Rmat_{a}^{-1}$ to erase $\Ymat_{1},\ldots,\Ymat_{s}$ between the $(n+1)$-th and $sn$-th row.
\end{proof}

With the help of an illustration in \Cref{fig:CNOTwithAncil2}, we now complete the proof of \Cref{lem:M}, 

\begin{figure}[ht]
\centering
\scalebox{0.43}{\begin{tikzpicture}
    \pgfmathsetmacro{\X}{2.2}
    \pgfmathsetmacro{\T}{1.5/4*\X}
    \pgfmathsetmacro{\Z}{\X-\T}
    \draw[black,dashed] (0,\Z) rectangle +(2.5*\X+\T,2.5*\X+\T);
    \draw[black,very thick] (0,3.5*\X) -- (3.5*\X-\Z,\Z);
    \draw[black,dashed,very thick] (0,2.5*\X) -- (3.5*\X-\Z,2.5*\X); 
    \draw[black,dashed,very thick] (0,2.5*\X-\T) -- (3.5*\X-\Z,2.5*\X-\T);
    \draw[black,dashed,very thick] (\X,3.5*\X) -- (\X,\Z); 
    \draw[decorate,decoration={brace,amplitude=10pt}] (\X,3.5*\X) -- (\X,2.5*\X) node[midway,xshift=0.7cm,scale=1.5] {$n$};
    \draw[decorate,decoration={brace,amplitude=10pt}] (\X,2.5*\X-\T) -- (\X,\X-\T) node[midway,xshift=0.9cm,scale=1.5] {$3sn$};
    \draw[decorate,decoration={brace,amplitude=6pt,mirror}] (\X,\X+1.5*\X) -- (\X,\Z+1.5*\X) node[midway,xshift=-0.5cm,scale=1.5] {$n$};
    \fill[gray!30] (0,\Z+1.5*\X) rectangle +(\X/3,\T);
    \node[scale=1.1] at (\X/6,\Z+\T/2+1.5*\X) {$\Ymat^{(1)}$};
    \node[scale=2.3] at (2.5*\X+\T+0.75,1.25*\X+\T/2+\Z) {$\xLongrightarrow{\cdots}$};
    \fill[gray!80] (\X+\T,2.5*\X-\T) rectangle +(1.5*\X,\T);

    \begin{scope}[xshift=3.55*\X cm]
    \draw[black,dashed] (0,\X-\T) rectangle +(2.5*\X+\T,2.5*\X+\T);
    \draw[black,very thick] (0,3.5*\X) -- (3.5*\X-\Z,\Z);
    \draw[black,dashed,very thick] (0,2.5*\X) -- (3.5*\X-\Z,2.5*\X);
    \draw[black,dashed,very thick] (0,2.5*\X-\T) -- (3.5*\X-\Z,2.5*\X-\T);
    \draw[black,dashed,very thick] (\X,3.5*\X) -- (\X,\Z);
    \fill[gray!30] (0,\Z+1.5*\X) rectangle +(\X,\T);
    \node[scale=1.3] at (\X/2,\Z+\T/2+1.5*\X) {$\Mmat$};
    \fill[gray!80] (\X+\T,\Z+1.5*\X) rectangle +(1.5*\X,\T);
    \node[scale=2.3,yshift=-1mm] at (2.5*\X+\T+0.5,1.25*\X+\T/2+\Z) {$\Rightarrow$};
    \end{scope}

    \begin{scope}[xshift=6.9*\X cm,yshift=1.8cm,scale=0.6]
    \draw[black,dashed] (0,\X-\T) rectangle +(2.5*\X+\T,2.5*\X+\T);
    \draw[black,very thick] (0,3.5*\X) -- (3.5*\X-\Z,\Z);
    \draw[black,dashed,very thick] (0,1.75*\X+0.5*\Z) -- (3.5*\X-\Z,1.75*\X+0.5*\Z)
        (1.25*\X+0.5*\T,\Z) -- (1.75*\X-0.5*\Z,3.5*\X);
    \fill[gray!30] (0,\Z) rectangle +(1.75*\X-0.5*\Z,1.75*\X-0.5*\Z);
    \node[scale=2] at (1.75*\X/2-0.25*\Z,1.75*\X/2-0.25*\Z+\Z) {$\Mmat$};
    \draw[decorate,decoration={brace,amplitude=10pt}] (1.25*\X+0.5*\T,3.5*\X) -- (1.25*\X+0.5*\T,1.75*\X+0.5*\Z)
        node[midway,xshift=0.6cm,scale=1.5] {$n$};
    \draw[decorate,decoration={brace,amplitude=10pt}] (1.25*\X+0.5*\T,1.75*\X+0.5*\Z) -- (1.25*\X+0.5*\T,\Z)
        node[midway,xshift=0.6cm,scale=1.5] {$n$};
    \end{scope}
\end{tikzpicture}}
\caption{The process to construct $\Mmat$ in the first $n$ ancillary rows. The last matrix omits all the ancillary rows.}
\label{fig:CNOTwithAncil2}
\end{figure}

\begin{proof}[Proof of \Cref{lem:M}] 
Let $t=s\log^2n$. 
We divide $\Mmat$ as $\bmat{\Ymat^{(1)}&\cdots&\Ymat^{(n/t)}}$ where each $\Ymat^{(i)}\in \Fbb_2^{n\times t}$.
Here we assume $n/t$ is an integer. If not, we can view the last sub-matrix $\Ymat^{(\ceilbra{ n/t })}$ as a $t$-column matrix by virtually padding zeros without ever changing their value. Especially, when we use \Cref{lem:GenYCol} and construct additive base matrix $\Pmat$, we keep the corresponding columns in $\Pmat$ to be $0$. 

Starting with $\Imat_{(3s+1)n}$, we sequentially construct $\Ymat^{(i)}$ with the $i$-th $\Imat_{t}$ and the rest $3sn$ rows.
This is shown in the first transformation of \Cref{fig:CNOTwithAncil2} by performing \Cref{cor:SParallel} repeatedly with $O\pbra{\frac{n}{t}\cdot\log n}=O\pbra{\frac n{s\log n}}$ parallel row-elimination matrices. 
After this step, we obtain 
$
\bmat{\Imat &&\\
\Mmat & \Imat &\bm*\\
&&\Imat_{3sn}}
$ 
as \Cref{lem:M} desires.
The time complexity is then $\widetilde{O}(sn\cdot \frac{n}{t})=\widetilde{O}(n^2)$ as stated in \Cref{lem:M}.
\end{proof}

\section{Lower Bound and Hardness Results}\label{sec:lowernpc}

\subsection{Lower Bound}

In this section, we present a depth lower bound for approximating CNOT circuits, establishing that our depth-and-ancillae trade-off remains asymptotically tight even if we relax the task from exact synthesis to approximation.
The lower bound also extends to synthesis with arbitrary two-qubit gates rather than only CNOT gates.

\begin{theorem*}[\Cref{thm:lowerbound} Restated]
Let $\epsilon<\sqrt2/2$ be a non-negative constant.
For $1-o(1)$ fraction of $n$-qubit CNOT circuits, any $\epsilon$-approximate $n$-qubit, $m$-ancilla quantum circuit has depth 
\begin{equation}\label{eq:LowerBoound}
\Omega\pbra{\max\cbra{\log n, \frac{n^2}{(n+m)\log (n+m)}}}.
\end{equation}
\end{theorem*}
\begin{proof}
We consider the two lower-bound terms in Equation \Cref{eq:LowerBoound} separately.

\paragraph*{The First Term.}
We first consider the following $n$-qubit CNOT circuit $\Ccal$:
$$
\Ccal\ket{x_1\cdots x_{n-1}x_n}=\ket{x_1\cdots x_{n-1}}\ket{x_1\oplus\cdots\oplus x_n}.
$$

To setup a proof by contradiction, suppose $\Ccal'$ is an $\epsilon$-approximate $n$-qubit, $m$-ancilla quantum circuit whose depth is $o(\log n)$. 
Due to the binary nature of the gates, this depth assumption implies that the $n$-th qubit in the output of $\Ccal'$ is only influenced by $o(n)$ inputs.
Assume without loss of generality that $\ket{x_1}$ is not among them. 
Let $\bm y=x_2\cdots x_n$.
By the $\epsilon$-approximation assumption, $\Ccal'\ket{0}\ket{\bm y}\ket{0}^{\otimes m}$ is $\epsilon$-close to $(\Ccal\ket{0}\ket{\bm y})\ket{0}^{\otimes m}$. 
Then, $\Ccal'\ket{1}\ket{\bm y}\ket{0}^{\otimes m}$ and $(\Ccal\ket{1}\ket{\bm y})\ket{0}^{\otimes m}$ will be near orthogonal. 
We have
$$
\vabs{\Ccal'\ket{1}\ket{\bm y}\ket{0}^{\otimes m}-(\Ccal\ket{1}\ket{\bm y})\ket{0}^{\otimes m}}_2>\sqrt2-\epsilon.
$$

For other CNOT circuits, this argument fails only when every output is affected by $o(n)$ inputs. 
Thus, the number of those circuits is upper bounded by
$$
\pbra{\sum_{k=o(n)}\binom{n}{k}}^n=\binom{n}{o(n)}^n.
$$
In contrast, the number of different CNOT circuits is
$$
\#\GL(n,2)=\prod_{i=0}^{n-1}\pbra{2^n-2^i}\geq2^{(n-1)^2/2}.
$$
Therefore, the fraction is upper bounded by
$$
2^{-(n-1)^2/2}\binom{n}{o(n)}^n=o(1).
$$

\paragraph*{The Second Term.} Any two-qubit gate $t$ can be uniquely described by $\Tmat_t\in\Cbb^{4\times 4}$ and expanded as a unitary matrix $\Umat_t\in\Cbb^{2^{n+m}\times2^{n+m}}$. 
Set
$$
d=o\pbra{\frac{n^2}{(n+m)\log (n+m)}},~
\delta=\frac{\sqrt2-2\epsilon}{4(n+m)d}
$$
with foresight.
We discretize all two-qubit gates into finitely many possibilities.

We define the \textit{$\delta$-discretization} of $\Umat_t$ as $\Umat^\delta_t$, where
$$
\Umat^\delta_t[u,v]=\delta\floorbra{\frac a\delta}+\delta\floorbra{\frac b\delta} i
$$
if $\Umat_t[u,v]=a+bi$. Then $\vabs{\Umat_t-\Umat_t^\delta}_2\leq2\delta$.
Moreover, although there are infinitely many two-qubit gates, there are only at most $\pbra{\frac2\delta}^{32}$ different $\delta$-discretization of two-qubit gates.

For any $n$-qubit, $m$-ancilla quantum circuit of depth $d$, there are $s<(n+m)d$ two-qubit gates. Also, the circuit can be viewed as a linear transform which sequentially multiplies $\Umat_{g_1},\ldots,\Umat_{g_s}$ to the input vector. Let $\Umat_{\bm g}=\Umat_{g_s}\cdots\Umat_{g_1}$ and $\Umat_{\bm g}^\delta=\Umat_{g_s}^\delta\cdots\Umat_{g_1}^\delta$.
Thus for any input vector $\bm v\in\Cbb^{2^{n+m}},\vabs{\bm v}_2=1$, we have
\begin{equation}
\vabs{\Umat_{\bm g}\bm v-\Umat_{\bm g}^\delta\bm v}_2< 2s\delta<\frac{\sqrt2}2-\epsilon. \label{eq:lowerbound}
\end{equation}

Assume two quantum circuits have gates $\cbra{h_i}_{i\in[s]},\cbra{t_i}_{i\in[s]}$ respectively, and $\Umat_{h_i}^\delta=\Umat_{t_i}^\delta$ holds for all $i\in[s]$.
Write $\Ccal_{\bm h}$ and $\Ccal_{\bm t}$ as the circuits corresponding to $\Umat_{\bm h}$ and $\Umat_{\bm t}$.
Suppose $\Ccal_{\bm h}$ and $\Ccal_{\bm t}$ are $\epsilon$-approximate to different $n$-qubit CNOT circuits $\Ccal_1$ and $\Ccal_2$. Then there exists $\bm x\in\bin^n$ that two circuits act differently on $\ket{\bm x}$, i.e.,
\begin{align*}
&\vabs{\Ccal_1\ket{\bm x}-\Ccal_2\ket{\bm x}}_2=\sqrt{2},\\
&\vabs{\Ccal_{\bm g}\ket{\bm x}\ket{0}^{\otimes m}-(\Ccal_1\ket{\bm x})\ket{0}^{\otimes m}}_2<\epsilon,\\
&\vabs{\Ccal_{\bm h}\ket{\bm x}\ket{0}^{\otimes m}-(\Ccal_2\ket{\bm x})\ket{0}^{\otimes m}}_2<\epsilon,
\end{align*}
which contradicts Inequality \Cref{eq:lowerbound} if setting $\bm v=\ket{\bm x}\ket{0}^{\otimes m}$.
We thus conclude that $\Umat_{\bm g},\Umat_{\bm h}$ can not approximate different CNOT circuits.

Consequently, the number of essentially different $n$-qubit $m$-ancilla quantum circuits of depth $d$ is upper bounded by
$$
\pbra{(n+m)\pbra{\frac2\delta}^{32}}^{d(n+m)}.
$$

Plugging in the parameters, the fraction is at most
\begin{equation*}
2^{-(n-1)^2/2}\pbra{\frac{2^{32}(n+m)}{\delta^{32}}}^{d(n+m)}=o(1).
\tag*{\qedhere}
\end{equation*}
\end{proof}

\subsection{Hardness Results}\label{sec:hardnessresult}

In this section, we study the computational hardness of CNOT-circuit optimization.
In our analyses, we will consider the following well-studied optimization problem:

\begin{problem}[$r$-Bounded Set Cover]
For any integer $r$, the problem, denoted by $\rBSC$, is defined as follows:
\begin{itemize}
\item Input: Positive integers $r,p,q$ and sets $W_1,\ldots,W_q\subseteq[p]$, such that (1) for all $i\in[q]$, $0<|W_i|\leq r$ and (2) $\bigcup_{i=1}^qW_i=[p]$.
\item Output: The minimum integer $k\geq 1$, for which there exists $V\subseteq[q]$ with $|V|=k$ such that $\bigcup_{i\in V}W_i=[p]$.
\end{itemize}
\end{problem}

The approximation complexity for this problem is well established.

\begin{theorem}[Inapproximability \cite{Feige98,Trevisan01}]\label{thm:setcovernph}
It is $\NP$-hard to approximate the solution of the $r$-Bounded Set Cover problem to within a factor of $\ln r-O(\ln\ln r)$.
\end{theorem}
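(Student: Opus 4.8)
This is the bounded–set-size refinement of Feige's inapproximability of Set Cover~\cite{Feige98}, worked out by Trevisan~\cite{Trevisan01}; the plan is to invoke their quantitative statement, but here is the route I would take to reprove it. I would start from a two-prover one-round game (equivalently, a Label Cover instance) for an $\NP$-complete problem with perfect completeness and soundness $\gamma\to 0$, produced by parallel repetition of a PCP, and arrange it to be $D$-\emph{regular} in the questions for a small degree bound $D$. I would then feed it into Feige's \emph{partition-system} construction: fix a ground set of $\ell$ points together with several partitions of it into $h=(1-o(1))\ln\ell$ parts, with the property that no collection of parts drawn from $o(h)$ distinct partitions covers all $\ell$ points. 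The $\rBSC$-type instance has ground set indexed by pairs (edge of the game, point of the partition system), and one set per triple (prover, question, answer); that set covers, for every edge incident to the question, the part of that edge's partition selected by the answer.

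The analysis then splits in the usual way. In the \textbf{YES} case a single globally consistent prover strategy yields a cover of size $2\cdot(\#\text{questions})$. In the \textbf{NO} case soundness forces any candidate cover to be ``locally inconsistent'' on almost every edge, and the partition-system property then forces it to use a factor $(1-o(1))\ln\ell$ more sets. The point that makes $r$ (rather than $\ell$ or the number of sets) the governing parameter is that each set has size at most $D$ times the size of a single partition part, i.e. at most $r:=O(D\ell/h)$; solving for $\ell$ gives $\ln\ell=\ln r+\Theta(\ln\ln r)-\Theta(\ln D)$, so that with $D$ polylogarithmic in the instance size the completeness/soundness gap is $\ln r-O(\ln\ln r)$, as claimed. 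Since the whole reduction runs in polynomial time, an approximation beating this factor would decide the underlying $\NP$-complete problem, contradicting $\P\neq\NP$ (and, to obtain the sharpest constants, the stronger hypothesis $\NP\not\subseteq\mathsf{DTIME}(n^{O(\log\log n)})$).

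The main obstacle is purely quantitative bookkeeping, not a new idea: one must simultaneously (i) keep the game's question degree bounded while still driving its soundness error to $0$, which is the delicate point in the interaction with parallel repetition; (ii) build a partition system whose number of points $\ell$ stays polynomial in the size of the reduction yet whose part-count $h$ is as close to $\ln\ell$ as possible; and (iii) carry every lower-order term through both the completeness/soundness gap and the translation between $\ell,h,D$ and $r$, so that the additive loss comes out to exactly $O(\ln\ln r)$ rather than something larger. All three are handled in~\cite{Feige98,Trevisan01}, which is why in the present paper the statement is used as a black box.
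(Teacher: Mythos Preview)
Your proposal is fine, but note that the paper does not prove this theorem at all: it is stated with the citations~\cite{Feige98,Trevisan01} and used as a black box, exactly as you yourself observe in your final paragraph. There is therefore nothing to compare against; your sketch is simply a reasonable outline of the original Feige--Trevisan argument, which the paper never reproduces.
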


Although there is an equivalence between CNOT circuit and $\GL(n,2)$, we introduce some notations to directly represent CNOT circuits to capture their topological constraints.

Let $\Gscr_n=\cbra{(i,j)\in[n]\times[n]\middle|i\neq j}$ be the set of all CNOT gates over $n$ qubits, where $(i,j)\in\Gscr_n$ specifies that
the $i$-th qubit controls the $j$-th.
For any $S\subseteq\Gscr_n$ (which we call \emph{topological constraints} of CNOT gates), let $\Lscr(S)$ be the set of all possible CNOT layers involving $S$: each layer $\ell\in\Lscr(S)$ is a subset of $S$ of non-interacting gates, i.e., any distinct $(i,j),(i',j')\in\ell$ satisfy $\{i,j\}\cap\{i',j'\}=\emptyset$.

For any sequence $g_1,\ldots,g_s\in S\subseteq\Gscr_n$, $\Ccal(g_1,\ldots,g_s)$ is the CNOT circuit wiring $g_i$ as the $i$-th gate. Similarly, for $\ell_1,\ldots,\ell_d\in\Lscr(S)\subseteq\Lscr(\Gscr_n)$, $\Ccal(\ell_1,\ldots,\ell_d)$ is the CNOT circuit whose $i$-th layer is $\ell_i$.

\begin{problem}[{\small Global Constrained Minimization}]
The problem, denoted by $\GCM$, is defined for depth and size respectively as follows:
\begin{itemize}
\item Input: Positive integers $n,m,s$, topological constraints $S\subseteq\Gscr_{n+m}$, and gates $g_1,\ldots,g_s\in S$ such that $\Ccal(g_1,\ldots,g_s)$ is an $n$-qubit $m$-ancilla CNOT circuit.
\item Output (size version): The minimum integer $u\geq0$ such that there exist gates $h_1,\ldots,h_u\in S$ and $\Ccal(g_1,\ldots,g_s)\cong\Ccal(h_1,\ldots,h_u)$.
\item Output (depth version): The minimum integer $v\geq0$ such that there exist layers $\ell_1,\ldots,\ell_v\subseteq\Lscr(S)$ and $\Ccal(g_1,\ldots,g_s)\cong\Ccal(\ell_1,\ldots,\ell_v)$.
\end{itemize}
\end{problem}

\begin{theorem}\label{thm:globalnph}
For both depth and size version of $\GCM$, it is $\NP$-hard to approximate the solution within any constant factor.
\end{theorem}

\begin{proof}
Given input $r,p,q,W_1,\ldots,W_q$ of $\rBSC$, we construct the input for $\GCM$ as follows:
\begin{itemize}
\item Set $n=p+c,m=q$ and $s=(c+2)p$, where $c$ is a positive integer to be determined later.
\item Set $S=\cbra{(i,j+n)\middle|i\in W_j,j\in[q]}\cup\{(j+n,p+t)|j\in[q],t\in[c]\}$.
\item For $i\in[p]$, let $t=(c+2)(i-1)$ and choose an arbitrary $j\in[q]$ such that $i\in W_j$. Then, set $g_{t+1}=(i,j+n)$, and $g_{t+t'+1}=(j+n,p+t')$ for each $t'\in[c]$, and $g_{t+c+2}=(i,j+n)$. 
\end{itemize}
Intuitively, if we denote the input as $\ket{x_1\cdots x_p}\ket{y_1\cdots y_c}\ket{0}^{\otimes m}$, then the $j$-th ancilla (which is the $(j+n)$-th qubit) represents $W_j$; the topological constraints $S$ only allows adding the $i$-th qubit to the $j$-th ancilla if $i\in W_j$ and allows adding ancillae to $y$'s. Thus the batch of $c+2$ gates during the construction with $i\in[p]$ essentially adds $x_i$ to all $y$'s, by first passing it to some $j$-th ancilla (i.e., $g_{t+1}$) and then adding it to $y$'s (i.e., $g_{t+t'+1},t'\in[c]$) and finally restoring it (i.e., $g_{t+c+2}$).
Therefore the output of the circuit is 
$$
\ket{x_1\cdots x_p}\ket{y_1\oplus x_1\oplus\cdots\oplus x_p}\cdots\ket{y_c\oplus x_1\oplus\cdots\oplus x_p}\ket{0}^{\otimes m}.
$$

Let $k$ be the solution to the $\rBSC$ instance. Let $u$ and $v$ be the solution to the size and depth version of the $\GCM$ instance respectively. We have the following claim which will be proved next.
\begin{claim}\label{clm:globalclm}
$k\leq v\leq kc+2r$ and $u=kc+2p$.
\end{claim}
Now we complete the proof: we set $c=1$ for the depth version and $c=p$ for the size version. Since $r$ is an arbitrary constant in $\rBSC$, \Cref{thm:globalnph} follows from \Cref{thm:setcovernph}.
\end{proof}

\begin{proof}[Proof of \Cref{clm:globalclm}]
We first focus on the upper bounds: $v\le kc+2r$ and $u\le kc+2p$. 
Assume without loss of generality $\bigcup_{i=1}^kW_i=[p]$ and define
$$
T_i=W_i\backslash\pbra{\bigcup_{j=1}^{i-1}W_j}=\cbra{t^i_1,\ldots,t^i_{|T_i|}}.
$$
Then we explicitly construct a circuit $\Ccal$ as follows:
\begin{itemize}
\item For $d\in[r]$, the $d$-th layer consists of gates $(t^i_d,i+n)$ for each $i\in[k]$ with $|T_i|\ge d$.

These $r$ layers ($p$ gates in total) prepare the $i$-th ancilla as the parity of $x$ on indices $T_i$.
\item For $t\in[c]$ and $j\in[k]$, the $(r+k(t-1)+j)$-th layer is simply one gate $(j+n,p+t)$.

These $kc$ layers ($kc$ gates in total) add the parity of ancillae (i.e., $x_1\oplus\cdots\oplus x_p$) to $y$'s.
\item For $d\in[r]$, the $(kc+r+d)$-th layer consists of gates $(t^i_d,i+n)$ for each $i\in[k]$ with $|T_i|\ge d$.

These $r$ layers ($p$ gates in total) restore the ancillae.
\end{itemize}
Thus $\Ccal$ has $kc+2r$ (possibly empty) layers and $kc+2p$ gates; and $\Ccal$ is essentially the same circuit as the input without violating the constraints.

We now establish the lower bound: $v\ge k$.
Assume $\Ccal$ is shallowest desired circuit.
Let $V$ be the gates in $\Ccal$ from ancillae to $y_1$, i.e., $j\in V$ iff gate $(j+n,p+1)\in\Ccal$.
Since these gates have the same target, we have $v\geq|V|$. 
Note that each $x_i$ can be added to $y_1$ only through these gates and thus $i\in\bigcup_{j\in V}W_j$. Hence $[p]=\bigcup_{j\in V}W_j$. By the definition of $k$, we have $|V|\geq k$ and $v\geq k$.

Finally we show $u\ge kc+2p$.
Assume $\Ccal$ is smallest desired circuit.
For any $j\in[c]$, let $V_j$ be the gates in $\Ccal$ from ancillae to $y_j$.
The analysis from last paragraph also applies here and shows $|V_j|\geq k$. 
Since each $x_i$ must be added to ancillae first in order to appear in $y_1,\ldots,y_c$, it must be added again to restore ancillae, which contributes at least $2p$ gates.
Thus $u\geq \sum_{j=1}^c|V_j|+2p\geq kc+2p$.
\end{proof}

The $\NP$-hardness analysis of $\GCM$ relies heavily on the existence of topological constraints. 
The current real-world implementation of quantum gates indeed has such constraints \cite{devitt2016performing,divincenzo2000physical}, though future quantum devices may be subject to fewer topological constraints.
Below, we introduce and analyze a ``local'' version of CNOT-circuit minimization where only a part of the circuit is allowed for optimization.

\begin{problem}[Local Size Minimization]
The problem, denoted by $\LSM$, is defined as follows:
\begin{itemize}
\item Input: Positive integers $n,m,s,L,R$ with $1\leq L\leq R\leq s$ and gates $g_1,\ldots,g_s\in\Gscr_{n+m}$ such that $\Ccal(g_1,\ldots,g_s)$ is an $n$-qubit $m$-ancilla CNOT circuit.
\item Output: The minimum integer $w\geq0$ such that there exist gates $h_1,\ldots,h_w\in\Gscr_{n+m}$ and
$$
\Ccal(g_1,\ldots,g_s)\cong\Ccal(g_1,\ldots,g_{L-1},h_1,\ldots,h_w,g_{R+1},\ldots,g_s).
$$
\end{itemize}
\end{problem}

\begin{theorem}\label{thm:localnph}
For $\LSM$, it is $\NP$-hard to approximate the solution within any constant factor.
\end{theorem}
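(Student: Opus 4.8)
The plan is to reduce from $\rBSC$ and invoke \thm{thm:setcovernph}, following the blueprint of the proof of \thm{thm:globalnph}; the role played there by the topological constraint set $S$ will here be played by a carefully chosen fixed prefix $g_1,\ldots,g_{L-1}$ and fixed suffix $g_{R+1},\ldots,g_s$ that enclose the editable window $g_L,\ldots,g_R$. Given an $\rBSC$ instance $(r,p,q,W_1,\ldots,W_q)$, I would use $n$ real qubits --- $p$ ``universe'' qubits $u_1,\ldots,u_p$ carrying the inputs $x_1,\ldots,x_p$ plus a single ``target'' qubit $z$ --- together with $m=q$ ancilla qubits $a_1,\ldots,a_q$, one per set. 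The fixed prefix would (i) load onto each $a_j$ the partial parity $\bigoplus_{i\in W_j}x_i$ using the $\sum_j|W_j|\le rq$ CNOTs that add $u_i$ into $a_j$ for every $i\in W_j$, and (ii) apply an invertible linear ``scrambling'' of the universe register, chosen so that inside the window the individual $x_i$'s --- and hence the global parity $s_{\bm x}:=x_1\oplus\cdots\oplus x_p$ read off $u_1,\ldots,u_p$ directly --- are expensive to recover, while the already aggregated values sitting on the ancillae stay cheap to use. The fixed suffix would undo the scrambling and un-load the ancillae, so that $\Ccal(g_1,\ldots,g_s)$ always computes one fixed map --- identity on the universe and ancillae, $z\mapsto z\oplus s_{\bm x}$ on the target --- regardless of the window's contents; $g_L,\ldots,g_R$ itself is instantiated with a naive in-context implementation of that map so that nontrivial shortening is meaningful.

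With the instance fixed I would prove two matching estimates on the LSM optimum $w^\star$, packaged as a claim analogous to \clm{clm:globalclm} whose proof is deferred to an appendix. For the upper bound, a set cover $V$ of size $k$ should yield a replacement of size $w=|V|$ up to a lower-order overhead: combine the values held by the ancillae $a_j$ for $j\in V$ onto $z$, repair the parity contributed by the universe elements covered an even number of times, and restore. For the lower bound, given any valid replacement $h_1,\ldots,h_w$, the scrambling should force the influence of all of $x_1,\ldots,x_p$ on $z$ to flow essentially only through the ancillae that $h_1,\ldots,h_w$ touch, so that their index set $V$ must satisfy $\bigcup_{j\in V}W_j=[p]$, and a counting/flow argument should give $w\ge|V|$ up to the same order. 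Matching the two shows $w^\star$ equals $\mathrm{OPT}_{\rBSC}$ up to lower-order terms, a gap-preserving reduction; since such a reduction transports the ``any constant factor'' inapproximability of \thm{thm:setcovernph}, $\LSM$ is $\NP$-hard to approximate within any constant factor unless $\P=\NP$.

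The crux, and the main obstacle, is designing the scrambling and pushing through the lower bound, i.e.\ recovering the ``no shortcut'' rigidity that \thm{thm:globalnph} obtained for free from the topology $S$. One must arrange, simultaneously, that the composite circuit computes a fixed map independent of the window; that the requirement that the \emph{whole} edited circuit stay a valid $n$-qubit $m$-ancilla CNOT circuit (all ancillae restored) genuinely constrains the window; and --- most delicately --- that no cheap replacement can sidestep the ancillae, ruling out routing the $x_i$'s directly to $z$, borrowing a spare qubit as a single scratch on which to accumulate $s_{\bm x}$ once and then output it, or reconstructing individual $x_i$'s cheaply to repair parities. Because a gate can no longer be forbidden outright --- only made costly --- closing this gap cleanly (in particular handling covers that are not exact, so that the loaded values $\bigoplus_{i\in W_j}x_i$ do not already sum to $s_{\bm x}$) is substantially more involved than in the $\GCM$ setting, and is where essentially all of the work lies; it may require duplicating the universe or introducing further dedicated ``busy'' qubits so that no single wire can serve as a universal scratchpad.
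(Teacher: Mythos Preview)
Your framework is the right one---reduce from $\rBSC$, load partial parities onto ancillae in a fixed prefix, undo them in a fixed suffix, and argue the window must realize $z\mapsto z\oplus s_{\bm x}$---but the proposal is missing the one idea that makes the reduction tight, and the ``scrambling'' you introduce to compensate creates a contradiction you do not resolve.

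Concretely, with one ancilla per set carrying $\bigoplus_{i\in W_j}x_i$, a size-$k$ cover $V$ does \emph{not} give a size-$k$ replacement: XORing those $k$ ancillae onto $z$ yields $\bigoplus_i(\text{mult}_V(i)\bmod 2)\,x_i$, not $s_{\bm x}$, unless the cover is exact. You propose to ``repair'' the even-multiplicity elements, but those are individual $x_i$'s, which your own scrambling was designed to make expensive to access---so either the scrambling is weak enough that the lower bound leaks (one can rebuild $s_{\bm x}$ cheaply from the universe), or it is strong enough that your upper bound blows up. You flag this as ``where essentially all of the work lies,'' but there is no indication of how to close it, and I do not see a way to do so along these lines.

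The paper sidesteps both issues with a single move: instead of one ancilla per set, create one ancilla $y_{j,U}$ for \emph{every subset} $U\subseteq W_j$, preloaded with $\bigoplus_{i\in U}x_i$ (polynomial size since $|W_j|\le r$ is constant). Then any cover of size $k$ yields the partition $T_j=W_j\setminus\bigcup_{j'<j}W_{j'}$, and the $k$ gates $(y_{j,T_j}\!\to z)$ already compute $s_{\bm x}$ exactly---no repair, no scrambling. The lower bound needs no scrambling either: any $w$-gate CNOT circuit has at most $w+1$ nonzeros in each row of its matrix (apply the $w$ elementary matrices to the row vector $e_z^\top$; each multiplication adds at most one nonzero), so the $z$-row of the replacement is supported on $z$ and at most $w$ other qubits; mapping each such qubit (universe or ancilla) to a set containing it forces a cover, hence $w\ge k$. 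The ``all subsets'' trick and this support-counting lemma are the two missing ingredients.
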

\begin{proof}
Given input $r,p,q,W_1,\ldots,W_q$ of $\rBSC$, we construct the input for $\LSM$ as follows:
\begin{itemize}
\item Define $a_i=|W_i|\leq r$ for $i\in[q]$; and let $b=\sum_{i=1}^q2^{a_i}\le q2^r,c=\sum_{i=1}^q\sum_{j=0}^{a_i}j\cdot\binom{a_i}j\le qr2^r$.
\item Set $n=p+1,m=b$ and $L=c+1,R=c+p,s=2c+p$.
\item Name the $m$ ancillae by index $\text{ind}(i,U)\in[m]$ for each $i\in[q],U\subseteq W_i$.
\item The first $c$ gates: For $i\in[q],U\subseteq W_i$ and $j\in U$, construct gate $(j,n+\text{ind}(i,U))$.
\item The middle $p$ gates: For $i\in[p]$, construct gate $(i,n)$.
\item The final $c$ gates: For $i\in[q],U\subseteq W_i$ and $j\in U$, construct gate $(j,n+\text{ind}(i,U))$.
\end{itemize}
Intuitively if we denote the input as $\ket{x_1\cdots x_p}\ket{x_{p+1}}\ket{0}^{\otimes m}$, then the first $c$ gates prepare $\bigoplus_{j\in U}x_j$ on the $\text{ind}(i,U)$-th ancilla, and the middle $p$ gates update $\ket{x_{p+1}}$ to $\ket{x_{p+1}\oplus x_1\oplus\cdots\oplus x_p}$, and the final $c$ gates restore all the ancillae. 

Though the output of the circuit is simply $\ket{x_1\cdots x_p}\ket{x_{p+1}\oplus x_1\oplus\cdots\oplus x_p}\ket{0}^{\otimes m}$ where the first and final gates are redundant, the hardness of the problem comes from how to optimize the middle $p$ gates using the ancillae when we fix the first and final gates.

Let $k$ be the solution to the $\rBSC$ instance. Let $w$ be the solution to the $\LSM$ instance. Then the theorem follows immediately from \Cref{thm:setcovernph} and \Cref{clm:localclm} here.
\begin{claim}\label{clm:localclm}
$k=w$.\qedhere
\end{claim}
\end{proof}

\begin{proof}[Proof of \Cref{clm:localclm}]
We first show $w\leq k$.
Assume without loss of generality $\bigcup_{i=1}^kW_i=[p]$ and define
$$
T_i=W_i\backslash\pbra{\bigcup_{j=1}^{i-1}W_j}=\cbra{t_1^i,\ldots,t_{|T_i|}^i}.
$$
Since the $\text{ind}(i,T_i)$-th ancilla is already the parity of $x$ on indices $T_i$, we simply construct $h_i=(n+\text{ind}(i,T_i),n)$ for each $i\in[k]$. This also only updates $\ket{x_{p+1}}$ to $\ket{x_{p+1}\oplus x_1\oplus\cdots\oplus x_p}$ as the previous middle $p$ gates.

We now show $w\geq k$.
We view $\Ccal(h_1,\ldots,h_w)$ as a matrix $\Mmat$ in $\GL(n+m,2)$.
Let $\Ical$ be the set of ones in $\Mmat[n,*]$ (i.e., the $n$-th row of $\Mmat$). Then after $\Ccal(g_1,\ldots,g_{L-1},h_1,\ldots,h_w)$, $\ket{x_{p+1}}$ becomes
$$
\ket{\phi}=\ket{\bigoplus_{i\in\Ical,i\in[p+1]}x_i\oplus\bigoplus_{n+\text{ind}(i,U)\in\Ical}\bigoplus_{j\in U}x_j}.
$$
Since the final $c$ gates do not interact with the $n$-th qubit but $\ket{x_{p+1}}$ needs to be $\ket{x_{p+1}\oplus x_1\oplus\cdots\oplus x_p}$ in the end, we must have $\ket{\phi}=\ket{x_{p+1}\oplus x_1\oplus\cdots\oplus x_p}$.

Observe that $x_{p+1}$ is not involved in the first $c$ gates as well, thus $p+1\in\Ical$.
Meanwhile, since each CNOT gate is equivalent to a row elimination, any row in $\Mmat$ has at most $w+1$ ones; in particular, $|\Ical|\leq w+1$.
Now let $\Ecal=\Ical\backslash\cbra{p+1}=\cbra{e_1,\ldots,e_{w'}}$ where $w'=|\Ical|-1\le w$. For any $i\in[w]$,
\begin{itemize}
\item if $e_i\in[p]$, then choose any arbitrary $j\in[q]$ such that $e_i\in W_j$, and set $v_i=j$;
\item otherwise $e_i=n+\text{ind}(j,U)$ for some $j\in[q],U\subseteq W_j$, then set $v_i=j$.
\end{itemize}
Since $\ket{\phi}=\ket{x_{p+1}\oplus x_1\oplus\cdots\oplus x_p}$, we know
$$
[p+1]\subseteq\pbra{\Ical\cap[p+1]}\cup\bigcup_{n+\text{ind}(i,U)\in\Ical}U\subseteq[p+1].
$$
Therefore
$$
[p]
\subseteq\pbra{\Ical\cap[p]}\cup\bigcup_{n+\text{ind}(i,U)\in\Ical}U
\subseteq\pbra{\bigcup_{i\in\Ical\cap[p]}W_{v_i}}\cup\pbra{\bigcup_{n+\text{ind}(i,U)\in\Ical}W_i}
\subseteq[p],
$$
which means $W_{v_1},\ldots,W_{v_{w'}}$ form a cover of $[p]$ and thus $w\ge w'\ge k$ by the definition of $k$.
\end{proof}

\section{Conclusion and Open Questions}\label{sec:conclusion}

In this paper, we present efficient constructions for parallelizing CNOT circuits with or without ancillae, and provide information-theoretical lower bounds demonstrating that the space-depth trade-off of our constructions is asymptotically tight. 
In addition, we present strong evidences that CNOT-circuit optimization, both for size and depth, could be intractable:
\begin{itemize}
\item We prove that constructing equivalent CNOT-circuit with minimum depth (or size) under specified topologically-constrained qubit structure is $\NP$-hard to approximate within any constant factor. 
\item We show that the size-minimization of sub-circuits of CNOT circuits resists any efficient constant-ratio approximation as well.
\end{itemize}

While our algorithms are asymptotically optimal for parallel CNOT-circuit synthesis, many fundamental questions pertinent to theoretical/practical quantum logical synthesis remain open:

\vspace{0.1in}
\noindent{\bf Depth-Size Optimization}: Given the matrix representation of a \emph{specific} CNOT circuit, decide whether it can be
\begin{itemize}
\item parallelized to depth $d$, or
\item reduced to size $s$.
\end{itemize}
Are those problems $\NP$-complete without topological constraints of qubits, or with explicit constraints like 1D/2D grid?

\vspace{0.1in}
\noindent{\bf Characterization of Low-Depth Circuit}: Given specific structures of CNOT circuits, including being a CNOT tree as in \Cref{sec:noancilla}, can it be parallelized to $O(\log n)$ depth? More generally, can we characterize the class of all $O(\log n)$-depth or poly-logarithmic-depth CNOT circuits?

\vspace{0.1in}
\noindent{\bf CNOT Synthesis with Topological Constraints}: When there are topological constraints among qubits, is there any efficient algorithm to parallelize CNOT circuits? 

After the publication of this work at SODA'20, Wu \emph{et al.} \cite{wu2019optimization} designed an algorithm to optimize any given $n$-qubit CNOT circuit to $O(n^2/\log \delta)$ size when the topological constraint graph has minimum degree $\delta$.
For $d$-dimensional grid topological structure, they further optimized the CNOT circuit to $O(d\cdot n^{2/d})$ depth with fewer than $2n^2$ ancillae.

Can the algorithms in the recent work~\cite{nash2020quantum,wu2019optimization} be further improved?

\vspace{0.1in}
\noindent{\bf Practical Concerns}: 
To be more relevant to real-world applications, we give a simpler algorithm (see \Cref{prop:CNOT4nAdd6} in \Cref{app:simple}) for parallelizing any $n$-qubit CNOT circuit to depth $\leq3(n+1)$ without ancillae.
This construction---though not asymptotically optimal---could be useful for small-size circuits, because it does not carry a large overhead hidden in the big-O notation.

We remark that recently, De \emph{et al.}~\cite{de2021reducing} offered a construction of depth $\le4n/3 + 8\ceilbra{\log n}$, and Maslov \emph{et al.}~\cite{maslov2022depth} further improved it to $\floorbra{n + 1.9496\log^2 n + 3.5075\log n - 23.4269}$. 

Can we achieve asymptotic-optimality with a much simpler algorithm? Can our algorithms be realized as true parallel algorithms without the time-consuming pre-processing phase?

\section*{Acknowledgements}
We thank Hengjie Zhang and Yufan Zheng for valuable comments on the approximation of CNOT circuits.
We also thank Yuqing Kong and Shachar Lovett for inspiring ideas about the algorithms.

Shang-Hua Teng's research is supported by in part by the Simons Investigator Award for fundamental \& curiosity-driven research and NSF grant CCF1815254. Xiaoming Sun, Bujiao Wu and Jialin Zhang's research is supported in part by the Strategic Priority Research Program of Chinese Academy of Sciences Grant No. XDB28000000, and the National Natural Science Foundation of China Grants No. 61832003, 61872334. 

\bibliographystyle{alphaurl}
\bibliography{ref}

\appendix

\section{A simpler parallelizing algorithm}\label{app:simple}

\begin{proposition}\label{prop:CNOT4nAdd6}
There is an $O(n^3)$-time algorithm such that
given any $\Mmat \in \GL(n, 2)$,
it outputs parallel row-elimination matrices $\Rmat_1, \ldots, \Rmat_d$ where $d\leq3(n+1)$ such that $\Rmat_d\cdots\Rmat_2\Rmat_{1}\Mmat = \Imat$.
\end{proposition}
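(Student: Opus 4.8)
The plan is to run a pipelined (``systolic'') Gaussian elimination: a forward sweep turning $\Mmat$ into a unit upper-triangular matrix, then a backward sweep turning that into $\Imat$, after fixing a permutation up front so that no row swaps or scalings are ever needed. First I would compute, by LU-decomposition with partial pivoting (Bunch and Hopcroft~\cite{LUdecom}), a permutation matrix $\Pmat$ with $\Pmat\Mmat=\Lmat\Umat$; equivalently $\Pmat\Mmat$ has all leading principal submatrices invertible, so textbook Gaussian elimination on $\Pmat\Mmat$ never needs a row exchange and --- since over $\Fbb_2$ the only nonzero scalar is $1$ --- never needs a scaling. By Moore and Nilsson's fact~\cite{moore2001parallel} (the same one used in \lem{lem:noancillaMatrix}), $\Pmat$ is a product of at most $6$ parallel row-eliminations, so after applying those we are reduced to parallelizing the pivot-free elimination of $\Pmat\Mmat$, whose diagonal entries stay $1$ throughout.

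For the forward sweep, at step $\tau$ the $\tau$-th parallel row-elimination performs, simultaneously for every pair $(k,r)$ with $1\le k<r\le n$ and $k+r=\tau+1$, the operation ``add the current row $k$ to row $r$'' (when the relevant entry is $1$). The indices used are the $k$'s, all in $\{1,\ldots,\lfloor\tau/2\rfloor\}$, and the $r$'s, all in $\{\lfloor\tau/2\rfloor+1,\ldots,n\}$; these two sets are disjoint, so each step is a legal parallel row-elimination matrix. A one-line induction shows this schedule respects the dependency order of ordinary Gaussian elimination --- pivot row $k$ has columns $1,\ldots,k-1$ cleared from it by step $2k-2$, before it is first used (at step $\ge 2k$), and each row $r$ has its sub-diagonal entries cleared column-by-column, one per step --- so after $2n-3$ layers $\Pmat\Mmat$ has become unit upper-triangular. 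The backward sweep is the mirror image: clear columns $n,n-1,\ldots,1$, batching the operations ``add row $c$ to row $r$'' (for $r<c$) along anti-diagonals $c+r=\mathrm{const}$ in decreasing order of $c+r$; the same index-range argument gives disjointness, and now each pivot row $c$ already equals $\bm e_c^\top$ when it is used, so it only touches column $c$. This also takes $2n-3$ layers and leaves $\Imat$. Altogether $d\le 6+(2n-3)+(2n-3)=4n$, and since there are $O(n)$ layers, each computed and applied in $O(n^2)$ time (the LU step costing $O(n^3)$), the algorithm runs in $O(n^3)$.

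The genuinely routine parts are the correctness of the two sweeps (they execute exactly the elementary operations of Gaussian elimination, merely reordered consistently with its dependencies) and the disjointness of each anti-diagonal batch (the index count). I expect the main obstacle to be the exact pipeline bookkeeping: pinning down precisely which operations are ``active'' at each step, verifying no pivot is used before it is reduced and no target is revisited after it has ``moved on,'' and checking that both sweeps close at $2n-3$ layers rather than $2n$ --- details needed to land the bound at $\le 4n$ rather than merely $O(n)$. (One should also record that this is not asymptotically optimal: \thm{thm:lowerbound} forces depth $\Omega(n/\log n)$, matched by the construction of \sct{sec:noancilla}; the value of \prop{prop:CNOT4nAdd6} is the small explicit constant obtained by an algorithm free of randomization, derandomization, and additive-basis machinery.)
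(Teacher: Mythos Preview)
Your proposal is correct and shares the paper's overall structure: factor $\Mmat=\Pmat\Lmat\Umat$, spend $6$ parallel layers on the permutation (via Moore--Nilsson), and $2n-3$ layers on each triangular factor, landing exactly at $6+(2n-3)+(2n-3)=4n$. Where you diverge is in the triangular step itself. The paper's \alg{alg:cnot4n} works on $\Lmat$ sub-diagonal by sub-diagonal, using only \emph{adjacent}-row additions $\Rsfmat(i{+}k{-}1,i{+}k)$ and splitting each sub-diagonal into two parity classes so that the indices in each layer stay disjoint. You instead run ordinary Gaussian elimination on $\Pmat\Mmat$ directly, pipelined as a systolic wavefront along anti-diagonals $k+r=\tau+1$, always adding the genuine pivot row $k$ to its target $r$. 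Both schedules cost $2n-3$ layers per triangular factor; your anti-diagonal formulation makes the disjointness check ($k\le\lfloor\tau/2\rfloor<r$) and the dependency argument (pivot row $k$ is finalized by step $2k-2$ and first used at step $2k$) especially transparent, while the paper's scheme has the feature that every elementary operation touches consecutive rows.
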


\begin{proof}
Similar with the proof of \Cref{lem:noancillaMatrix}, we first decompose $\Mmat = \Pmat\Lmat\Umat$,
where $\Lmat, \Umat$ are lower and upper triangular, and $\Pmat$ is a permutation matrix.
Since $\Pmat$ can be decomposed into $6$ parallel row-elimination matrices \cite{moore2001parallel},
it suffices to prove $\Lmat$ can be decomposed into at most $1.5(n-1)$ parallel row-elimination matrices, which is demonstrated in \Cref{alg:cnot4n}.

The intuition for \Cref{alg:cnot4n} is to use the main diagonal to eliminate each sub-diagonal with at most $2$ parallel row-eliminations. Take the third diagonal as an example: if $\Lmat[3,1]=\Lmat[4,2]=\Lmat[5,3]=\Lmat[6,4]=\Lmat[7,5]=1$, then we will add the CNOT gates $(1,3),(2,4),(3,5),(4,6),(5,7)$ to remove all ones on the third diagonal. This can be arranged as $2$ layers: $(1,3),(2,4),(5,7)$ and $(3,5),(4,6)$. Note that different arrangement may behave differently for the entries below the third diagonal, but they all serve the same purpose of cleaning the third diagonal while not interfering the entries above it. The general treatment can be found in \Cref{alg:cnot4n} below.

\begin{algorithm}[H]
\caption{Parallel row-elimination for $\Lmat$}\label{alg:cnot4n}
\For{$k\gets 1$ \KwTo $n-1$}{
	$\Tmat_0,\Tmat_1\gets\Imat$\;
	$s\gets k, b\gets0$\;
	\For{$i\gets1$ \KwTo $n-k$}{
		\lIf{$\Lmat[i + k,i] = 1$}{$\Tmat_b\gets\Rsfmat(i,i+k)\cdot\Tmat_1$}
		$s\gets s-1$\;
		\lIf{$s=0$}{$s\gets k,b\gets1-b$}
	}
	\lIf{$\Tmat_0\neq\Imat$}{\textbf{output} $\Tmat_0$}
	\lIf{$\Tmat_1\neq\Imat$}{\textbf{output} $\Tmat_1$}
	$\Lmat\gets\Tmat_1\Tmat_0\Lmat$\tcc*{Can be computed in time $O(n^2)$}
}
\end{algorithm}

Observe that in \Cref{alg:cnot4n}, when $n-k\le k$ we will never use $\Tmat_1$. Thus the total depth is bounded by
$$
\sum_{k<n/2}2+\sum_{n/2\le k\le n-1}1=
2\cdot\floorbra{\frac{n-1}2}+1\cdot\pbra{n-1-\floorbra{\frac{n-1}2}}\le\frac{3(n-1)}2
$$
as desired.
\end{proof}

\end{document}